   \newcommand\yay{\ding{51}}
   \newcommand\nay{\ding{55}}
   \newcommand\yay{\scalebox{.85}{\Checkmark}}
   \newcommand\nay{\scalebox{.85}{\XSolidBrush}}
\begin{document}

\conferenceinfo{ICFP~'14}{September 1--6, 2014, Gothenburg, Sweden}
\copyrightyear{2014}
\copyrightdata{978-1-4503-2873-9/14/09}
\doi{2628136.2628156}
\exclusivelicense

\title{Soft Contract Verification}
\ifxetex
\authorinfo{Phúc C. Nguy\~{\^e}n}
           {University of Maryland}
           {pcn@cs.umd.edu}
\else
\authorinfo{Ph\'uc C. Nguy\~{\^e}n}
           {University of Maryland}
           {pcn@cs.umd.edu}
\fi
\authorinfo{Sam Tobin-Hochstadt}
           {Indiana University}
           {samth@cs.indiana.edu}
\authorinfo{David Van Horn}
           {University of Maryland}
           {dvanhorn@cs.umd.edu}

\maketitle 



\setlength{\bibsep}{1pt}

\newtheorem{theorem}{Theorem}
\newtheorem{lemma}{Lemma}
\newtheorem{definition}{Definition}

\definecolor{gray}{rgb}{0.9,0.9,0.9}
\definecolor{red}{rgb}{1,0,0}
\newcommand{\graybox}[1]{\mbox{\setlength{\fboxsep}{0.5pt}%
    \colorbox{gray}{$#1$}}}

\newcommand{\fixme}[1][\relax]{{\color{red}{FIXME: #1}}}

\newcommand{\ma}[1]{\ensuremath{#1}\xspace}

\newcommand\fakeparagraph[1]{\vspace{-1em}\subsubsection*{#1}}
\newcommand\eval{\mathit{eval}}
\newcommand\aval{\smash[l]{\widehat{\mathit{aval}}}}

\newcommand\refines{\sqsubseteq}
\newcommand\abstracts{\sqsupseteq}

\newcommand\asblm[3]{\ablm{#1}{#2}{#3}{\slambda}{#3}}

\newcommand\toplevel{\ensuremath{\dagger}}
\newcommand\allocname{\ensuremath{\mathsf{alloc}}}

\renewcommand{\vec}[1]{\overrightarrow{#1}}

\newcommand\pow{\ensuremath{\mathcal{P}}}

\newcommand\pluseq{+\!\!=}

\newcommand\ie{\emph{i.e.}}
\newcommand\rstep{{\ensuremath{\mathbf{r}}}}
\newcommand\betav{{\ensuremath{\mathbf{v}}}}
\newcommand\avstep{{\ensuremath{\widehat{\mathbf{v}}}}}
\newcommand\uvstep{{\ensuremath{\avstep\vstep}}}
\newcommand\uvmstep{{\ensuremath{\avstep\vstep{\Delta_M}}}}
\newcommand\betaext{{\ensuremath{\beta_{ext}}}}
\newcommand\abetav{{\ensuremath{\widehat{\mathbf{v}}}}}

\newcommand\opaque{\bullet}
\newcommand\topaque{\bullet^\mtyp}

\newcommand{\dotcup}{\mathrel{\ensuremath{\mathaccent\cdot\cup}}}

\newcommand\with[2]{{#1}/{#2}}
\newcommand\remcon[2]{\with{#1}{#2}}
\newcommand\demonic[2]{\textsc{demonic}(#1,#2)}
\newcommand{\closed}[1]{\dot{#1}}

\newcommand\sclos[2]{\langle{#1},{#2}\rangle}

\newcommand\mans{\ensuremath{a}}
\newcommand\mtyp{\ensuremath{T}}
\newcommand\mbasetyp{\ensuremath{B}}
\newcommand\menv{\ensuremath{\rho}}
\newcommand\menvo{\ensuremath{\varrho}}
\newcommand\mprg{\ma{p}}
\newcommand\mprgo{\ensuremath{q}}
\newcommand\mexp{\ensuremath{e}}
\newcommand\maexp{\ensuremath{f}} 
\newcommand\mexpo{\ensuremath{e'}}
\newcommand\mexpoo{\ensuremath{e''}}
\newcommand\mvexp{{\ensuremath{\vec{\mexp}}}}
\newcommand\mvexpo{{\ensuremath{\vec{\mexpo}}}}
\newcommand\mvmod{{\ensuremath{\vec{\mmod}}}}
\newcommand\mvmodo{{\ensuremath{\vec{\mmodo}}}}
\newcommand\mvmodvar{{\ensuremath{\vec{\mmodvar}}}}
\newcommand\mmod{\ensuremath{m}}
\newcommand\mmodo{\ensuremath{m'}}
\newcommand\mmodvar{\ensuremath{f}}
\newcommand\mmodvaro{\ensuremath{g}}
\newcommand\mmodvaroo{\ensuremath{h}}
\newcommand\mcon{\ensuremath{c}}
\newcommand\mcono{\ensuremath{d}}
\newcommand\mbase{\ensuremath{b}}
\newcommand\mpreval{\ensuremath{u}}
\newcommand\mprevalo{\ensuremath{u'}}
\newcommand\mval{\ensuremath{v}\xspace}
\newcommand\mvalo{\ensuremath{v'}} 
\newcommand\mvaloo{\ensuremath{v''}}
\newcommand\mvalset{\ensuremath{\vec\mval}}
\newcommand\maval{\ensuremath{v}} 
\newcommand\mvar{\ensuremath{x}}
\newcommand\mvaro{\ensuremath{y}}
\newcommand\mvaroo{\ensuremath{z}}
\newcommand\mnum{\ensuremath{n}}
\newcommand\mnumo{\ensuremath{n'}}
\newcommand\mlam{\ensuremath{l}}
\newcommand\mfun{\ensuremath{w}}
\newcommand\mkont{\ensuremath{\kappa}}
\newcommand\mclos{\ensuremath{D}}
\newcommand\msbl{\ensuremath{S}}
\newcommand\maddr{\ensuremath{a}}
\newcommand\maddro{\ensuremath{b}}
\newcommand\maddroo{\ensuremath{a'}}
\newcommand\maddrtc{\maddro}
\newcommand\mden{\ensuremath{d}}
\newcommand\msto{\ensuremath{\sigma}}
\newcommand\mstoo{\ensuremath{\sigma'}}
\newcommand\masto{\ensuremath{\hat\sigma}}
\newcommand\mblm{\ensuremath{B}}
\newcommand\mstate{\ensuremath{\varsigma}}
\newcommand\mastate{\ensuremath{\hat\varsigma}}
\newcommand\mop{o}
\newcommand\moppred{o_?}
\newcommand\mopone{o_1}
\newcommand\moptwo{o_2}
\newcommand\mlab{\ell}
\newcommand\mlabo{\ell'}
\newcommand\mlaboo{\ell''}
\newcommand\mctx{\ensuremath{\mathcal{E}}}
\newcommand\mctxo{\ensuremath{\mathcal{E'}}}
\newcommand\mvval{\ensuremath{\vec\mval}}
\newcommand\mPrg{\ma{P}}
\newcommand\mPrgo{\ensuremath{Q}}
\newcommand\mAns{\ensuremath{A}}
\newcommand\mAnso{\ensuremath{A'}}
\newcommand\mCon{\ensuremath{C}}
\newcommand\mCono{\ensuremath{D}}
\newcommand\mExp{\ensuremath{E}}
\newcommand\mExpo{\ensuremath{E'}}
\newcommand\mExpoo{\ensuremath{E''}}
\newcommand\mVal{\ensuremath{V}}
\newcommand\mConset{\ensuremath{\mathcal{C}}}
\newcommand\mPreval{\ensuremath{U}}
\newcommand\mLab{\ensuremath{a}} 
\newcommand\mLabo{\ensuremath{a'}}
\newcommand\mktab{\ensuremath{\Xi}}
\newcommand\mktabo{\ensuremath{\Xi'}}
\newcommand\mmtab{\ensuremath{M}}
\newcommand\mmtabo{\ensuremath{M'}}
\newcommand\mF{\ensuremath{F}}

\newcommand\proves[3]{{{#1}\vdash{#2}:{#3}\mbox{\,\yay}}}
\newcommand\refutes[3]{{#1}\vdash{#2}:{#3}\mbox{\,\nay}}
\newcommand\ambig[3]{{#1}\vdash{#2}:{#3}\,\mbox{\bf ?}} 
\newcommand\extProves[3]{{{#1}{\;\vdash_{S}\;}{#2}:{#3}\mbox{\,\yay}}}
\newcommand\extRefutes[3]{{{#1}{\;\vdash_{S}\;}{#2}:{#3}\mbox{\,\nay}}}
\newcommand\extAmbig[3]{{{#1}{\;\vdash_{S}\;}{#2}:{#3}\mbox{\bf ?}}}
\newcommand\tran[1]{\{\!\!\{#1\}\!\!\}_S}

\newcommand\absval{\with\opaque\mconset}

\newcommand\metapp[3]{\metappname({#1},{#2},{#3})}
\newcommand\metappname{\textsc{app}}

\newcommand\MSet[1]{\ensuremath{\mathsf{#1}}}
\newcommand\Mmodvar{\MSet{MVar}}
\newcommand\Mvar{\MSet{Var}}
\newcommand\Mval{\MSet{Val}}
\newcommand\Mcon{\MSet{Con}}
\newcommand\Mprg{\MSet{Prog}}
\newcommand\Mlam{\MSet{Lam}}
\newcommand\Mfun{\MSet{Wrap}}
\newcommand\Mblm{\MSet{Blame}}
\newcommand\Mexp{\MSet{Expr}}

\newcommand\confont[1]{\ensuremath{\mathsf{#1}}}

\newcommand\conarrow{\ensuremath\mbox{\tt →}}
\newcommand\slambda{{\tt λ}}
\newcommand\lparen{\mbox{\texttt{(}}}
\newcommand\rparen{\mbox{\texttt{)}}}

\newcommand\sbegin[2]{#1;\ #2}
\newcommand\sreclamnp[3]{\ensuremath{\slambda_{#2} #1.#3}}
\newcommand\slamnp[2]{\sreclamnp{#1}{\relax}{#2}}
\newcommand\smod[3]{\lparen\ensuremath{\syntax{module}\:#1\:#2\:#3}\rparen}
\newcommand\stlam[3]{\slamnp{#1\!:\!#2}{#3}}
\newcommand\slam[2]{\slamnp{#1}{#2}}
\newcommand\slamp[2]{\lparen\slam{#1}{#2}\rparen}
\newcommand\slamc[3]{\ensuremath{#2\,\conarrow\, \slam{#1}#3}}
\newcommand\sreclam[3]{(\sreclamnp{#1}{#2}{#3})}
\newcommand\strec[3]{\srec{{#1}\!:\!{#2}}{#3}}
\newcommand\srec[2]{\mu{#1}.{#2}}
\newcommand\sapp[2]{\ensuremath{#1\:#2}}
\newcommand\spapp[2]{\ensuremath{\lparen#1\:#2\rparen}}
\newcommand\sapptwo[3]{\ensuremath{(#1\:#2\:#3)}}
\newcommand\sif[3]{\mathsf{if}\:#1\:#2\:#3}
\newcommand\szerop[1]{\mathsf{zero?}(#1)}
\newcommand\sint{\ensuremath{\syntax{int}}}
\newcommand\sany{\ensuremath{\syntax{any}}}
\newcommand\sarr[2]{\ensuremath{#1\,\conarrow\,  #2}}
\newcommand\sdep[3]{\ensuremath{#1\,\conarrow\,\slambda#2.#3}}
\newcommand\stdep[4]{\sdep{#1}{#2\!:\!#3}{#4}}
\newcommand\spred[1]{\ensuremath{\liftpred{#1}}}
\newcommand\scons[2]{(\syntax{cons}\:#1\:#2)}
\newcommand\sconsop{\syntax{cons}}
\newcommand{\stabcons}[2]{(\syntax{cons}\:\begin{tabular}[t]{@{}l}{$#1$}\\{${#2})$}\end{tabular}}
\newcommand\szero{\syntax{zero?}}
\newcommand\ssucc{\syntax{add1}}
\newcommand\snatp{\syntax{nat?}}
\newcommand\snump{\syntax{num?}}
\newcommand\sintp{\syntax{int?}}
\newcommand\sequalp{\syntax{=}}
\newcommand\splus{\syntax{+}}
\newcommand\soptwo[3]{{#2}({#1},{#3})}
\newcommand\sopone[2]{{#1}({#2})}
\newcommand\sadd[2]{{#1}+{#2}}
\newcommand\smin[2]{{#1}-{#2}}
\newcommand\sfalse{\syntax{false}}
\newcommand\strue{\syntax{true}}
\newcommand\sboolp{\syntax{bool?}}
\newcommand\sempty{\syntax{empty}}
\newcommand\semptyp{\syntax{empty?}}
\newcommand\scar{\syntax{car}}
\newcommand\scdr{\syntax{cdr}}
\newcommand\sprocp{\syntax{proc?}}
\newcommand\sfalsep{\ensuremath{\mbox{\tt{false?}}}\xspace}
\newcommand\struep{\syntax{true?}}
\newcommand\sconsp{\syntax{cons?}}
\newcommand\sdepp{\syntax{dep?}}
\newcommand\sflatp{\syntax{flat?}}
\newcommand\srt[4]{\lparen\syntax{rt}_{\striple{#1}{#2}{#3}}\ #4\rparen}
\newcommand\sblur[4]{\lparen\syntax{blur}_{\striple{#1}{#2}{#3}}\ #4\rparen}
\newcommand\sapprox[2]{#1\ \oplus\ #2}
\newcommand\sref[1]{!#1}

\newcommand\sext[3]{#1[#2 \mapsto #3]}
\newcommand\sdom[1]{\ensuremath{\mathit{dom}}(#1)}
\newcommand\refine[3]{\ensuremath{\mathit{refine}}(#1, #2, #3)}

\newcommand\syntax[1]{\ensuremath{\mbox{\tt{#1}}}} 

\newcommand\sand[2]{\sif{#1}{#2}{\sfalse}}
\newcommand\sor[2]{\sif{#1}{\strue}{#2}}

\newcommand\sanyc{\ensuremath{\syntax{any/c}}}
\newcommand\snatc{\ensuremath{\syntax{nat/c}}}
\newcommand\sconsc[2]{\langle#1,\!#2\rangle}
\newcommand\sandc[2]{#1\wedge#2}
\newcommand\sorc[2]{#1\vee#2}
\newcommand\srecc[2]{\mu #1.#2}
\newcommand\sboolc{\syntax{bool/c}}
\newcommand\semptyc{\syntax{empty/c}}
\newcommand\spair[2]{\langle#1,\!#2\rangle}
\newcommand\striple[3]{\langle#1,#2,#3\rangle}
\newcommand\squadruple[4]{\langle#1,#2,#3,#4\rangle}

\newcommand\sopc[1]{\spred{\slam\mvar{(\sapp{#1}\mvar)}}}
\newcommand\snotopc[1]{\spred{\slam\mvar{\sapp\neg{(\sapp{#1}\mvar)}}}}

\newcommand\tvar{\ensuremath{A}}
\newcommand\tarr[2]{\ensuremath{#1}\rightarrow{#2}}

\newcommand\amod[2]{\lparen\ensuremath{\syntax{module}\:#1\:#2}\rparen}
\newcommand\achk[6]{\chk{#1}{#3}{#4}{#5}{#2}}
\newcommand\chk[5]{\ensuremath{\syntax{mon}^{#2,#3}_{#4}\mbox{\tt
      (}}#1, #5\mbox{\tt )}}
\newcommand\achksimple[2]{\ensuremath{\syntax{mon}\mbox{\tt
      (}#1,{#2}\mbox{\tt )}}}
\newcommand\afchksimple[2]{\ensuremath{\syntax{fmon}(#1,{#2})}}
\newcommand\ablm[5]{\ensuremath{\syntax{blame}^{#1}_{#2}}}
\newcommand\simpleblm[2]{\ablm{#1}{#2}\relax\relax\relax}

\newcommand\fc[3]{\ensuremath{\syntax{fc}_{#2}(#1, #3)}}
\newcommand\assume[2]{\ensuremath{\syntax{assume}\lparen{#1},
    #2\rparen}}
\newcommand\sArr[5]{\ensuremath{(#1 \syntax{\Rightarrow}^{#2,#3}_{#4} #5)}}
\newcommand\simpleArr[2]{\ensuremath{\syntax{Arr}(#1, #2)}}

\newcommand\sblm[2]{\ensuremath{\syntax{blame}^{#1}_{#2}}}

\newcommand\stypbool{\syntax{B}}
\newcommand\stypnum{\syntax{N}}
\newcommand\starr[2]{{#1}\rightarrow{#2}}
\newcommand\stcon[1]{\syntax{con}({#1})}
\newcommand\sflat[1]{\ensuremath{\syntax{flat}(#1)}}

\newcommand\svar[1]{\syntax{#1}}

\newcommand\depbless[7]{
  \ma{\slam{#2}{\achk{#3}{(\sapp{#7}{\achk{#1}{#2}{#5}{#4}{#6}{}})}{#4}{#5}{#6}{}}}}
\newcommand\bless[6]{
  \ensuremath{\chk{(#1\!\dashrightarrow\! #2)}{#3}{#4}{#5}{#6}}}

\newcommand\simpledepbless[4]{
  \ma{\slam{#2}{\achksimple{#3}{(\sapp{#4}{\achksimple{#1}{#2}})}}}}
\newcommand\simplebless[3]{
  \ma{\slam\mvar{\achksimple{#2}{(\sapp{#3}{\achksimple{#1}{\mvar}})}}}}

\newcommand\deltamap[5]{\delta({#1},{#2},{#3})\ni({#4},{#5})}
\newcommand\absdeltamap[3]{\absdelta({#1},{#2})\ni{#3}}

\newcommand\alloc[1]{\ensuremath{\mathsf{alloc}(#1)}}

\newcommand\subst[3]{\ensuremath{[#1/#2]#3}}
\newcommand\redrule[1]{\ensuremath{\mbox{\sc{#1}}}}
\newcommand\step{\rightarrow}
\newcommand\vstep{\mathbf{v}}       
\newcommand\cstep{\mathbf{c}}       
\newcommand\sstep{\mathbf{s}}       
\newcommand\acstep{{\widehat{\mathbf{c}}}} 
\newcommand\stdstep{\longmapsto}
\newcommand\multistdstep{\longmapsto\!\!\!\!\!\rightarrow}
\newcommand\mtenv{\ensuremath{\emptyset}}
\newcommand\initstore{[\maddr_0 \mapsto \kmt]}
\newcommand\unload{\ensuremath{\mathcal{U}}}

\newcommand\close[2][\rho]{\ensuremath{\langle{#2},{#1}\rangle}}
\newcommand\mstep{\ensuremath{\stdstep\quad}}
\newenvironment
    {machine}[2][\relax]
    {\begin{display*}[#1]{#2}{\textwidth}\[
\begin{tabular}{@{}>{$}p{2.1in}<{$}>{$}p{.2in}<{$}>{$}p{2.2in}<{$}>{$}p{1.8in}<{$}}
}
    {\end{tabular}\]\end{display*}}   

\newcommand\plaindelta{\cn\delta}
\newcommand\absdelta{\delta}

\newcommand\liftpred[1]{\sflat{#1}\relax}

\newcommand\projleft[1]{\pi_1{#1}}
\newcommand\projright[1]{\pi_2{#1}}
\newcommand\proj\pi

\newcommand\mcache{\mathcal{CV}}

\newcommand\vcons[2]{({#1},{#2})}

\newcommand\s[4]{\langle{#1},{#2},{#4},{#3}\rangle}
\newcommand\se[1]{\s{#1}\menv\msto\mcont}
\newcommand\sk[1]{\s\mval\menv\msto{#1}}

\newcommand\mcont\kappa
\newcommand\mconto\iota

\newcommand\cont[1]{\textsf{#1}}

\newcommand\ab[1]{\widehat{#1}}
\newcommand\cn[1]{\widetilde{#1}}

\newcommand\kchk[6]{\cont{chk}^{#3,#4}_{#5}({#1},{#2},{#6})}
\newcommand\kchkor[8]{\cont{chk-or}^{#5,#6}_{#7}({#1},{#2},{#3},{#4},{#8})}
\newcommand\kchkcons[8]{\cont{chk-cons}^{#3,#4}_{#5}({#1},{#2},{#6},{#7},{#8})}
\newcommand\kchkconso[8]{\cont{chk-cons}^{#5,#6}_{#7}({#1},{#2},{#3},{#4},{#8})}
\newcommand\kfn[4]{\cont{fn}^{#3}({#1},{#2},{#4})}
\newcommand\kap[4]{\cont{ar}^{#3}({#1},{#2},{#4})}
\newcommand\kif[4]{\cont{if}({#1},{#2},{#3},{#4})}
\newcommand\kopone[3]{\cont{op}^{#2}({#1},{#3})}
\newcommand\koptwo[5]{\cont{opr}^{#4}({#1},{#2},{#3},{#5})}
\newcommand\koptwol[5]{\cont{opl}^{#4}({#1},{#2},{#3},{#5})}
\newcommand\kdem[3]{\cont{begin}({#1},{#2},{#3})}
\newcommand\kmt{\cont{mt}}

\newcommand\mkaddr{k}
\newcommand\mkaddro{i}

\newcommand\lang{$\lambda_{\text{C}}$\xspace}
\newcommand\etal{\emph{et al.}}

\newcommand\leftcurly{\ensuremath{\{}}
\newcommand\rightcurly{\ensuremath{\}}}

\newcommand\mloc{\ensuremath{L}}
\newcommand\amb{\mathit{amb}}

\def\TirName{\textit}

\newif\iftechreport

\newcommand\techrep{the accompanying technical report\xspace}
\newcommand\techrepcite{\techrep~\cite{techreport}}

\newcommand\SCV{\textsf{SCV}\xspace}

\techreporttrue

\begin{abstract}
  Behavioral software contracts are a widely used mechanism for
  governing the flow of values between components.
  However, run-time monitoring and enforcement of contracts imposes
  significant overhead and delays discovery of faulty components to
  run-time.

  To overcome these issues, we present \emph{soft contract
    verification}, which aims to statically prove either complete or
  partial contract correctness of components, written in an untyped,
  higher-order language with first-class contracts.
  Our approach uses higher-order symbolic execution, leveraging
  contracts as a source of symbolic values including unknown
  behavioral values, and employs an updatable heap of contract
  invariants to reason about flow-sensitive facts.
  We prove the symbolic execution soundly approximates the dynamic
  semantics and that \emph{verified programs can't be blamed}.

  The approach is able to analyze first-class contracts, recursive
  data structures, unknown functions, and control-flow-sensitive
  refinements of values, which are all idiomatic in dynamic languages.  It
  makes effective use of an off-the-shelf solver to decide problems
  without heavy encodings.  The approach is competitive with a wide
  range of existing tools---including type systems, flow analyzers, 
  and model checkers---on their own benchmarks.
\end{abstract}

\category{D.2.4}{Software Engineering}{Software/Program Verification}
\category{D.3.1}{Programming Languages}{Formal Definitions and Theory}


\keywords
Higher-order contracts; symbolic execution

\section{Static verification for dynamic languages}

Contracts~\cite{dvanhorn:meyer-eiffel, dvanhorn:Findler2002Contracts} have become a prominent
mechanism for specifying and enforcing invariants in dynamic
languages~\cite{local:contracts-coffee,
  dvanhorn:Plosch1997Design,
  dvanhorn:Austin2011Virtual,
  dvanhorn:Strickland2012Chaperones,  
  local:clojure-contracts}.  
They offer the expressivity and flexibility of programming in a
dynamic language, while still giving strong guarantees about the
interaction of components.  However, there are two downsides: (1)
contract monitoring is expensive, often prohibitively so, which causes
programmers to write more lax specifications, compromising correctness
for efficiency; and (2) contract violations are found only at
run-time, which delays discovery of faulty components with the usual
negative engineering consequences.

Static verification of contracts would empower programmers to state
stronger properties, get immediate feedback on the correctness of
their software, and avoid worries about run-time enforcement cost
since, once verified, contracts could be removed.
%
%
All-or-nothing approaches to verification of typed functional programs
has seen significant advances in the recent work on static contract
checking~\cite{dvanhorn:Xu2009Static,dvanhorn:Xu2012Hybrid,dvanhorn:Vytiniotis2013HALO},
refinement type
checking~\cite{dvanhorn:Terauchi2010Dependent,dvanhorn:DBLP:conf/vmcai/ZhuJ13,dvanhorn:Vazou2013Abstract,local:vazou-icfp2014},
and model
checking~\cite{dvanhorn:Kobayashi2009Types,dvanhorn:Kobayashi2010Higherorder,dvanhorn:Kobayashi2011Predicate}.
However, the highly dynamic nature of untyped languages makes
verification more difficult.

Programs in dynamic languages are often written in idioms that thwart
even simple verification methods such as type inference.  Moreover,
contracts themselves are written within the host
language in the same idiomatic style.  This suggests
that moving beyond all-or-nothing approaches to verification is necessary.

In previous work~\cite{dvanhorn:TobinHochstadt2012Higherorder}, we
proposed an approach to \emph{soft contract verification}, which
enables piecemeal and modular verification of contracts.  This
approach augments a standard reduction semantics for a functional
language with contracts and modules by endowing it with a notion of
``unknown'' values refined by sets of contracts.
Verification is carried out by executing programs on abstract values.

To demonstrate the essence of the idea, consider the following
contrived, but illustrative example.  Let {\tt pos?} and {\tt neg?} be
predicates for positive and negative integers.  Contracts can be
arbitrary predicates, so these functions are also contracts.  Consider
the following contracted function (written in Lisp-like notation):
\begin{alltt}
   (f : pos? \(\conarrow\) neg?)       ; contract
   (define (f x) (* x -1)) ; function
\end{alltt}
We can verify this program by (symbolically) running it on an
``unknown'' input.  Checking the domain contract refines the input to
be an unknown satisfying the set of contracts $\{\mbox{\tt pos?}\}$.
By embedding some basic facts about {\tt pos?}, {\tt neg?}, and {\tt
  -1} into the reduction relation for {\tt *}, we conclude $\mbox{\tt
  (* \leftcurly pos?\rightcurly\ -1)} \stdstep \mbox{\leftcurly{\tt
    neg?}\rightcurly}$, and voil\`a, we've shown once and for all {\tt
  f} meets its contract obligations and cannot be blamed.  We could
therefore soundly eliminate any contract which blames {\tt f}, in this
case {\tt neg?}.

This approach is simple and effective for many programs, but suffers
from several  shortcomings, which we solve in this paper:


\paragraph{Solver-aided reasoning:} 
While embedding symbolic arithmetic knowledge for specific, known
contracts works for simple examples, it fails to reason about
arithmetic generally.
Contracts often fail to verify because equivalent
formulations of contracts are not hard-coded in the semantics of
primitives.
Many systems address this issue by incorporating an SMT
solver.
However, for a higher-order language, solver integration is often achieved by
reasoning in a theory of uninterpreted functions or semantic
embeddings~\cite{dvanhorn:Knowles2010Hybrid,dvanhorn:Rondon2008Liquid,dvanhorn:Vytiniotis2013HALO}.

In this paper, we observe that higher-order contracts can be
effectively verified using only a simple first-order solver.  The key
insight is that contracts delay higher-order checks and failures
always occur with a first order witness.  By relying on a (symbolic)
semantic approach to carry out higher-order contract monitoring, we
can use an SMT solver to reason about integers without the
need for sophisticated encodings.  (Examples in
\S\ref{sec:smt-example}.)


\paragraph{Flow sensitive reasoning:}
Just as our semantic approach decomposes higher-order
contracts into first-order properties, first-order contracts naturally
decompose into conditionals.  Our prior approach fails
to reason effectively about conditionals, requiring
contract checks to be built-in to the semantics. As a result, even
simple programs with conditionals fail to verify:
\begin{alltt}
   (g : int? \(\conarrow\) neg?)
   (define (g x) (if (pos? x) (f x) (f 8)))
\end{alltt}
This is because the true-branch call to {\tt f} is {\tt (f \leftcurly
  int?\rightcurly)} by substitution, although we know from the guard
that {\tt x} satisfies {\tt pos?}.  

In this paper, we observe that flow-sensitivity can be achieved by
replacing substitution with \emph{heap-allocated} abstract
values. These heap addresses are then refined
as they flow through predicates and primitive operations, with no need
for special handling of contracts (\S\ref{sec:flow-sensitive-example}).
As a result, the system is not only effective for contract verification,
but can also handle safety verification for programs with no contracts
at all.




\paragraph{First-class contracts:}
Pragmatic contract systems enable first-class contracts so new
combinators can be written as functions that consume and produce
contracts.
But to the best of our knowledge, no verification system currently
supports first class contracts (or refinements), and in most
approaches it appears fundamentally difficult to incorporate such a
notion.  

Because we handle contracts (and all other features) by
\emph{execution}, first-class contracts 
pose no significant technical challenge and our system reasons about
them effectively (\S\ref{sec:putting-it-all-together-example}).

\paragraph{Converging for non-tail recursion:}
Of course, simply executing programs has a fundamental drawback---it
will fail to terminate in many cases, and when the inputs are unknown,
execution will almost always diverge.  Our prior work used a simple
loop detection algorithm that handled only tail-recursive functions.
As a result, even simple programs operating over inductive data timed
out.

In this paper, we accelerate the convergence of programs by
identifying and approximating regular accumulation of evaluation
contexts, causing common recursive programs to converge on unknown
values, while providing precise predictions
(\S\ref{sec:approxctxs}). As with the rest of our approach, this
happens during execution and is therefore robust to complex,
higher-order control flow.
\begin{center}
\vspace*{-2mm}
\rule{40pt}{1pt}
\vspace*{-1mm}
\end{center}
Combining these techniques yields a system competitive with a diverse
range of existing powerful static checkers, achieving many
of their strengths in concert, while   balancing
the benefits of static contract verification with the
flexibility of dynamic enforcement.

We have built a prototype soft verification engine, which we dub \SCV, based on these
ideas and used it to evaluate the approach (\S\ref{sec:impl}).  Our
evaluation demonstrates that the approach can verify properties typically
reserved for approaches that rely on an underlying type system, while
simultaneously accommodating the dynamism and idioms of untyped
programming languages.
We take examples from work on soft
typing~\cite{dvanhorn:Cartwright1991Soft,dvanhorn:Wright1997Practical},
type systems for untyped
languages~\cite{dvanhorn:TobinHochstadt2010Logical}, static
contract checking~\cite{dvanhorn:Xu2009Static,dvanhorn:Xu2012Hybrid},
refinement type checking~\cite{dvanhorn:Terauchi2010Dependent}, and model checking of
higher-order functional
languages~\cite{dvanhorn:Kobayashi2009Types,dvanhorn:Kobayashi2010Higherorder,dvanhorn:Kobayashi2011Predicate}. 

\SCV can prove all contract checks redundant for almost all of the
examples taken from this broad array of existing program analysis and
type checking work, and can handle many of the tricky higher-order
verification problems demonstrated by other systems.
In other words, our approach is competitive with type systems, model
checkers, and soft typing systems on each of their chosen
benchmarks---in contrast, work on higher-order model checking does not
handle benchmarks aimed at soft typing or occurrence typing, and vice
versa.
In the cases where \SCV does not prove the complete absence of
contract errors, the vast majority of possible dynamic errors are
ruled out, justifying numerous potential optimizations. Over this
corpus of programs, 99\% of the contract and run-time type
checks are proved safe, and could be
eliminated.

We also evaluate the verification of three small interactive video
games which use first-class and dependent contracts pervasively.  The
results show the subsequent elimination of contract monitoring has a
dramatic effect: from a factor speed up of 7 in one case, to three
orders of magnitude in the others.  In essence, these results show the
games are infeasible without contract verification.


\section{Worked examples} \label{sec:main-idea}

We now present the main ideas of our approach through a 
series of examples taken from work on other verification
techniques, starting from the simplest and working up to a complex
object encoding.

\subsection{Higher-order symbolic reasoning}

Consider the following simple function that transforms functions on
even integers into functions on odd integers.  It has been ascribed
this specification as a contract, which can be monitored at run-time.
\begin{alltt}
   (e2o : (even? \(\conarrow\) even?) \(\conarrow\) (odd? \(\conarrow\) odd?))
   (define (e2o f)
     (λ (n) (- (f (+ n 1)) 1)))
\end{alltt}
A contract monitors the flow of values between components.  In this
case, the contract monitors the interaction between the context and
the {\tt e2o} function.  It is easy to confirm that {\tt e2o} is
correct with respect to the contract; {\tt e2o} holds up its end of
the agreement, and therefore cannot be blamed for any run-time
failures that may arise.  The informal reasoning goes like this: First
assume {\tt f} is an {\tt even? \(\conarrow\) even?} function.  When applied, we
must ensure the argument is even (otherwise {\tt e2o} is at fault),
but may assume the result is even (otherwise the context is at fault).
Next assume {\tt n} is odd (otherwise the context is at fault) and
ensure the result is odd (otherwise {\tt e2o} is at fault).  Since
{\tt (+ n 1)} is even when {\tt n} is odd, {\tt f} is applied to an
even argument, producing an even result.  Subtracting one therefore
gives an odd result, as desired.

This kind of reasoning mimics the step-by-step computation of {\tt
  e2o}, but rather than considering some particular inputs, it
considers these inputs symbolically to verify all possible executions
of {\tt e2o}.  We systematize this kind of reasoning by augmenting a
standard reduction semantics for contracts with symbolic values that
are refined by sets of contracts.  At first approximation, the
semantics includes reductions such as:
\begin{align*}
\mbox{\tt (+ \leftcurly odd?\rightcurly\ 1)} &\stdstep \mbox{\tt \leftcurly even?\rightcurly}\text{, and}\\
\mbox{\tt (\leftcurly even? \(\conarrow\) even?\rightcurly\ \leftcurly even?\rightcurly)} &\stdstep \mbox{\tt \leftcurly even?\rightcurly}\text.
\end{align*}

This kind of symbolic reasoning mimics a programmer's informal
intuitions which employ contracts to refine unknown values and to
verify components meet their specifications.  If a component cannot be
blamed in the symbolic semantics, we can safely conclude it cannot be
blamed in general.

\subsection{Flow sensitive reasoning}
\label{sec:flow-sensitive-example}

Programmers using untyped languages often use a mixture of
type-based and flow-based reasoning to design programs. 
The analysis naturally takes advantage of type tests idiomatic in
dynamic languages even when the tests are buried in complex
expressions.  The following function taken from work on occurrence
typing ~\cite{dvanhorn:TobinHochstadt2010Logical} can be proven safe
using our symbolic semantics:
\begin{alltt}
(f : (or/c int? str?) cons? \(\conarrow\) int?)
(define (f x p)
  (cond 
    [(and (int? x) (int? (car p))) (+ x (car p))]
    [(int? (car p)) (+ (str-len x) (car p))]
    [else 0]))
\end{alltt}

Here, {\tt int?}, {\tt str?}, and {\tt cons?} are type predicates for
integers, strings, and pairs, respectively.  The contract {\tt
  (or/c int? str?)} uses the {\tt or/c} contract combinator to
construct a contract specifying a value is either an integer or a
string.

A programmer would convince themselves this program was safe by using
the control dominating predicates to refine the types of {\tt x} and
{\tt (car p)} in each branch of the conditional.%
\footnote{The call to {\tt str-len} is safe because
{\tt (and (int? x) (int? (car p)))} being false and
{\tt (int? (car p))} being true implies that
{\tt (int? x)} is false, which in turns implies
{\tt x} is a string as enforced by {\tt f}'s contract.}
Our symbolic
semantics accommodates exactly this kind of reasoning in order to
verify this example.  However, there is a technical challenge here.  A
straightforward substitution-based semantics would not reflect the
flow-sensitive facts.  Focusing just on the first clause, a
substitution model would give:
\begin{alltt}
(cond 
  [(and (int? {\leftcurly}(or/c int? str?)\rightcurly) (int? (car {\leftcurly}cons?\rightcurly)))
   (+ {\leftcurly}(or/c int? str?){\rightcurly} (car {\leftcurly}cons?{\rightcurly}))] …)
\end{alltt}
At this point, it's too late to communicate the refinement of these
sets implied by the test evaluating to true, so the semantics would
report the contract on {\tt +} potentially being violated because the
first argument may be a string, and the second argument may be
anything.  We overcome this challenge by modelling symbolic values as
heap-allocated sets of contracts.  When predicates and data structure
accessors are applied to heap addresses, we refine the corresponding
sets to reflect what must be true.  So the program is modelled as:
\newcommand\where{\ensuremath{\text{\rm where}}}
\begin{alltt}
   (cond 
     [(and (int? \(\mloc\sb1\)) (int? (car \(\mloc\sb2\))))
      (+ \(\mloc\sb1\) (car \(\mloc\sb2\)))] …)
   \(\where \mloc\sb1\;\mapsto\;\){\leftcurly}(or/c int? string?)\rightcurly\(, \mloc\sb2\;\mapsto\;\){\leftcurly}cons?\rightcurly
\end{alltt}
In the course of evaluating the test, we get to {\tt (int?
  \(\mloc\sb1\))}, the semantics conceptually forks the evaluator and
refines the heap:
\begin{align*}
\mbox{\tt (int? \(\mloc\sb1\))} &\stdstep \mbox{\tt true}\text{, where }\mloc\sb1\mapsto\leftcurly\mbox{\tt int?}\rightcurly\\
 &\stdstep \mbox{\tt false}\text{, where }\mloc\sb1\mapsto\leftcurly\mbox{\tt string?}\rightcurly
\end{align*}
Similar refinements to \(\mloc\sb2\) are communicated through the heap
for {\tt (int? (car \(\mloc\sb2\)))}, thereby making {\tt (+
  \(\mloc\sb1\) (car \(\mloc\sb2\)))} safe.  This
  simple idea is effective in achieving flow-based refinements.  It
naturally handles deeply nested and inter-procedural conditionals.

\subsection{Incorporating an SMT solver}
\label{sec:smt-example}

The techniques described so far are highly effective for reasoning
about functions and many kinds of recursive data structures. However,
effective reasoning about many kinds of base values, such as integers,
requires sophisticated domain-specific knowledge. Rather than build
such a tool ourselves, we defer to existing high-quality solvers for
these domains.  Unlike many solver-aided verification tools, however,
we use the solver \emph{only} for queries on base values, rather than
attempting to encode a rich, higher-order language into one that is
accepted by the solver.

To demonstrate our approach, we take an example ({\tt intro3}) from work on
model checking higher-order programs \cite{dvanhorn:Kobayashi2010Higherorder}.

\begin{alltt}
   (>/c : int? \(\conarrow\) any \(\conarrow\) bool?)
   (define (>/c lo) (λ (x) (and (int? x) (> x lo))))
\end{alltt}
\begin{alltt}
   (define (f x g) (g (+ x 1)))
\end{alltt}
\begin{alltt}
   (h : [x : int?] \(\conarrow\) [y : (>/c x)] \(\conarrow\) (>/c y))
   (define (h x) ...) ; unknown definition
\end{alltt}
\begin{alltt}
   (main : int? \(\conarrow\) (>/c 0))
   (define (main n) (if (≥ n 0) (f n (h n)) 1))
\end{alltt}

In this program, we define a contract combinator ({\tt >/c}) that
creates a check for an integer from a lower bound; a helper function
{\tt f}, which comes without a contract; and an \emph{unknown}
function {\tt h} that given an integer {\tt x}, returns a function
mapping some number {\tt y} that is greater than {\tt x} to an answer
greater than {\tt y}---here {\tt h}'s specification is given, but not
its implementation.  (Note {\tt h}'s contract is dependent.)
We verify {\tt main}'s correctness, which means
it definitely returns a positive integer and does not violate
{\tt h}'s contract. 

According to its contract, {\tt main} is passed an integer {\tt n}.
If {\tt n} is negative, {\tt main} returns {\tt 1}, satisfying the contract.
Otherwise the function applies {\tt f} to {\tt n} and {\tt (h n)}.
Function {\tt h}, by its contract, returns another function
that requires  a number greater than {\tt n}.
Examining  {\tt f}'s definition,
we see {\tt h} (now bound to {\tt g}) is eventually applied to {\tt (+ n 1)}.
Let {\tt n\(_1\)} be the result of {\tt (+ n 1)}.
And by {\tt h}'s contract, we know the answer is another integer
greater than {\tt (+ n 1)}.
Let us name this answer {\tt n\(_2\)}.
In order to verify that {\tt main} satisfies contract {\tt (>/c 0)},
we need to verify that {\tt n\(_2\)} is a positive integer.

Once {\tt f} returns, the heap contains several addresses with
contracts:
\[
\begin{array}{lcl}
\mathtt{n} &  \mapsto & \{\texttt{int?}, \texttt{(≥/c 0)}\}\\
\mathtt{n_1} &  \mapsto & \{\texttt{int?}, \texttt{(=/c (+ n 1))}\}\\
\mathtt{n_2} &  \mapsto & \{\texttt{int?}, \texttt{(>/c n$_1$)}\}\\
\end{array}
\]
We then  translate this information to a query for an
external solver:
\begin{alltt}
  n, n\(\sb{1}\), n\(\sb{2}\): INT;
  ASSERT n ≥ 0;
  ASSERT n\(\sb{1}\) = n + 1;
  ASSERT n\(\sb{2}\) > n\(\sb{1}\);
  QUERY n\(\sb{2}\) > 0;
\end{alltt}
Solvers such as CVC4~\cite{dvanhorn:Barrett2011CVC4} and Z3~\cite{dvanhorn:DeMoura2008Z3}
easily verify this implication, proving {\tt main}'s correctness.

Refinements such as {\tt (≥/c 0)} are generated by \emph{primitive}
applications {\tt (≥ x 0)},
and queries are generated from translation of the heap, not arbitrary expressions.
This has a few consequences.
First, by the time we have value {\tt v} satisfying predicate {\tt p}
on the heap, we know that {\tt p} terminates successfully on {\tt v}.
Issues such as errors (from {\tt p} itself) or divergence 
are handled elsewhere in other evaluation branches.
Second, we only need to translate a small set
of simple, well understood contracts---not arbitrary expressions.
Evaluation naturally breaks down complex expressions,
and properties are discovered even when they are buried
in complex, higher-order functions.
Given a translation for {\tt (>/c 0)},
the analysis automatically takes advantage of the  solver
even when the predicate contains {\tt >} in a complex way,
such as {\tt (λ (x) (or (> x 0) $e$)} where $e$ is an arbitrary expression.
Predicates that lack translations to SMT only reduce precision, never soundness.

\subsection{Converging for non-tail recursion}
\label{sec:approxctxs}
The techniques sketched above provide high precision in the examples
considered, but simply executing programs on abstract values is
unlikely to terminate in the presence of recursion.
When an abstract value stands for an infinite set of concrete values,
execution may proceed infinitely, building up
an ever-growing evaluation context.
To tackle this problem, we \emph{summarize} this context to coalesce
repeated structures and enable termination on many
recursive programs.  Although guaranteed termination is not
our goal, the empirical results (\S\ref{sec:impl}) demonstrate that the method is
effective in practice.

The following example program is taken from work on model checking of higher-order
functional programs \cite{dvanhorn:Kobayashi2010Higherorder}, and
demonstrates checking non-trivial safety properties on recursive
functions. Note that no loop invariants need be provided by the user.

\begin{alltt}
   (main : (and/c int? ≥0?) \(\conarrow\) (and/c int? ≥0?))
   (define (main n)
     (let ([l (make-list n)])
       (if (> n 0) (car (reverse l empty)) 0)))
\end{alltt}
\begin{alltt}
   (define (reverse l ac)
     (if (empty? l) ac
         (reverse (cdr l) (cons (car l) ac))))
\end{alltt}
\begin{alltt}
   (define (make-list n)
     (if (= n 0) empty
         (cons n (make-list (- n 1))))))
\end{alltt}

Again, we aim to verify both the specified contract for {\tt main} as
well as the preconditions for primitive operations such as {\tt
  car}. Most significantly, we need to verify that \texttt{(reverse l
  empty)} produces a non-empty list (so that {\tt car} succeeds) and
that its first element is a positive integer.  The local functions
{\tt reverse} and {\tt make-list} do not come with a contract.

This problem is more challenging than the original OCaml version of
the same program, due to the lack of types.
This program represents a common idiom in dynamic languages:
not all values are contracted, and there is no
type system on which to piggy-back verification.
In addition, programmers often rely on inter-procedural reasoning
to justify their code's correctness, as here with {\tt reverse}.

We verify {\tt main} by applying it to an abstract (unknown) value {\tt n\(_1\)}.
The  contract ensures that within the body,
{\tt n\(_1\)} is a non-negative integer.

The integer {\tt n\(_1\)} is first passed to {\tt make-list}.
The comparison {\tt (= n\(_1\)\ 0)} non-deterministically returns
{\tt $\strue$} and {\tt $\sfalse$},
updating the information known about {\tt n\(_1\)} to be either {\tt 0} or {\tt (>/c 0)}
 in each corresponding case.
In the first case, {\tt make-list} returns {\tt empty}.
In the second case, {\tt make-list} proceeds to the recursive application
{\tt (make-list n\(_2\))}, where {\tt n\(_2\)} is the abstract non-negative integer
obtained from evaluating {\tt (- n\(_1\) 1)}. However, {\tt (make-list
  n\(_2\))} is identical to the original call {\tt (make-list
  n\(_1\))} up to renaming, since both {\tt n\(_1\)} and {\tt n\(_2\)} are non-negative. Therefore, we pause here and use a summary of {\tt make-list}'s result
instead of continuing in an infinite loop.

Since we already know that {\tt empty} is one possible result of {\tt (make-list n\(_1\))},
we use it as the result of {\tt (make-list n\(_2\))}.
The application {\tt (make-list n\(_1\))} therefore produces the pair
{\tt $\sconsc{\tt n_1}\sempty$}, which is another answer for the original application.
We could continue this process and plug this new result into the
pending application {\tt (make-list n\(_2\))}.
But by observing that the application produces a list of one positive integer
when the recursive call produces {\tt empty},
we approximate the new result $\sconsc{\tt n_1}\sempty$ to a
non-empty list of positive integers, and then use this approximate answer as the result of
 the pending application {\tt (make-list n\(_2\))}.
This then induces another result for {\tt (make-list n\(_1\))},
 a list of two or more positive integers, but this
 is subsumed by the previous answer of non-empty integer list.
 We have now discovered \emph{all} possible return values of {\tt make-list}
when applied to a non-negative integer:
it maps {\tt 0} to {\tt empty}, and positive integers
to a non-empty list of positive integers.

Although our explanation made use of the order, the soundness of
analyzing {\tt make-list} does not depend on the order of exploring
non-deterministic branches.  Each recursive application with repeated
arguments generates a waiting context, and each function return
generates a new case to resume.  There is an implicit work-list
algorithm in the modified semantics (\S\ref{sec:termination}).

When {\tt make-list} returns to {\tt main},
we have two separate cases:
either {\tt n\(_1\)} is 0 and {\tt l} is {\tt empty},
or {\tt n\(_1\)} is positive and {\tt l} is non-empty.
In the first case, {\tt (> n\(_1\) 0)} is false
and {\tt main} returns {\tt 0}, satisfying the contract.
Otherwise, {\tt main} proceeds to reversing the list
before taking its first element.

Using the same mechanism as with {\tt make-list},
the analysis infers that {\tt reverse}
returns a non-empty list when either of its arguments
({\tt l} or {\tt acc}) is non-empty.
In addition, {\tt reverse} only receives arguments of proper lists,
so all partial operations on {\tt l}
such as $\scar$ and $\scdr$ are safe when {\tt l} is not $\sempty$, without needing an
 explicit check.
The function eventually returns a non-empty list of integers
to {\tt main}, justifying {\tt main}'s call to the partial function
{\tt car}, producing a positive integer.
Thus, {\tt main} never has a run-time error in any context.

While this analysis makes use of the implementation of {\tt make-list}
and {\tt reverse}, that does not imply that it is whole-program.
Instead, it is \emph{modular} in its use of unknown values abstracting
arbitrary behavior.  For example, {\tt make-list} could instead be an
abstract value represented by a contract that always produces lists of
integers.  The analysis would still succeed in proving all contracts
safe except the use of {\tt car} in {\tt main}---this shows the
flexibility available in choosing between precision and modularity.
In addition, the analysis does not have to be perfectly precise to be
useful.  If it successfully verifies most contracts in a module, that
already greatly improves confidence about the module's correctness and
justifies the elimination of numerous expensive dynamic checks.

\subsection{Putting it all together}
\label{sec:putting-it-all-together-example}

The following example illustrates all aspects of our system. 
For this, we choose a simple encoding of classes as functions that
produce objects, where objects are again functions that respond to
messages named by symbols.
We then verify the correctness of a \emph{mixin}: a
function from classes to classes. 
The {\tt vec/c} contract enforces the interface of a 2D-vector class
whose objects accept messages {\tt 'x}, {\tt 'y}, and {\tt 'add}
for extracting components and vector addition.
\begin{alltt}
   (define vec/c
     ([msg : (one-of/c 'x 'y 'add)]
      \(\conarrow\) (match msg 
          [(or 'x 'y) real?]
          ['add (vec/c \(\conarrow\) vec/c)])))
\end{alltt}
This definition demonstrates several powerful contract system
features which we are able to handle:
\begin{itemize}
\item contracts are first-class values, as in the  definition of {\tt vec/c},
\item contracts may include arbitrary predicates, such as {\tt real?},
\item contracts may be recursive, as in the contract for  {\tt 'add},
\item function contracts may express \emph{dependent} relationships
  between the domain and range---the contract of the result of method
  selection for {\tt vec/c} depends on the method chosen.
\end{itemize}

Suppose we want to define a mixin that takes any class that satisfies
the {\tt vec/c} interface and produces another class with
added vector operations such as {\tt 'len}
for computing the vector's length.
The {\tt extend} function defines this mixin,
and {\tt ext-vec/c} specifies the new interface.
We verify that {\tt extend} violates no contracts and
returns a class that respects specifications from {\tt ext-vec/c}.
\begin{alltt}
   (extend : (real? real? \(\conarrow\) vec/c)
             \(\conarrow\) (real? real? \(\conarrow\) ext-vec/c))
   (define (extend mk-vec)
     (λ (x y)
       (let ([vec (mk-vec x y)])
         (λ (m)
           (match m
             ['len
              (let ([x (vec 'x)] [y (vec 'y)])
                (sqrt (+ (* x x) (* y y))))]
             [_ (vec m)])))))

   (define ext-vec/c
     ([msg : (one-of/c 'x 'y 'add 'len)]
      \(\conarrow\) (match msg 
          [(or 'x 'y) real?]
          ['add (vec/c \(\conarrow\) vec/c)]
          ['len (and/c real? (≥/c 0))])))
\end{alltt}

To verify {\tt extend}, we provide an arbitrary value,
which is guaranteed by its contract to be
a class matching {\tt vec/c}.
The mixin returns a new class whose objects
understand messages {\tt 'x}, {\tt 'y}, {\tt 'add}, and {\tt 'len}.
This new class defines method {\tt 'len}
and relies on the underlying class
to respond to  {\tt 'x}, {\tt 'y}, and {\tt 'add}.
Because the old class is constrained by contract {\tt vec/c},
the new class will not violate its contract when responding
to messages {\tt 'x}, {\tt 'y}, and {\tt 'add}.

For the {\tt 'len}  message, the object in the new vector class
extracts its components as abstract numbers {\tt x} and {\tt y},
according to interface {\tt vec/c}.
It then computes their squares and leaves the following information on the heap:
\[
\begin{array}{lcl}
\mathtt{x^2} &  \mapsto & \{\texttt{real?}, \texttt{(=/c (* x x))}\}\\
\mathtt{y^2} &  \mapsto & \{\texttt{real?}, \texttt{(=/c (* y y))}\}\\
\mathtt{s} &  \mapsto & \{\texttt{real?}, \texttt{(=/c (+ x\(^2\) y\(^2\)))}\}\\
\end{array}
\]
Solvers such as Z3 \cite{dvanhorn:DeMoura2008Z3}
can handle simple non-linear arithmetic and verify that the
sum {\tt s} is non-negative, thus the {\tt sqrt} operation is safe.
Execution proceeds to take the square root---now called {\tt l}---and refines
the heap with the following mapping:
\[
\begin{array}{lcl}
\mathtt{l} &  \mapsto & \{\texttt{real?}, \texttt{(=/c (sqrt s))}\}\\
\end{array}
\]
When the method returns, its result is checked by contract {\tt ext-vec/c}
to be a non-negative number.
We again rely on the solver to prove that this is the case.

Therefore, {\tt extend} is guaranteed to produce a new class that
is correct with respect to interface {\tt vec-ext/c},
justifying the elimination of expensive run-time checks.
In a Racket program computing the length of 100000 random vectors, 
eliminating these contracts results in a 100-fold speed-up.
While such dramatic results are unlikely in full programs,
measurements of existing Racket programs suggests that 50\% speedups
are possible~\cite{dvanhorn:Strickland2012Chaperones}. 


\section{A Symbolic Language with Contracts}
\label{sec:concrete}

In this section, we give a reduction system describing the core of our
approach.  Symbolic \lang\ is a model of a language with first-class
contracts and \emph{symbolic values}. We first present the semantics,
including handling of primitives and unknown functions.  We then
describe how the handling of primitive values integrates with external
solvers.  Finally, we show an abstraction of our symbolic system to
accelerate convergence.  For each abstraction, we relate concrete and
symbolic programs and prove a soundness theorem.

At a high level, the key idea of our semantics is that abstract values
behave non-deterministically in all possible ways that concrete values
might behave. Furthmore, abstract values can be bounded by
specifications in the form of contracts that limit these behaviors. As
a result, an operational semantics for abstract values explores all
the ways that the concrete program under consideration might be used.  

Given this operational semantics, we can then examine the results of
evaluation to see if any results are errors blaming the components we
wish to verify.  If they do not, then our soundness theorem implies
that there are no ways for the component to be blamed,
regardless of what other parts of the program do. Thus, we have
verified the component against its contract in all contexts.  We make
this notion precise in section~\ref{sec:soundness}.

\subsection{Syntax of Symbolic \lang}
\label{sec:syntax}
Our initial language models the functional core of many modern dynamic
languages, extended with behavioral, first-class contracts, as well as
symbolic values.  The abstract syntax is shown in
figure~\ref{fig:syntax-src}.  Syntax needed only for symbolic
execution is highlighted in \graybox{\mathrm{gray}}; we discuss it
after the syntax of concrete programs.

\begin{figure}[t!]
\[
\begin{array}{l@{\ \,}l@{\;}cl}
\mbox{Programs} & \mprg,\mprgo & ::= & \mvmod\:\mexp\\[.8mm]
\mbox{Modules}  & \mmod & ::= & \smod\mmodvar{\mpreval_c}\mpreval\\[.8mm]
\mbox{Expressions} & \mexp,\mcon,\mcono & ::= & \mval\ |\ \mvar^\mlab\ |\ \sapp\mexp\mexp^\mlab\ |\ \sapp\mop{\vec\mexp}^\mlab\
          |\ \sif\mexp\mexp\mexp\ |\ \slamc\mvar\mcon\mcono\\
&     &   | & \chk\mcon\mlab\mlab\mlab\mexp\
          |\ \simpleblm\mlab\mlab\ |\ \graybox{\assume\mval\mval}\\[.8mm]
\mbox{Pre-values} & \mpreval & ::= & \slam\mvar\mexp\ |\ \mbase\ |\ \sconsc\mval\mval\ |\ \slamc\mvar\mval\mcon\ |\ \graybox{\,\opaque\,}\\[.8mm]
\mbox{Values} & \mval & ::= & \graybox{\mLab}\ |\ \mpreval{\graybox{/\mvval}}\\[.8mm]
\mbox{Base values} & \mbase & ::= & \strue\ |\ \sfalse\ |\ \mnum\ |\ \sempty\\[.8mm]
\mbox{Operations} & \mop & ::= & \moppred\ |\ \sconsop\ |\ \scar\ |\ \scdr\ |\ \ssucc\ |\ \splus\ |\ \sequalp\\
& & | & \dots\\[.8mm]
\mbox{Predicates} & \moppred & ::= & \snump\ |\ \sfalsep\ |\ \sconsp\ |\ \semptyp\\
& & | & \sprocp\ |\ \sdepp\\[.8mm]
\mbox{Variables} & \mvar,\mmodvar,\mlab & \in & \mathit{identifier}\\[.8mm]
\mbox{Addresses} & \mLab & \in & \mathit{address}
\end{array}
\]
\caption{Syntax of Symbolic \lang}
\label{fig:syntax-src}
\end{figure}

A program $\mprg$ is a sequence of module definitions followed by a
top-level expression which may reference the modules.
Each module $\mmod$ has a name $\mmodvar$ and exports a single value $\mpreval$
with behavior enforced by contract $\mpreval_c$.
(Generalizing to multiple-export modules is straightforward.)

Expressions include standard forms such as values, variable and module
references, applications, and conditionals, as well as those for
constructing and monitoring contracts.  Contracts are first-class
values and can be produced by arbitrary expressions.  For clarity,
when an expression plays the role of a contract, we use the
metavariable $\mcon$ and $\mcono$, rather than $\mexp$.  A
\emph{dependent} function contract ($\slamc\mvar\mcon\mcono$) monitors
a function's argument with $\mcon$ and its result with the contract
produced by applying $\slam\mvar\mcono$ to the argument.

A contract violation at run-time causes \emph{blame},
an error with information about who violated the contract.
We write $\simpleblm\mlab\mlaboo$ to mean module $\mlab$ is blamed
for violating the contract from $\mlaboo$.
The form ($\chk\mcon\mlab\mlabo\mlaboo\mexp$)
monitors expression $\mexp$ with contract $\mcon$,
with $\mlab$ being the positive party,
$\mlabo$ the negative party, and $\mlaboo$ the source of the contract.
The system blames the positive party if $\mexp$ produces a value violating $\mcon$,
and the negative one if $\mexp$ is misused by the \emph{context} of the contract check.
To make context information available at run-time,
we annotate  references and applications with labels indicating the module
they appear in,
or  $\toplevel$ for the top-level expression.
For example, $\mvar^\toplevel$ denotes a reference to the name
$\mvar$ from the top level,
and ${(\sapp\ssucc\mvar)}^\mlab$ denotes an addition inside module $\mlab$.
When a module $\mlab$ causes a primitive error, such as applying $5$,
we also write $\simpleblm\mlab\Lambda$, indicating that it violates
a contract with the language.
We omit labels when they are irrelevant or can be inferred.

\emph{Pre-values} $\mpreval$---extended to values below---include abstractions, base values,
pairs of values, and dependent contracts with domain components evaluated.
Base values include numbers, booleans, and the empty list.
Primitive operations over values are standard,
including predicates $\moppred$
for dynamic testing of data types.

\label{sec:symbolic}

To reason about absent components,
we equip \lang\ with \emph{unknown},
or \emph{symbolic values},
which abstract over multiple concrete values exhibiting a range of behavior.
An unknown value $\opaque$ stands for
any value in the language.
For soundness, execution must account
for \emph{all} possible concretizations of abstract values,
and reduction becomes non-deterministic.
As the program proceeds through tests and contract checks,
more assumptions can be made about abstract values.
To remember these assumptions, we take the 
 \emph{pre-values}
and refine each with a set of contracts it is known to satisfy,
written $\with\mpreval\mvval$. 

Finally, to track refinements of unknown values,
we use heap addresses $\mLab$ as symbolic values 
and track them in a heap, which is  a finite map from addresses to refined pre-values:
\[
\begin{array}{l@{\quad}c}
\mbox{Heaps} & \msto ::= \vec{\spair{\mLab}{\with\mpreval{\vec\mval}}}\text.
\end{array}
\]
The heap $\msto$ maps addresses allocated for unknown values to
refinements expressed as contracts; these refinements are updated
during reduction and represent upper bounds on what they might be at
run-time.
Intuitively, any possible concrete execution can be obtained by substituting
addresses with concrete values within bounds specified by the heap.
We omit refinements when they are empty or irrelevant.




%


\subsection{Semantics of Symbolic \lang}
\label{sec:semantics}
We now turn to the reduction semantics for Symbolic \lang, which
combines standard rules for untyped languages with  behavior for
unknown values. 
Reduction is defined as a relation on states, parameterized by a
module context:
$$
\vec\mmod \vdash \mstate \stdstep \mstate'
$$ 
States consist either of an expression paired with a heap, or
blame:
\[
\begin{array}{l@{\quad}c}
\mbox{States} & \mstate ::= (\mexp,\msto)\ |\ \sblm\mlab\mlab\text.
\end{array}
\]


We present the rules inline; a full version of all rules is given in
the appendix of the \techrepcite. In the inline presentation of rules,
we systematically omit labels in contracts, these are presented in the
full rules.
We omit the module context whenever it is irrelevant.

\subsubsection{Basic rules}
\label{sec:basicrules}

Applications of primitives are interpreted by a $\delta$ relation,
which maps operations, arguments and heaps to results and new heaps.
\begin{mathpar}
\inferrule[Apply-Primitive]{\delta(\msto,\mop{\vec\mval},) \ni \mstate}
          {\spapp\mop{\vec\mval}, \msto
            \stdstep
            \mstate }
\end{mathpar}
The use of a $\delta$ relation in reduction semantics is standard, but
typically it is a function and is independent of the heap.
We make $\delta$ dependent on the heap in order to use and update the
current set of invariants; we make it a relation, since it may behave
non-deterministically on unknown values.
For example, in interpreting {\tt (> \(\bullet\) 5)}, the $\delta$
relation will produce two results: one \strue, with an updated heap
to reflect the unknown value is {\tt (>/c 5)}; the other \sfalse, with a
heap reflecting the opposite.
The $\delta$ relation is thus the hub of
the verification system and a point of interaction with the
SMT solver.  It is described in more detail in section~\ref{sec:prim}.

Applications of $\lambda$-abstractions follow standard
$\beta$-reduction; applications of non-functions result in blame.
\begin{mathpar}
\inferrule[Apply-Function]{\ }{\spapp{\slamp\mvar\mexp}{\mval},\msto
\stdstep
\subst\mval\mvar\mexp,\msto}

\inferrule[Apply-Non-Function]{\deltamap\msto\sprocp\mval\sfalse\mstoo}
{{\spapp{\mval}{\mval'}},\msto
\stdstep
\simpleblm\relax\relax,\mstoo}
\end{mathpar}
Notice that the $\delta$ relation is employed to determine whether the
value in operator position is a function using the $\sprocp$
primitive.
(Non-functions include concrete numbers and booleans as well as
abstract values known to exclude functions; application of abstract
values that may be functions is described in
section~\ref{sec:handling-unknown}.)

Conditionals follow a common treatment in untyped languages in which
values other than $\sfalse$ are considered true.
\begin{mathpar}
\inferrule[If-True]{\deltamap\msto\sfalsep\mval\sfalse{\msto'}}
          {\sif\mval{\mexp_1}{\mexp_2},\msto \stdstep \mexp_1,\msto' }

\inferrule[If-False]{\deltamap\msto\sfalsep\mval\strue{\msto'}}
          {\sif\mval{\mexp_1}{\mexp_2},\msto \stdstep \mexp_2,\msto'}
\end{mathpar}
Just as in the case of \emph{Apply-Non-Function}, the interpretation
of conditionals uses the $\delta$ relation to determine whether
\sfalsep holds, which takes into account all of the knowledge
accumulated in $\msto$ and in either branch that is taken, updates
the current knowledge to reflect whether \sfalsep of \mval holds.
This is the mechanism by which control-flow based refinements
are enabled.

The two rules for module references reflect the  approach in
which contracts are treated as \emph{boundaries} between
components~\cite{dvanhorn:Dimoulas2011Correct}: a module self-reference
 incurs no contract check, while cross-module references are
protected by the specified contract.
\begin{mathpar}
\inferrule[Module-Self-Reference]{\smod\mmodvar{\mpreval_c}\mpreval \in \mvmod}
{\vec\mmod\vdash\mmodvar^\mmodvar\!,\msto \stdstep \mpreval,\msto}

\inferrule[Module-External-Reference]{\smod\mmodvar{\mpreval_c}\mpreval \in \mvmod
\\ \mmodvar\neq\mlab}
{\vec\mmod\vdash\mmodvar^\mlab\!,\msto
\stdstep
 \achksimple{\mpreval_c}{\mpreval},\msto}
\end{mathpar}

Finally, any state that is stuck with blame inside an evaluation
context transitions to a final blame state that discards the
surrounding context and heap.
\begin{mathpar}
\inferrule[Halt-Blame]{ }
{\mctx[\sblm\relax\relax],\msto \stdstep {\sblm\relax\relax}}
\end{mathpar}
Evaluation contexts as defined as follows:
\[
\begin{array}{@{\ \,}l@{\;}cl}
 \mctx & ::= & [\;]\ |\ \sapp\mctx\mexp\ |\ \sapp\mval\mctx\ |\ \sapp\mop{\vec\mval \mctx \vec\mexp}
            \ |\ \sif\mctx\mexp\mexp\\
      &   | & \achksimple\mctx\mexp\ |\ \achksimple\mval\mctx\ |\ \slamc\mvar\mctx\mexp
\end{array}
\]

\subsubsection{Contract monitoring}
\label{sec:contract-monitoring}

Contract monitoring follows existing operational semantics for
contracts~\cite{dvanhorn:Findler2002Contracts}, with extensions to
handle and refine symbolic values.

There are several cases for checking a value against a contract.  If
the contract is not a function contract, we say it is \emph{flat},
denoting a first-order property to be checked immediately.  We thus
expand the checking expression to a conditional.
\begin{mathpar}
\inferrule[Monitor-Flat-Contract]{\deltamap\msto\sdepp{\mval_c}\sfalse\mstoo\\
  \ambig\mstoo\mval{\mval_c}}
          {\achksimple{\mval_c}\mval, \msto
            \stdstep
            \sif{\spapp{\mval_c}\mval}{\assume\mval{\mval_c}}{\simpleblm{}{}}, \mstoo}
\end{mathpar}
Since contracts are first-class, they can also be abstract values; we
rely on $\delta$ to determine whether a value is a flat contract by
using (the negation of) the predicate for dependent contracts, $\sdepp$,
instead of examining the syntax.
This rule is standard except for the use of $\assume\mval{\mval_c}$ and the
($\ambig\cdot\cdot\cdot$) judgment.
The $\assume\mval{\mval_c}$ form, which would normally just be
$\mval$, dynamically refines value $\mval$ and the heap to indicate
that $\mval$ satisfies $\mval_c$; \textsf{assume} is discussed further
in section~\ref{sec:handling-unknown}.
The judgment $\ambig\mstoo\mval{\mval_c}$, which would normally just
be omitted, indicates that the contract $\mval_c$ cannot be statically
judged to either pass or fail for $\mval$, which is why the predicate
must be applied.
This judgment and its closely related counterparts
($\proves\cdot\cdot\cdot$) and ($\refutes\cdot\cdot\cdot$), which
statically prove a value must or must not satisfy a given contract
respectively, are discussed in section~\ref{sec:smt}.  
If a flat contract can be statically proved or refuted, monitoring can
be short-circuited.
\begin{mathpar}
\inferrule[Monitor-Proved]{\deltamap\msto\sdepp{\mval_c}\sfalse\mstoo\\\\
  \proves\mstoo\mval{\mval_c}}
          {\achksimple{\mval_c}\mval, \msto
            \stdstep
            \mval, \mstoo}

\inferrule[Monitor-Refuted]{\deltamap\msto\sdepp{\mval_c}\sfalse\mstoo \\\\
  \refutes\mstoo\mval{\mval_c}}
          {\achksimple{\mval_c}\mval, \msto
            \stdstep
            \simpleblm{}{}, \mstoo}
\end{mathpar}

Monitoring a function contract against a function is interpreted the
standard $\eta$-expansion of contracts. 
\begin{mathpar}
\inferrule[Monitor-Function-Contract]{\deltamap\msto\sprocp\mval\strue{\mstoo}}
          {\achksimple{\slamc\mvar{\mval_c}\mcono}\mval, \msto
            \stdstep
            \slam\mvar{\achksimple\mcono{\spapp\mval{\achksimple{\mval_c}\mvar}}}, {\mstoo}}
\end{mathpar}

Monitoring a function contract against a non-function results in an error.
\begin{mathpar}
\inferrule[Monitor-Non-Function]{\deltamap\msto\sdepp{\mval_c}\strue{\msto_1}\\
  \deltamap{\msto_1}\sprocp\mval\sfalse{\msto_2}}
          {\achksimple{\mval_c}\mval, \msto
            \stdstep
            \simpleblm\relax\relax, {\msto_2}}
\end{mathpar}

When a dependent contract is represented by a address in the heap, we
look up the address and use the result.
\begin{mathpar}
\inferrule[Monitor-Unknown-Function-Contract]{\deltamap\msto\sdepp\mLab\strue{\msto_1}\\
  \deltamap{\msto_1}\sprocp\mval\strue{\msto_2}\\
           {\msto_2}(\mLab) = \slamc\mvar{\mval_c}\mcono }
  {\achksimple\mLab\mval, \msto
    \stdstep
    \slam\mvar{\achksimple\mcono{\sapp\mval{\achksimple{\mval_c}\mvar}}}, {\msto_2}}
\end{mathpar}

\subsubsection{Handling unknown values}
\label{sec:handling-unknown}

The final set of reduction rules concern unknown values and refinements.
\begin{mathpar}
\inferrule[Refine-Concrete]{\mpreval\neq\opaque}
          {\mpreval,\msto
            \stdstep            
            \with\mpreval\emptyset,\msto}

\inferrule[Refine-Unknown]{\mLab\notin\sdom\msto}
          {\opaque,\msto
            \stdstep
            \mLab,\sext\msto\mLab{\with\opaque\emptyset}}
\end{mathpar}
These two rules show reduction of pre-values, which initially have
no refinement.  If the pre-value is unknown, we additionally create a
fresh address and add it to the heap.

The \textsf{assume} form uses the \textsf{refine} metafunction to
update the heap of refinements to take into account the new
information; see figure~\ref{fig:semantics-ext} for the definition of \textsf{refine}.
\begin{mathpar}
\inferrule[Assume]{(\mstoo,\mvalo) = \refine\msto\mval{\mval_c}}
          {\assume\mval{\mval_c},\msto \stdstep \mvalo\!,\mstoo}
\end{mathpar}
Refinement is straightforward propagation of known contracts,
including expanding values known to be pairs via $\sconsp$ into pair
values, and values known to be function contracts (via $\sdepp$) into function contract values.

Finally, we must handle application of unknown values. The first rule
simply produces a new unknown value and heap address for the result
of a call. If the unknown function came with a contract, this new
unknown value will be refined by the contract via reduction.
\begin{mathpar}
\inferrule[Apply-Unknown]{\deltamap\msto\sprocp\mLab\strue\mstoo}
          {\sapp\mLab\mval, \msto
            \stdstep
            \mLab_a,\sext\mstoo{\mLab_a}\opaque}
          
\inferrule[Havoc]{\deltamap\msto\sprocp\mLab\strue\mstoo}
          {\sapp\mLab\mval, \msto \stdstep
            \sapp{\syntax{havoc}}\mval, \mstoo}
\end{mathpar}

The second reduction rule for applying an unknown function, labeled
\textit{Havoc}, handles the possible dynamic behavior of the unknown function.
A value passed to the unknown function may itself be a function
with behavior, whose implementation we hope to verify. This function
may further be invoked by the unknown function on unknown arguments.
To simulate this, we assume \emph{arbitrary} behavior from this unknown
function and put the argument in a so-called demonic context
implemented by the \textsf{havoc} operation, defined in a module added
to every program; the definition is given below.
\begin{gather*}
\begin{array}{@{}lr@{}}
\multicolumn{2}{@{}l@{}}{
\smod{\syntax{havoc}}
     {(\sarr{\sany}{\slam{\_}\sfalse}}}\\
\multicolumn{2}{@{}r@{}}{
\ \ (\slam\mvar{\amb(\{(\syntax{havoc}\ (\mvar\ \opaque)),
                                  (\syntax{havoc}\ (\scar\ \mvar)),
                                  (\syntax{havoc}\ (\scdr\
                                  \mvar))\})}))}
\end{array}
\\
\begin{array}{lcl}
\amb(\{\mexp\}) &=& \mexp\\
\amb(\{\mexp,\mexp_1,\dots\}) &=& \sif\opaque\mexp{\amb(\{\mexp_1,\dots\})}
\end{array}
\end{gather*}
The $\syntax{havoc}$ function never produces useful
results; its only purpose is to probe for all potential errors in the
value provided.  This context, and thus the $\syntax{havoc}$ module,
may be blamed for misuse of accessors and applications; we ignore
these, as they represent potential failures in \emph{omitted} portions
of the program.  Using $\syntax{havoc}$ is key to soundness in modular
higher-order static checking
~\cite{dvanhorn:Fahndrich2011Static,dvanhorn:TobinHochstadt2012Higherorder};
we discuss its role further in section~\ref{sec:soundness}.
Intuitively, precise execution of properly contracted functions prevents
$\syntax{havoc}$ from destroying every analysis.

\begin{figure}
\[
\begin{array}{@{}l@{\;}c@{\;}l@{\ }r@{}}

\refine\msto\mLab\mval&=&(\sext\mstoo\mLab\mvalo,\mLab)\\
\multicolumn{4}{r}{
\mbox{where }(\mstoo,\mvalo) = \refine\msto{\msto(\mLab)}\mval
}\\[1mm]

\refine\msto{\with\opaque{\vec\mval}}\sconsp
&=&
\multicolumn{2}{@{}l}{
(\sext{\sext\msto{\mLab_1}\opaque}{\mLab_2}\opaque,\with{\sconsc{\mLab_1}{\mLab_2}}{\vec\mval})
}\\
\multicolumn{4}{r}{
\mbox{where }{\mLab_1},{\mLab_2}\notin\sdom\msto
}\\[1mm]

\refine\msto{\with\opaque{\vec\mval}}\sdepp
&=&
\multicolumn{2}{@{}l}{
(\sext\msto\mLab\opaque,\with{\slamc\mvar\mLab\opaque}{\vec\mval})}\\
\multicolumn{4}{r}{
\mbox{where }\mLab\notin\sdom\msto
}\\[1mm] 

\refine\msto{\with\mpreval{\vec\mval}}{\mval_i}
&=&
(\msto,\with\mpreval{{\vec\mval}\cup\{\mval_i\}})\\[1mm]

\end{array}
\]
\caption{Refinement for Symbolic \lang}
\label{fig:semantics-ext}
\end{figure}

\subsection{Primitive operations}
\label{sec:prim}

Primitive operations are the primary place where
unknown values in the heap are refined, in concert with
successful contract checks.
Figure~\ref{fig:delta} shows a representative excerpt of $\delta$'s
definition; the full definition is given in \techrep.

The first three rules cover primitive predicate checks.
Ambiguity never occurs for concrete values,
and an abstract value may definitely prove or refute the predicate
if the available information is enough for the conclusion.
If the proof system cannot decide a definite result for the predicate check,
$\delta$ conservatively includes \emph{both} answers in the possible results
and records assumptions chosen for each non-deterministic branch in
the appropriate heap.
The last three rules reveal possible refinements when applying
partial functions such as $\ssucc$, which fails when given non-numeric inputs.
This mechanism, when combined with the SMT-aided
proof system given below, is sufficient to provide the precision
necessary to prove the absence of contract errors.

\begin{figure}
\[
\begin{array}{rclr}
\delta(\msto,\moppred,\mval)
&\ni&
(\strue,\msto)
&\mbox{if }\proves\msto\mval\moppred\\[1mm]

\delta(\msto,\moppred,\mval)
&\ni&
(\sfalse,\msto)
&\mbox{if }\refutes\msto\mval\moppred\\[1mm]

\delta(\msto,\moppred,\mLab)
&\supseteq&
\multicolumn{2}{l}{\{(\strue,{\msto_t}),(\sfalse,{\msto_f})\}}\\
\multicolumn{4}{r}{
\mbox{if }\ambig\msto\mLab\moppred
\mbox{ and }(\msto_t,\_)=\refine\msto\mLab\moppred}\\
\multicolumn{4}{r}{
\mbox{ and }(\msto_f,\_)=\refine\msto\mLab{\neg\moppred}
}\\

\dots\\

\delta(\msto,\ssucc,\mnum)
&\ni&
\multicolumn{2}{l}{({\mnum+1},\msto)}\\[1mm]

\delta(\msto,\ssucc,\mval)
&\ni&
\multicolumn{2}{l}{(\mLab,\sext\mstoo\mLab{\with\opaque\snump})}\\
\multicolumn{4}{r}{\mbox{where }\deltamap\msto\snump\mval\strue\mstoo
                   \mbox{, }\mval \neq \mnum
                   \mbox{, and }\mLab \notin \mstoo }\\[1mm]

\delta(\msto,\ssucc,\mval)
&\ni&
\multicolumn{2}{l}{(\simpleblm\relax\Lambda,\mstoo)}\\
\multicolumn{4}{r}{\mbox{where }\deltamap\msto\snump\mval\sfalse\mstoo}\\[1mm]
\dots
\end{array}
\]
\caption{Selected primitive operations}
\label{fig:delta}
\end{figure}

\subsection{SMT-aided proof system}
\label{sec:smt}

Contract checking and primitive operations rely on a  proof
system to statically relate values and contracts.
We write $\proves\msto\mval{\mval_c}$ to mean value $\mval$ satisfies contract $\mval_c$,
where all addresses in $\mval$ are defined in $\msto$.  In other words, under any possible
instantiation of the unknown values in $\mval$, it would satisfy
$\mval_c$ when checked according to the semantics.
On the other hand, $\refutes\msto\mval{\mval_c}$ indicates that $\mval$ definitely fails $\mval_c$.
Finally, $\ambig\msto\mval{\mval_c}$ is a conservative answer when
information from the heap and refinement set is insufficient to draw a definite conclusion.
The effectiveness of our analysis depends on the precision of this
provability relation---increasing the number of contracts that can be
related statically to values prunes spurious paths
 and eliminates impossible error cases.

\subsubsection{Simple proof system}
\label{sec:simple-proof-system}

 A simple proof system can be obtained which returns definite answers
 for concrete values, uses heap refinements, and handles negation of
 predicates and disjointness of data types.
\[
\begin{array}{lr}
\multicolumn{2}{l}{\proves\msto\mnum\snump}\\
\refutes\msto\mnum\moppred
&\mbox{if }\moppred\in\mbox{\{\sconsp,\sprocp,etc.\}}\\[1mm]

\proves\msto{\with\mpreval{\vec\mval}}{\mval_i}
&\mbox{if }{\mval_i}\in\vec\mval\\
\refutes\msto{\with\mpreval{\vec\mval}}\moppred
&\mbox{if }{\neg \moppred}\in\vec\mval\\[1mm]

\proves\msto\mLab\mval
&\mbox{if }\proves\msto{\msto(\mLab)}\mval\\
\refutes\msto\mLab\mval
&\mbox{if }\refutes\msto{\msto(\mLab)}\mval\\
\ambig\msto\mLab\mval
&\mbox{if }\ambig\msto{\msto(\mLab)}\mval\\
\dots\\
\ambig\msto\mval{\mval_c}
&\mbox{(conservative default)}
\end{array}
\]

%
Notice that the proof system only needs to handle a small
number of well-understood contracts.
We rely on evaluation to naturally break down complex contracts
into smaller ones and take care of subtle issues such as
divergence and crashing.
By the time we have $\with\mpreval{\vec\mval}$,
we can assume all contracts in $\vec\mval$ have terminated with success on $\mpreval$.
With these simple and obvious rules, our system can already verify a
significant number of interesting programs.
With SMT solver integration, as described 
below, we can handle far more interesting
constraints, including relations between numeric values, without
requiring an encoding of the full language.

\subsubsection{Integrating an SMT solver}
\label{sec:integrating-smt}

We extend the simple provability relation by employing an external solver.

We first define the translation
$\tran{\cdot}$ from heaps and contract-value pairs into formulas
in solver $S$:
\[
\begin{array}{rcl}
\tran{\vec{(\mLab,\mcon)}} &=& \bigwedge \vec{\tran{\mLab:\mcon}} \\[2mm]
\tran{\mLab_1:\texttt{(>/c \(n\))}}
&=&
\mbox{\tt ASSERT \(\mLab_1\) > \(n\)}\\[1mm]
\mbox{\text{$\tran{\mLab_1:\texttt{(>/c \(\mLab_2\))}}$}}
&=&
\mbox{\tt ASSERT \(\mLab_1\) > \(\mLab_2\)}\\[1mm]
\mbox{\text{$\tran{\mLab:\texttt{(=/c (+ \(\mLab_1\)  \(\mLab_2\)))}}$}}
&=&
\mbox{\tt ASSERT \(\mLab\) = \(\mLab_1\) + \(\mLab_2\)}\\
 \multicolumn{3}{c}{\ldots}\\
\end{array}
\]
The translation of a heap is the conjunction of all formulas
generated from translatable refinements.
The function is partial,
and there are straightforward rules for translating specific pairs of $(\mLab:\mcon)$
where $\mcon$ are drawn from a small set of simple, well-understood contracts.
This mechanism is enough for the system to verify
many interesting programs because the analysis relies on evaluation
to break down complex, higher-order predicates.
Not having a translation for some contract $\mcon$ only reduces precision
and does not affect soundness.

Next, the
extension $(\vdash_S)$ is straightforward.  The old relation
$(\vdash)$ is refined by a solver $S$.  Whenever the basic relation
proves $\ambig\msto\mval\mcon$, we call out to the solver to try to
either prove or refute the claim:
\begin{mathpar}
\inferrule{\tran\msto \wedge \neg \tran{\mval:\mcon}\mbox{ is unsat}}
{\extProves\msto\mval\mcon}

\inferrule{\tran\msto \wedge \tran{\mval:\mcon}\mbox{ is unsat}}
{\extRefutes\msto\mval\mcon}
\end{mathpar}
The solver-aided relation uses refinements available on the heap to
generate premises $\tran\msto$.  Unsatisfiability of $\tran\msto
\wedge \neg \tran{\mval:\mcon}$ is equivalent to validity of
$\tran\msto \Rightarrow \tran{\mval:\mcon}$, hence value definitely
satisfies contract $\mcon$.  Likewise, unsatisfiability of $\tran\msto
\wedge \tran{\mval:\mcon}$ means {\tt \mval} definitely refutes {\tt
  \mcon}.  In any other case, we relate the value-contract pair to the
conservative answer.

\subsection{Program evaluation}
We give a reachable-states semantics to programs:
the initial program $\mprg$ is paired with an empty heap, and
\syntax{eval} produces all states in the
reflexive, transitive closure of the single-step reduction relation closed under
evaluation contexts.

\[
\begin{array}{l}
%
\syntax{eval\: }\colon\mprg\rightarrow\pow(\mstate)\\
\mbox{$\syntax{eval}{(\vec\mmod \mexp)} =
   \{\mstate\ |\ {\vec\mmod}\ \vdash (\mexpo ; \mexp), \emptyset \multistdstep
                                               \mstate\}$}\\
\quad\mbox{where }\mexpo = \amb(\{\strue,\vec{\syntax{havoc} f}\})\mbox{, }
\syntax{\smod\mmodvar{\mval_c}\mval}\in\vec\mmod\\
\end{array}
\]
Modules with unknown definitions, which we call \emph{opaque},
complicate the definition of \syntax{eval}, since they may contain
references to concrete modules. If only the main module is considered,
an opaque module might misuse a concrete value in ways not visible to
the system.  We therefore apply \syntax{havoc} to each concrete module
before evaluating the main expression.

\subsection{Soundness}
\label{sec:soundness}
A program with unknown components is an abstraction of a fully-known
program. Thus, the semantics of the abstracted program should
approximate the semantics of any such concrete version.  In
particular, any behavior the concrete program exhibits should also be
exhibited by the abstract approximation of that program.

However, we must be precise as to which behaviors are relevant.
Suppose we have a single concrete module that links against a single
opaque module.  The semantics of this program should include all of
the possible behaviors, both good and bad, of the known module
assuming the opaque module always lives up to its contract.  We
exclude from consideration behaviors that cause the unknown module to
be blamed, since it is of course impossible to verify an unknown
program.
In other words, we try to verify the parts of the program that are
known, assuming arbitrary, but correct, behavior for the parts of the
program that are unknown.

For this reason, the precise semantic account of blame is crucial.
The demonic {\tt havoc} context can introduce blame of both the known
and unknown modules; since we can distinguish these parties, it is
easy to ignore blame of the unknown context.

In the remainder of this section, we formally define the approximation
relation and show that evaluation preserves the approximation,
i.e.~if program $q$ is an approximation of program $p$ ($q$ is like
$p$ but with potentially more unknowns), then the evaluation of $q$ is
an approximation of the evaluation of $p$.

\paragraph{Approximation:}
We define two approximation relations:
between modules and between pairs of expressions and heaps.

We write $\mstate \refines \mstate'$ to mean
``$\mstate'$ approximates $\mstate$,\!''
or ``$\mstate$ refines $\mstate'$\!,''
which intuitively means $\mstate'$ stands for a set of states
including $\mstate$.
For example, $(\syntax{1},\{\}) \refines (\mLab,\{\mLab\mapsto\opaque\})$.

One complication introduced by addresses is that a \emph{single} address in the abstract program
may accidentally approximate \emph{multiple} distinct values in the concrete one.
Such accidental approximations are not in general preserved by
reduction, as in the following example where
$(\mexp_1,\msto_1) \refines (\mexp_2,\msto_2)$:
\[
\begin{array}{lcll}
e_1 &=& \texttt{(if \sfalse\ 1 2)} & \sigma_1 = \{\}\\
e_2 &=& \texttt{(if $\mLab$\  $\mLab$ $\mLab$)} & \sigma_2 = \{\mLab \mapsto \bullet\}\\
\end{array}
\]
The abstract program does not continue to approximate the concrete one
in their next states:
\[
\begin{array}{lcll}
e_1 &\longmapsto& (\texttt{2}, \sigma_1') & \sigma_1' = \{\}\\
e_2 &\longmapsto& (\mLab, \sigma_2') & \sigma_2' = \{\mLab \mapsto \sfalse\}\\
\end{array}
\]

We therefore also define a ``strong'' version of the approximation relation, $\refines^{F}$,
where each address in the abstract program approximates exactly one
value in the concrete program, and this consistency is witnessed by
some function $F$ from addresses to values.  Then $e \refines e'$ means
that $\exists F. e \refines^{F} e'$
Since no such function exists between $e_1$ and $e_2$ above, $e_1
\not\refines^{F} e_2$ for any $F$, and therefore $e_1 \not\refines
e_2$.

Figure \ref{fig:approx} shows the important cases in the definition of $\refines^{F}$;
we omit structurally recursive rules.
All pre-values are approximated by $\opaque$,
and unknown values with contracts approximate  values that satisfy the
same contracts.
We extend the relation $\refines^F_{\vec\mmod}$ structurally to evaluation contexts $\mctx$,
point-wise to sequences, and to sets of program states.

In the following example, $(\mexp_1,\msto_1) \refines^F (\mexp_2,\msto_2)$,
where $F = \{\mLab_0 \mapsto \sfalse$,$\mLab_1 \mapsto 1$,$\mLab_2 \mapsto 2\}$:
\[
\begin{array}{@{}l@{\ }c@{\ }ll}
e_1 &=& \texttt{(if \sfalse\ 1 2)} & \sigma_1 = \{\}\\
e_2 &=& \texttt{(if $\mLab_1$\  $\mLab_2$ $\mLab_3$)} & \sigma_2 = \{\mLab_1 \mapsto
\bullet, \mLab_2 \mapsto \bullet, \mLab_3 \mapsto \bullet\}\\
\end{array}
\]

Notice that $F$'s domain is a superset of the domain of the heap
$\msto_2$.  In addition, our soundness result does not consider
additional errors that blame unknown modules or the \syntax{havoc}
module, and therefore we parameterize the approximation relation
$\refines^{F}_{\vec\mmod}$ with the module definitions $\vec\mmod$ to select the
opaque modules.  We omit these parameters where they are
easily inferred to ease notation.

\begin{figure}
\begin{mathpar}
\inferrule{(\with\mpreval{\vec\mval},\msto_1) \refines^F (\msto_2(\mLab),\msto_2)
\\\\ F(\mLab) = \with\mpreval{\vec\mval}}
          {(\with\mpreval{\vec\mval},\msto_1) \refines^F (\mLab,\msto_2)}
\qquad      
\inferrule{(\msto_1(\mLab_1),\msto_1) \refines^F (\msto_2(\mLab_2),\msto_2)
\\\\ F(\mLab_2) = \mLab_1}
          {(\mLab_1,\msto_1) \refines^F (\mLab_2,\msto_2)}

\inferrule{(\with{\mpreval_1}{\vec{\mval_1}},\msto_1)
            \refines^F
           (\with{\mpreval_2}{\vec{\mval_2}}),\msto_2)\\ 
           (\mval_c,\msto_1) \refines^F (\mval_d,\msto_2)}
          {(\with{\mpreval_1}{\vec{\mval_1} \cup \{\mval_c\}}, \msto_1)
           \refines^F
           (\with{\mpreval_2}{\vec{\mval_2} \cup \{\mval_d\}}, \msto_2)}
           
\inferrule{ }
          {(\mpreval,\msto_1) \refines^F (\opaque,\msto_2)}

\inferrule{(\with{\mpreval_1}{\vec{\mval_1}},\msto_1)
            \refines^F
           (\with{\mpreval_2}{\vec{\mval_2}},\msto_2)}
          {(\with{\mpreval_1}{\vec{\mval_1} \cup \mval}, \msto_1)
           \refines^F
           (\with{\mpreval_2}{\vec{\mval_2}}, \msto_2)}
            
\inferrule{\smod\mmodvar{\mpreval_c}\opaque \in \vec\mmod\\ or \\
           \mmodvar \in \{\dagger,\syntax{havoc}\}}
          {(\sblm\mmodvar\mmodvaro,\msto_1) \refines^F_{\vec\mmod} (\mexp,\msto_2)}

\end{mathpar}
\caption{Selected Approximation Rules}
\label{fig:approx}
\end{figure}

%
With the definition of approximation in hand, we are now in a position
to state the main soundness theorem for the system.

\begin{theorem}[Soundness of Symbolic \lang]
\label{thm:main}\ \\
If $\mprg \refines^F_{\vec\mmod} \mprgo$ where $\mprgo = \mvmod \mexp$
and $\mstate \in \eval(\mprg)$,
then there exists some $\mstate' \in \eval(\mprgo)$
such that $\mstate \refines^{F'}_\mvmod \mstate'$\!.
\end{theorem}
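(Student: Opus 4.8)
The statement is a forward-simulation result, so the plan is to prove it by induction on the length of the reduction sequence that witnesses $\mstate \in \eval(\mprg)$. The harness that $\eval$ prepends --- the $\amb(\{\strue,\vec{\syntax{havoc}\,\mmodvar}\})$ prefix, and the module context, which is the module list $\vec\mmod'$ of $\mprg$ --- lines up with the one built for $\mprgo$ because $\mprg$ and $\mprgo$ have the same module \emph{names} (only module bodies and the symbolic content of values may differ), and the module context is fixed throughout reduction; this settles the base case, after extending $F$ to a witness for the two empty initial heaps. The inductive step is discharged entirely by a one-step simulation lemma: whenever $\mstate_1 \refines^{F}_{\vec\mmod} \mstate_2$ and $\vec\mmod' \vdash \mstate_1 \stdstep \mstate_1'$ with $\vec\mmod'$ refining $\vec\mmod$ pointwise, there exist $F' \supseteq F$ (on the already-allocated abstract addresses) and $\mstate_2'$ with $\vec\mmod \vdash \mstate_2 \multistdstep \mstate_2'$ and $\mstate_1' \refines^{F'}_{\vec\mmod} \mstate_2'$. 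Allowing \emph{zero or more} abstract steps is essential: the abstract side is routinely less precise than the concrete one and needs several reductions to ``catch up'' with a single concrete step.

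\textbf{The one-step lemma.}
I would prove this by cases on the rule used for $\mstate_1 \stdstep \mstate_1'$, after decomposing $\mstate_1 = (\mctx_1[\textit{redex}],\msto_1)$. The structural approximation rules omitted from Figure~\ref{fig:approx} lift $\refines^{F}$ to evaluation contexts, so in the principal subcase $\mstate_2 = (\mctx_2[\mexp_2],\msto_2)$ with $\mctx_1 \refines^{F} \mctx_2$ and related hole contents, and the abstract can replay the same rule, or a short expansion of it. Nearly every case turns on a trio of auxiliary lemmas, which I would establish first: (i) $\delta$ \emph{over-approximates} --- from $\msto_1 \refines^{F} \msto_2$ and related arguments, every outcome of a concrete $\delta$ step is matched, up to $\refines^{F'}$, by some outcome of the corresponding abstract $\delta$ step; (ii) the proof system is \emph{sound} --- each verdict $\extProves{\msto_2}{\mval}{\mcon}$ or $\extRefutes{\msto_2}{\mval}{\mcon}$ is consistent with the concrete instantiation named by $F$, so that every concrete $\delta$- and monitoring-branch is covered (for the SMT layer this reduces to soundness of the translation $\tran{\cdot}$: a concrete model of $\tran{\msto}$ exists, so unsatisfiability transfers); and (iii) \textsf{refine} and \textsf{assume} over-approximate, threading $F$ through the fresh addresses they create when expanding $\sconsp$ or $\sdepp$. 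Given these, the value-manipulating cases (\textsc{Apply-Function}, \textsc{If-True}, \textsc{If-False}, \textsc{Module-Self-Reference}, \textsc{Module-External-Reference}, \textsc{Monitor-*}, \textsc{Refine-Concrete}, \textsc{Assume}, \textsc{Apply-Primitive}, \textsc{Halt-Blame}) follow; the recurring bookkeeping is that each concrete allocation (\textsc{Refine-Unknown}, \textsc{Apply-Unknown}, or a pair/contract expansion inside \textsf{refine}) adds exactly one entry to $F'$, pointing the new abstract address at the new concrete value. Two representative ``catch-up'' patterns: a concrete blame from a \emph{refuted} flat contract is matched, when the abstract proof system only yields $\ambig{\msto}{\mval}{\mcon}$, by \textsc{Monitor-Flat-Contract}, then reducing the guard to $\sfalse$ via (i), then \textsc{If-False}; and \textsc{Refine-Concrete} on a pre-value is matched by \textsc{Refine-Unknown} possibly followed by a run of monitor/assume steps, landing on an address whose heap entry carries the appropriate refinements.

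\textbf{Main obstacle.}
The difficult case is when the concrete machine is \emph{running the body of a component that is opaque in $\mprgo$} --- hence just $\opaque$ on the abstract side --- but concrete in $\mprg$. There the abstract holds nothing but an opaque address at that position and can only step by \textsc{Apply-Unknown} (producing a fresh $\opaque$) or \textsc{Havoc} (shunting the argument into the demonic context), while the concrete grinds through the real, possibly elaborate function body, and we must still exhibit an abstract reachable state approximating \emph{every} concrete intermediate state, not just the value ultimately returned. The argument has to lean on exactly the machinery the paper stresses: the value eventually returned is a pre-value, hence approximated by $\opaque$, so \textsc{Apply-Unknown}'s result covers it; any sub-computation in which the opaque body calls back into a \emph{known} component with a related argument is covered by a \textsc{Havoc} branch (recall $\eval$ applies $\syntax{havoc}$ to every module, and the demonic context probes all first-order uses of its argument); and any blame that names an opaque module, $\toplevel$, or $\syntax{havoc}$ is absorbed by the blame rule of Figure~\ref{fig:approx}, so those intermediate states need no genuine match. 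Turning this into a proof --- showing that \textsc{Apply-Unknown} together with the \emph{fixed} behavior of $\syntax{havoc}$ really does dominate, up to $\refines$ and modulo ignored blame, the \emph{arbitrary} behavior of a contract-respecting concrete component --- is the technically heaviest ingredient, and is where I expect the proof to concentrate its effort, mirroring the role $\syntax{havoc}$ plays in earlier modular higher-order checkers. A secondary difficulty, already foreshadowed by the \texttt{(if \sfalse\ 1 2)} example in the text, is maintaining $\refines^{F}$ in its \emph{strong} form: one must verify that no heap-refining rule (\textsc{If}, \textsc{Monitor}, or a $\delta$ branch) ever lets a single abstract address come to approximate two distinct concrete values --- equivalently, that the concrete derivation stays consistent with the one concretization that $F$ records --- which is precisely why the relation is parameterized by $F$.
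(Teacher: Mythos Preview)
Your overall plan---induction on the length of the reduction witnessing $\mstate\in\eval(\mprg)$, discharged by a one-step simulation lemma, with supporting lemmas for soundness of $\delta$, soundness of the proof relation on flat contracts, and heap-refinement preservation---is exactly the paper's structure (its Lemmas~1, 3, 4, and the central Lemma~5). On the skeleton you and the paper agree.

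Where you diverge is on the case you flag as the ``main obstacle.'' You anticipate having to argue, operationally, that \textsc{Apply-Unknown} together with the \textsf{havoc} branch jointly simulate, step by step, whatever the concrete opaque body does---covering returned values with $\opaque$ and callbacks with the demonic context---and you expect this to be ``the technically heaviest ingredient.'' The paper takes a much shorter path. Its Lemma~5 splits the simulation into two cases: either the concrete and abstract redexes line up structurally (Case~1, handled by the sub-lemmas you list), or the concrete redex sits strictly inside a subterm $\mctx_{1b}[\mexp_1]$ that is related to the abstract by a \emph{non-structural} approximation rule (Case~2). The observation is that such rules do not inspect the internal shape of the concrete subterm, so one concrete reduction $\mexp_1\to\mexp_1'$ leaves the approximation $\mctx_{1b}[\mexp_1']\refines\mctx_{2b}[\mexp_2]$ intact and the abstract takes \emph{zero} steps. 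Intermediate computation inside an opaque body is thus absorbed wholesale; the abstract never has to chase it. The role of \textsf{havoc} is then isolated into a standalone lemma (Lemma~6): if \emph{any} context can drive a known module $\mmodvar$ to blame, then $(\textsf{havoc}\ \mmodvar)$ can too, and since $\eval$ already prepends $(\textsf{havoc}\ \mmodvar)$ for every module, this discharges blame-reachability without the step-by-step interleaving you envision. Your route would work, but the paper's decomposition---non-structural rules swallow intermediate states; \textsf{havoc} handles blame separately via a canonical-context argument---sidesteps most of the difficulty you describe.
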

\noindent
We defer all proofs to the
\iftechreport
appendix
\else
technical report
\fi
for  space.

\subsection{Verification and the blame theorem}

We can now define verification as a simple corollary of soundness.
First we defined when a module is \emph{verified} by our approach.

\begin{definition}[Verified module]
\label{def:verified}\ \\
A module $\smod{\mmodvar}{\mpreval_c}{\mpreval} \in \mprg$
 is \emph{verified in} $\mprg$
if $\mpreval \neq \opaque$ and $\eval(\mprg) \not\ni \sblm\mmodvar\relax$\!.
\end{definition}

Now, by soundness, $\mmodvar$ is always safe.

\begin{theorem}[Verified modules can't be blamed]\ \\
  If a module named $\mmodvar$ is verified in $\mprg$, then for any
  concrete program $\mprgo$ for which $\mprg$ is an abstraction, 
  $\eval(\mprgo) \not\ni \sblm\mmodvar\relax$\!.
\end{theorem}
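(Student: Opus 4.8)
The plan is to obtain this statement as a short corollary of the soundness theorem (Theorem~\ref{thm:main}), the only extra ingredient being an analysis of which abstract states can approximate a blame state. Fix a concrete program $\mprgo$ for which $\mprg = \mvmod\,\mexp$ is an abstraction, i.e.\ $\mprgo \refines^F_\mvmod \mprg$. I would argue the contrapositive: if some $\mmodvaro$ has $\sblm\mmodvar\mmodvaro \in \eval(\mprgo)$, then $\sblm\mmodvar\mmodvaro' \in \eval(\mprg)$ for some $\mmodvaro'$, which contradicts the assumption that $\mmodvar$ is verified in $\mprg$ (Definition~\ref{def:verified}).

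Concretely, I would first instantiate Theorem~\ref{thm:main} with its ``concrete'' program taken to be $\mprgo$ and its ``abstract'' program taken to be $\mprg = \mvmod\,\mexp$. Since $\mprgo \refines^F_\mvmod \mprg$ and $\sblm\mmodvar\mmodvaro \in \eval(\mprgo)$, soundness yields some $\mstate' \in \eval(\mprg)$ with $\sblm\mmodvar\mmodvaro \refines^{F'}_\mvmod \mstate'$. It then remains to show that $\mstate'$ is itself a blame state blaming $\mmodvar$. Because $\mmodvar$ is verified in $\mprg$, its definition in $\mvmod$ has the form $\smod\mmodvar{\mpreval_c}\mpreval$ with $\mpreval \neq \opaque$; hence $\smod\mmodvar{\mpreval_c}\opaque \notin \mvmod$, and since $\mmodvar$ is an ordinary user module, $\mmodvar \notin \{\toplevel,\syntax{havoc}\}$. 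Therefore the permissive rule of Figure~\ref{fig:approx}---the one whose premise is ``$\smod\mmodvar{\mpreval_c}\opaque \in \vec\mmod$ or $\mmodvar \in \{\toplevel,\syntax{havoc}\}$'', under which $\sblm\mmodvar\mmodvaro$ is approximated by an arbitrary state---does not fire. Inspecting the remaining (structurally recursive) clauses of $\refines^{F'}_\mvmod$ that can carry a blame state on the left, the only one matching $\sblm\mmodvar\mmodvaro$ relates it to a blame state with the same blamed party, which forces $\mstate' = \sblm\mmodvar\mmodvaro'$ for some $\mmodvaro'$. Then $\sblm\mmodvar\mmodvaro' \in \eval(\mprg)$, contradicting $\eval(\mprg) \not\ni \sblm\mmodvar\relax$, and we are done.

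The main obstacle I anticipate is making the penultimate step watertight: confirming, from the full (partially elided) definition of $\refines^{F}_\mvmod$, that a concrete blame state whose blamer is a non-opaque, non-$\syntax{havoc}$, non-top-level module can only be approximated by a blame state that blames that same module. Two supporting facts from section~\ref{sec:soundness} would need to be spelled out and used here: first, that errors triggered inside the demonic $\syntax{havoc}$ context---misuses of accessors and applications---are attributed to the $\syntax{havoc}$ module (or to opaque modules), never to a verified module, so they are genuinely covered by the permissive rule's premise and never give rise to $\sblm\mmodvar\relax$; and second, that the extra $\syntax{havoc}$ modules that $\eval$ inserts into both $\mprg$ and $\mprgo$ preserve the module-level approximation $\mprgo \refines^F_\mvmod \mprg$ and add no fresh blame of $\mmodvar$. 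I would isolate these as small lemmas first; with them in hand, everything else is a single application of soundness followed by the blame-state case analysis above.
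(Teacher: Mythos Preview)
Your approach is correct and matches the paper's: the theorem is stated as an immediate corollary of soundness (Theorem~\ref{thm:main}), and your contrapositive argument together with the case analysis on the approximation rules for blame states is exactly the right way to fill in the details the paper omits.

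One minor point: your anticipated obstacles around \syntax{havoc} are largely unnecessary. Theorem~\ref{thm:main} is already phrased in terms of $\eval$, which itself inserts the \syntax{havoc} prelude on both sides, so the preservation of program-level approximation under the \syntax{havoc} additions is baked into the soundness theorem's statement and proof---you need not isolate it as a separate lemma. Likewise, your first ``supporting fact'' (that \syntax{havoc}'s own accessor/application misuses are blamed on \syntax{havoc}, not on $\mmodvar$) is true but not needed for this argument: you are reasoning from an assumed $\sblm\mmodvar\mmodvaro \in \eval(\mprgo)$, and the only thing you must show is that its approximant in $\eval(\mprg)$ is again a blame of $\mmodvar$. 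The rule inspection you describe (the permissive rule of Figure~\ref{fig:approx} does not fire because $\mmodvar$ is non-opaque and not in $\{\toplevel,\syntax{havoc}\}$; the only remaining rule for a blame state on the left is the structural one in Figure~\ref{fig:approximation-full} preserving the blamed party) already suffices.
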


\subsection{Taming the infinite state space}
\label{sec:termination}
A naive implementation of the above semantics will diverge for many programs.
Consider the following example:
\begin{alltt}
(define (fact n)
  (if (= n 0) 1 (* n (fact (- n 1)))))
(fact \(\opaque\))
\end{alltt}

\noindent
Ignoring error cases, it eventually reduces non-deterministically to all of the following:
\begin{align*}
\mbox{\tt 1} & \mbox{ if }\mLab\sb{n} \mapsto \mbox{\tt 0}\\
\mbox{\tt (* \(\mLab\sb{n}\) 1)} & \mbox{ if }\mLab\sb{n}\not\mapsto {\tt
  0}\text{, }\mLab\sb{n-1} \mapsto {\tt 0}\\
\mbox{\tt (* \(\mLab\sb{n}\) (* \(\mLab\sb{n-1}\) (fact
  \(\mLab\sb{n-1}\))))} & \mbox{ if } \mLab\sb{n}\text{, }\mLab\sb{n-1} \not\mapsto {\tt 0}
\end{align*}
where $\mLab_{n-1}$ is a fresh address resulting from subtracting $\mLab_n$ by one.
The process continues with $\mLab_{n-2}$, $\mLab_{n-3}$, etc.
This behavior from the analysis happens because it attempts to approximate
\emph{all} possible concrete substitutions to abstract values.
Although $\syntax{fact}$ terminates for all concrete naturals,
there are an infinite number of those: $\mLab_n$ can be $\syntax{0}$,
$\syntax{1}$, $\syntax{2}$, and so on.

To enforce termination for all programs,
we can resort to well-known techniques such as finite state or
pushdown abstractions~\cite{dvanhorn:VanHorn2012Systematic}.
But often those are overkill at the cost of precision.
Consider the following program:
\begin{alltt}
   (let* ([id (\(\slambda\) (x) x)] [y (id 0)] [z (id 1)])
     (< y z))
\end{alltt}
where a monovariant flow analysis such as 0CFA~\cite{dvanhorn:shivers-88} thinks $\syntax{y}$ and $\syntax{z}$ can be both $\syntax{0}$ and $\syntax{1}$,
and pushdown analysis thinks $\syntax{y}$ is 0 and $\syntax{z}$ is either $\syntax{0}$ or $\syntax{1}$.
For a concrete, straight-line program, such imprecision seems unsatisfactory.
We therefore aim for an analysis that provides exact execution for
non-recursive programs and
 retains enough invariants to verify interesting properties of recursive ones.
The analysis quickly terminates for a majority of programming patterns with decent precision,
although it is not guaranteed to terminate in the general case---see
section~\ref{sec:impl} for empirical results.

One technical difficulty is that the semantics of contracts prevents us from using
a recursive function's contract directly as a loop invariant,
because contracts are only boundary-level enforcement.
It is unsound to assume returned values of internal calls
can be approximated by contracts,
as in \texttt{f} below:
\begin{alltt}
(f : nat? \(\conarrow\) nat?)
(define (f n) (if (= n 0) "" (str-len (f (- n 1)))))
\end{alltt}
If we assume the expression \texttt{(f (- n 1))} returns a number as specified in the contract,
we will conclude \texttt{f} never returns, and is blamed
either for violating its own contract by returning a string,
or for applying $\syntax{str-len}$ to a number.
However, $\syntax{f}$ returns $\syntax{0}$ when applied to $\syntax{1}$.
To soundly and precisely approximate this semantics in the absence of types,
we recover  data type invariants by execution.

\begin{figure*}[t]
\begin{mathpar}
\inferrule{\mctx \neq \mctx_1[\srt{\msto_0}{\slam\mvar\mexp}{\mval_0}{\mctx_k}]
           \mbox{ for any } \mctx_1, \mctx_k, \msto_0, \mval_0}
          {\squadruple\mktab\mmtab{\mctx[\spapp{\slamp\mvar\mexp}{\mval}]}\msto \stdstep
           \squadruple\mktab\mmtab{\mctx[\srt{\msto}{\slam\mvar\mexp}{\mval}{\subst\mval\mvar\mexp}]}{\msto}}
          
\inferrule{\vec{\spair{\mLab_o}{\mLab_n}} = \mF\\
           \mstoo = \msto{\vec{[\mLab_n \mapsto \sapprox{\msto_0(\mLab_0)}{\msto(\mLab_0)}]}}}
          {\squadruple\mktab\mmtab{\mctx[\sblur\mF{\msto_0}{\mval_0}\mval]}\msto \stdstep
           \squadruple\mktab\mmtab{\mctx[{\sapprox{\mval_0}\mval}]}{\mstoo}}
  
\inferrule{\mctx = \mctx_1[\srt{\msto_0}{\slam\mvar\mexp}{\mval_0}{\mctx_k}]
           \mbox{ for some }{\mctx_1},{\mctx_k},{\msto_0},{\mval_0}\\
           \spair\msto\mval \not\refines \spair{\msto_0}{\mval_0}\\
           \mval_1 = \sapprox{\mval_0}\mval}
          {\squadruple\mktab\mmtab{\mctx[\spapp{\slamp\mvar\mexp}{\mval}]}\msto \stdstep
           \squadruple\mktab\mmtab{\mctx[\srt{\msto}{\slam\mvar\mexp}{\mval_1}{\subst{\mval_1}\mvar\mexp}]}\msto}
           
\inferrule{\mctx = \mctx_1[\srt{\msto_0}{\mval_f}{\mval_0}{\mctx_k}]
           \mbox{ for some }{\mctx_1},{\mctx_k},{\msto_0},{\mval_0}\\
           \spair\msto\mval \refines^F \spair{\msto_0}{\mval_0}\\
           \mktabo = \mktab \sqcup [\striple{\msto_0}{\mval_f}{\mval_0}
                            \mapsto \squadruple\mF\msto{\mctx_1}{\mctx_k}]\\
           \spair{\mval_a}{\msto_a} \in \mmtab[\striple{\msto_0}{\mval_f}{\mval_0}]\\
           \mstoo = \msto{\vec{[\mLab_n \mapsto \msto_a[\mLab_o]]}}
           \mbox{ where }{\vec{\spair{\mLab_o}{\mLab_n}} = \mF}}
		  {\squadruple\mktab\mmtab{\mctx[\spapp{\mval_f}{\mval}]}\msto \stdstep
		   \squadruple\mktabo\mmtab{\mctx_1[\srt{\msto_0}{\mval_f}{\mval_0}
		                          {\sblur{\mF}{\msto_a}{\mval_a}{\mctx_k[\mval_a]}}]}\mstoo}

\inferrule{\mmtabo = \mmtab \sqcup [\striple{\msto_0}{\mval_f}{\mval_0}
                                    \mapsto \spair\mval\msto]}
		  {\squadruple\mktab\mmtab{\mctx[\srt{\msto_0}{\mval_f}{\mval_0}\mval]}\msto
		  \stdstep
		  \squadruple\mktab\mmtabo{\mctx[\mval]}\msto}

\inferrule{\mmtabo = \mmtab \sqcup [\striple{\msto_0}{\mval_f}{\mval_0}
                                    \mapsto \spair\mval\msto]\\
           \squadruple\mF{\msto_k}{\mctx_1}{\mctx_k} \in
                   \mktab[\striple{\msto_0}{\mval_f}{\mval_0}]\\
           \mstoo_k = \msto_k{\vec{[\mLab_n \mapsto \msto(\mLab_o)]}}
           \mbox{ where }{\vec{\spair{\mLab_o}{\mLab_n}} = \mF}}
		  {\squadruple\mktab\mmtab{\mctx[\srt{\msto_0}{\mval_f}{\mval_0}\mval]}\msto
		  \stdstep
		  \squadruple\mktab\mmtabo{\mctx_1[\srt{\msto_0}{\mval_f}{\mval_0}{\sblur\mF\msto\mval{\mctx_k[\mval]}}]}{\mstoo_k}}

\end{mathpar}
\caption{Summarizing Semantics}
\label{fig:semantics-terminating}
\end{figure*}

\begin{figure}[t]
\[
\begin{array}{@{}l@{\quad}r@{\;}c@{\;}l}
\mbox{Expressions} & \mexp &+\!\!=& \srt\msto\mval\mval\mexp\ |\ \sblur\mF\msto\mval\mexp \\[.8mm]
\mbox{Values} & \mval &+\!\!=& \srec\mvar{\vec\mval}\ |\ \sref\mvar\\[.8mm]
\mbox{Evaluation contexts} & \mctx &+\!\!=& \srt\msto\mval\mval\mctx\ |\ \sblur\mF\msto\mval\mctx\\[1mm]
\mbox{Context memo tables} & \mktab &::=& \vec{((\msto,\mval,\mval), \vec{(\mF,\msto,\mctx,\mctx)})}\\[1mm]
\mbox{Value memo tables} & \mmtab &::=& \vec{((\msto,\mval,\mval), \vec{(\mval,\msto)})}\\[1mm]
\mbox{Renamings} & \mF &::=& \vec{\sconsc\mLab\mLab}
\end{array}
\]
\caption{Syntax extensions for approximation}
\label{fig:syntax-ext2}
\end{figure}

\paragraph{Summarizing function results:}
To accelerate convergence, we modify the application rules as follows.
At each application, we decide whether execution should step to
the function's body or wait for known results from other branches.
When an application \texttt{(f v)} reduces to
a similar application, we plug in known results instead of executing
\texttt{f}'s body again, avoiding the infinite loop.
Correspondingly, when \texttt{(f v)} returns,
we plug the new-found answer into contexts that need the result of \texttt{(f v)}.
The execution continues until it has a set soundly
describing the results of \texttt{(f v)}.

To track information about application results and waiting contexts,
we augment the execution with two global tables $\mmtab$ and $\mktab$
as shown in figure \ref{fig:syntax-ext2}.
We borrow the choice of metavariable names from work on concrete summaries~\cite{Johnson:HOPA}. 

A value memo table $\mmtab$ maps each application to known results and accompanying refinements.
Intuitively, if $\mmtab(\msto,{\mval_f},{\mval_x}) \ni (\mval,\msto')$
then in some execution branch, there is an application $\spapp{\mval_f}{\mval_x},\msto
\multistdstep (\mval,\msto')$.

A context memo table $\mktab$ maps each application to contexts waiting for its result.
Intuitively, $\mktab(\msto,{\mval_f},{\mval_x}) \ni (\mF,\mstoo,{\mctx_1},{\mctx_k})$
means during evaluation,
some expression ${\mctx_1}[\srt\msto{\mval_f}{\mval_x}{[{\mctx_k}[\spapp{\mval_f}{\mval_z}]]}]$
with heap $\mstoo$ is paused because applying $\spapp{\mval_f\,}{\mval_z}$ under assumptions
in $\mstoo$ is subsumed by applying $\spapp{\mval_f\,}{\mval_x}$ under assumptions in
$\msto$ up to consistent address renaming specified by function $\mF$.

To keep track of function applications seen so far,
we extend the language with the expression $\srt\msto\mval\mvalo\mexp$,
which marks $\mexp$ as being evaluated as the result of applying
$\mval$ to $\mvalo$, but otherwise behaves like $\mexp$.
The expression $\sblur\mF\msto\mval\mexp$, whose detailed role is discussed below,
approximates  $\mexp$ under guidance from a ``previous'' value $\mval$.

Finally, we add  recursive contracts
$\srec\mvar{\vec\mval}$ and recursive  references $\sref\mvar$
for approximating inductive sets of values.
For example, $\srec\mvar{\{\sempty, \sconsc{\with\opaque\snatp}{\sref\mvar}\}}$
approximates all finite lists of naturals.

A state in the approximating semantics with summarization consists of
global tables $\mktab$, $\mmtab$, and a set $S$ of explored
states $\vec{\mstate}$\!.

Reduction now relates tables $\mktab$, $\mmtab$, and a set of states
$\vec{\mstate}$
to new tables $\mktabo$, $\mmtabo$ and a new set of states $\vec{\mstate}'$.
We define a relation $\striple{\mktab}{\mmtab}{\mstate} \longmapsto \striple{\mktab}{\mmtab}{\mstate}$, and then lift this relation point-wise to sets of states.
Figure \ref{fig:semantics-terminating} only shows rules that use the
global tables or new expression forms.

In the first rule, if an application $\spapp{\slamp\mvar\mexp}\mval$ is not previously seen,
execution proceeds as usual, evaluating expression $\mexp$ with
$\mvar$ bound to $\mval$, but marking this expression using \textsf{rt}.

Second, if a previous application of $\spapp{\slamp\mvar\mexp}{\mval_0}$ results in application
of the same function to a new argument $\mval$,
we approximate the new argument before continuing.
Taking advantage of knowledge of the previous argument,
we guess the transition from the $\mval_0$ to $\mval$ and
heuristically emulate
an arbitrary amount of such transformation using the $\oplus$ operator.
For example, if $\mval_0$ is $\sempty$ and $\mval$ is $\sconsc{\syntax{1}}\sempty$,
we approximate the latter to $\srec\mvar{\{\sempty, \sconsc{\syntax{1}}{\sref\mvar}\}}$,
denoting a list of \texttt{1}'s.
If a different number is later prepended to the list, it is approximated to a list of numbers.
The $\oplus$ operator should work well in common cases
and not hinder convergence in the general case.
Failure to give a good approximation to a value results in non-termination
but does not affect soundness.

Third, when an application results in a similar one
with potentially refined arguments,
we avoid stepping into the function body and use known results from table $\mmtab$ instead.
In addition, we refine the current heap to make better use of assumptions
about the particular ``base case''.
We also remember the current context as one waiting
for the result of such application.
To speed up convergence, apart from feeding a new answer $\mval_a$ to the context,
we wrap the entire expression inside $\sblur\mF\msto\mval{[\;]}$
to approximate the future result.

The fourth rule in figure \ref{fig:semantics-terminating} shows reduction
for returning from an application.
Apart from the current context, the value is also returned to any known
context waiting on the same application.
Besides, the value is also remembered in table $\mmtab$.
The resumption and refinement are analogous to the previous rule.

Finally, expression $\sblur\mF\msto{\mval_0}\mval$ approximates value $\mval$ under guidance
from the previous value $\mval_0$
and also approximates values on the heap from observation
of the previous case. 
Overall, the approximating operator $\oplus$ occurs in three places:
arguments of recursive applications, result of recursive applications,
and abstract values on the heap when recursive applications return.
Empirical results for our tool are presented in section~\ref{sec:impl}.

%

\paragraph{Soundness of summarization:}

A system $(\mktab,\mmtab,S)$ approximates a state $\mstate$
if that state can be recovered from the system through approximation
rules.  The crucial rule, given below, states that if the system
$(\mktab,\mmtab,S)$ already approximates expression $\mexp$ and the
application $\spapp{\mval_f}{\mval_x}$ is known to reduce to $\mexp$,
then $(\mktab,\mmtab,S)$ is an approximation of $\mctx_k[\mexp]$
where $\mctx_k$ is a waiting context for this application.
\begin{mathpar}
\inferrule{\srt{\_}\mval{\_}{[\;]} \notin \mctx_0\\
		   (\mval_x,\msto) \refines (\mval_z,\mstoo)\\
		   (\mval_y,\msto) \refines (\mval_z,\mstoo)\\
		   \mktab(\mstoo\!,\mval,\mval_z) \ni (F,\mstoo\!,\mctx'_0,\mctx'_k)\\
		   (\mctx_0,\msto) \refines (\mctx'_0,\mstoo)\\
		   (\mctx_k,\msto) \refines (\mctx'_k,\mstoo)\\
		   (\mctx_0[\srt{\msto_1}\mval{\mval_y}\mexp],\msto) \refines (\mktab,\mmtab,S)}
          {(\mctx_0[\srt{\msto_0}\mval{\mval_x}{\mctx_k[\srt{\msto_1}\mval{\mval_y}\mexp]}],\msto)
           \refines (\mktab,\mmtab,S)}
\end{mathpar}
As a consequence, summarization properly handles repetition of waiting contexts,
and gives results that approximate any number of recursive applications.
We refer readers to the appendix of \techrep for the full definition of the
approximation relation.

With this definition in hand, we can state the central lemma to
establish the soundness of the revised semantics that uses
summarization.
\begin{lemma}[Soundness of summarization]
\label{thm:summ}\ \\
If $\mstate \refines (\mktab,\mmtab,S)$
and $\mstate \stdstep \mstate'$\!,
then $(\mktab,\mmtab,S) \multistdstep (\mktabo,\mmtabo,S')$
such that $\mstate' \refines (\mktabo,\mmtabo,S')$.
\end{lemma}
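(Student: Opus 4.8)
The plan is to prove this as a step-preservation (forward-simulation) lemma: a single step $\mstate \stdstep \mstate'$ of the non-summarizing reduction is tracked by a sequence $(\mktab,\mmtab,S) \multistdstep (\mktabo,\mmtabo,S')$ of summarizing steps, after which the approximation relation is re-established. The relation $\mstate \refines (\mktab,\mmtab,S)$ is inductively generated: from the base case $\mstate \refines^{F} \mastate$ for some $\mastate \in S$ (via the rules of Figure~\ref{fig:approx}), from congruence rules over evaluation contexts, and from the displayed crucial rule, which collapses a nested recursive invocation onto the waiting-context entry recorded in $\mktab$. So the proof is an induction on that derivation, with a case analysis in the base case on the reduction rule used for $\mstate \stdstep \mstate'$. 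Two earlier facts are reused throughout: the single-step soundness underlying Theorem~\ref{thm:main} (a concrete step is matched, up to $\refines^{F'}$, by the syntactically identical symbolic rule), and monotonicity of the provability judgments $\proves{\msto}{\mval}{\mval_c}$ and $\refutes{\msto}{\mval}{\mval_c}$ under heap approximation (a more refined heap has fewer models, so positive and negative answers survive), which is what lets the $\delta$-driven rules --- \textsc{Apply-Primitive}, the conditional rules, and flat-contract monitoring --- be mimicked on the summarizing side.

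Most cases are routine. In the base case, for every reduction rule of section~\ref{sec:semantics} that is not an application rule, the summarizing semantics has a syntactically identical rule; the tables $\mktab$ and $\mmtab$ are untouched; the innermost $\syntax{rt}$ or $\syntax{blur}$ marker, if any, sits strictly outside the redex and is therefore inert; and the result stays related via the base rule with an updated witness $F'$. This amounts to replaying the proof of Theorem~\ref{thm:main} with the extra markers carried along inertly. In the congruence and crucial-rule inductive cases where the redex lies strictly inside the sub-state, the induction hypothesis applies directly, and the surrounding structure --- including any recorded $\mktab$ entry, which the summarizing reduction only enlarges --- is reassembled unchanged.

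The real work is in the application and return rules of Figure~\ref{fig:semantics-terminating}. Consider a concrete step that applies a $\lambda$-abstraction to an argument inside the innermost $\syntax{rt}$ region, producing the substituted body wrapped in a fresh $\syntax{rt}$ marker. On the summarizing side the matching application is fresh, a repeat with an argument incompatible with the recorded one, or a subsumed repeat. In the first two cases the summarizing side also enters the body (rules~1 and~3 of Figure~\ref{fig:semantics-terminating}), wrapping it in a fresh $\syntax{rt}$ marker and, for rule~3, replacing the argument by its $\oplus$-approximation; the case closes by inertness of $\syntax{rt}$, the soundness of $\oplus$ (it over-approximates its argument, so $\refines$ survives replacing an argument or a heap value by its $\oplus$-approximation), and congruence of $\refines$. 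In the subsumed-repeat case the summarizing side does \emph{not} enter the body: it records the waiting context in $\mktabo$, pulls a memoized result from $\mmtab$, plugs it into the waiting context, and wraps the whole expression in a $\syntax{blur}$ marker. To see that the concrete state still approximates the resulting system, we invert the approximation derivation and appeal to the crucial rule: the concrete state now has the collapsible shape $\mctx_0[\srt{\msto_0}\mval{\mval_x}{\mctx_k[\srt{\msto_1}\mval{\mval_y}\mexp]}]$ --- the outer $\syntax{rt}$ being the recorded entry for this application, the inner one the unrolling just performed --- so its approximation reduces to that of the collapsed state $\mctx_0[\srt{\msto_1}\mval{\mval_y}\mexp]$ plus the side condition that $\mktab$ holds the matching waiting-context entry, which is exactly what the summarizing step installs. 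The collapsed state is one the summarizing side is already poised to carry in $S'$, possibly after a few further return and $\syntax{blur}$ steps that expose the memoized answer; this is where the $\multistdstep$ in the conclusion is used.

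I expect the subsumed-repeat case to be the main obstacle, and within it, keeping the global tables internally consistent across those catch-up steps: whenever the summarizing semantics detects a subsumed repeat and wants a memoized answer, such an answer must in fact be present in $\mmtab$, and more generally $\mmtab$ and $\mktab$ must jointly form a closed work-list, so that both resuming a waiting context with a newly discovered answer and memoizing the answer of a completed call preserve the invariants. I plan to fold these closure conditions into the definition of $\mstate \refines (\mktab,\mmtab,S)$ (whose full definition we omit here) or to establish them as a separate well-formedness invariant preserved by the summarizing reduction, and to bound the catch-up: each concrete unrolling is matched by finitely many summarizing steps because the crucial rule strictly decreases the number of nested recursive $\syntax{rt}$ markers in the collapsed form. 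The remaining cases --- the $\syntax{blur}$ rule and the two $\syntax{rt}$-return rules --- then follow the same pattern, using soundness of $\oplus$ for $\syntax{blur}$ and, for returns, the observation that memoizing a value and propagating it to the registered waiting contexts only enlarges the set of states the system approximates, so no concrete step is lost.
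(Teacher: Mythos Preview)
Your overall architecture matches the paper's: induction on the derivation of $\mstate \refines (\mktab,\mmtab,S)$ with a case analysis on the reduction, reusing the single-step soundness (Lemma~\ref{lem:main}) for the transitivity case and treating non-application redexes as routine. The paper's Case~1 (base: $\mstate\in S$), Case~2 (transitivity through $\refines$), and Case~3 (the crucial collapsing rule) line up with your three layers.

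The one substantive divergence is how you close the subsumed-repeat sub-case. You plan to rely on $\mmtab$ already holding an answer and to secure this by folding closure/well-formedness invariants on $(\mktab,\mmtab)$ into the approximation relation. The paper avoids this entirely via a small auxiliary \emph{Return frame} lemma: if $\mctx[\srt{\msto}{\mval_f}{\mval_x}{\mexpo}] \in S$ then $\mctx[\spapp{\mval_f}{\mval_x}] \in S$ (proved by induction on the reachability derivation, using that $\syntax{rt}$ is only introduced by stepping an application). From $\mstate_1 = \mctx_0[\srt{\msto_0}{\slam\mvar\mexp}{\mval_0}{\mctx_k[\spapp{\slamp\mvar\mexp}{\mval}]}] \in S_1$ this yields $\mctx_0[\spapp{\slamp\mvar\mexp}{\mval_0}] \in S_1$; since $\mctx_0$ contains no outer $\syntax{rt}$ for this function, the summarizing system can take the \emph{fresh}-application rule there, putting $\mctx_0[\srt{\msto_0}{\slam\mvar\mexp}{\mval_0}{\subst{\mval_0}\mvar\mexp}]$ into $S''_1$, which is exactly the collapsed form the crucial rule needs. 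No appeal to $\mmtab$, no extra invariant, and the $\multistdstep$ is just that one extra step. Your invariant-based route would probably work, but it is heavier and the obligation you flag (``an answer must in fact be present in $\mmtab$'') is not actually needed once you have the Return-frame observation. A second, smaller point: the reduction $\mstate\stdstep\mstate'$ in the lemma is already the one that \emph{unconditionally} inserts $\syntax{rt}$ at every application (the paper notes Lemma~\ref{lem:main} is ``straightforwardly extended with $\syntax{rt}$ expressions''), so you need not treat $\syntax{rt}$ as an inert decoration that the concrete side never produces.
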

\noindent

The proof is given in \techrep.
With this lemma in place, it is straightforward to replay the proof of
the soundness and blame theorems.

\begin{table*}[t]
\begin{center}
\begin{tabular}
{@{\;}l @{\;}|@{\;} r@{\;} |@{\;} r@{\;} | @{\;}c@{\;} |@{\;} c @{\;}|@{\;} r @{\;}|@{\;} c@{\;} }
   \multicolumn{2}{l}{} && \multicolumn{2}{@{\;}c@{\;}|@{\;}}{Simple} & \multicolumn{2}{@{\;}c@{\;}}{\SCV} \\ 
    Corpus & Lines & Checks & Time (ms) & False Pos. & Time (ms) &
    False Pos. \\ \hline 
\textrm{Occurrence Typing}
 & 115 & 142 & 155.8 & 15 & 8.9 & 0\\
\textrm{Soft Typing}
 & 134 & 177 & 424.5 & 9 & 380.3 & 0\\
\textrm{Higher-order Recursion Schemes}
 & 301 & 467 & $\infty$ & $\leq 94$ & 3,253.8 & 4\\
\textrm{Dependent Refinement Types}
& 69 & 116 & $\infty$ & $\leq 66$ & 193.0 & 1\\
\textrm{Higher-order Symbolic Execution}
 & 236 & 319 &
$\infty$ & $\leq 19$ & 4,372.7 & 1\\[2mm]
{Student Video Games} & & & & &\\
\quad Snake & 202 & 270 & 9,452.5     &   0 & 3.008.8  & 0\\
\quad Tetris  & 308 & 351 & $\infty$ & - & 27,408.5 & 0\\
\quad Zombie & 249 & 393 & $\infty$ & - & 11,335.9 & 0
\end{tabular}
\end{center}
\caption{Summary benchmark results.  (See \techrep for detailed
  results.)}
\label{tbl:summary}
\end{table*}

\section{Implementation and evaluation}
\label{sec:impl}

To validate our approach, we implemented a static contract checking
tool, \SCV, based on the semantics presented in section~\ref{sec:concrete},
 along with a number of implementation
extensions for increased precision and performance.  We then applied 
\SCV to a wide selection of programs drawn from the literature on
verification of higher-order programs, and report on the results.

The source code for \SCV and all benchmarks are available along with instructions
on reproducing the results we report here:
\begin{center}
\href{https://github.com/philnguyen/soft-contract/}{\tt github.com/philnguyen/soft-contract}
\end{center}
In order to quantify the importance of the techniques presented in
this paper, we also created a simpler tool which omits the key contributions of this work.
This slimmed down system, which we refer to as ``Simple'' below, (a) does not
call out to a solver, but relies on remembering seen contracts, (b)
never refines the contracts associated with a heap address, but
splits disjunctive contracts and unrolls recursive contracts,
and (c) does not use our technique for summarizing repeated
context. To enable a full comparison on all benchmarks, the Simple tool
supports first-class contracts.
This simpler system is extremely similar to that presented by our
 earlier work~\cite{dvanhorn:TobinHochstadt2012Higherorder}, but
 works on all of our benchmarks.

\paragraph{Implementation extensions:}
\SCV supports an extended language beyond that presented
in section~\ref{sec:concrete} in order to handle realistic programs.
First, more base values and primitive operations are supported, such
as strings and symbols (and their operations), although we do not yet
use a solver to reason about values other than integers.
Second, data structure definitions are allowed at the top-level.
Each new data definition induces a corresponding (automatic) extension
to the definition of $\syntax{havoc}$ to deal with the new class of
data.
Third, modules have multiple named exports, to handle the examples
presented in section~\ref{sec:main-idea}, and can include local,
non-exported, definitions.
Fourth, functions can accept multiple arguments and can be defined to
have variable-arity, as with \texttt{+}, which accepts arbitrarily
many arguments.  This introduces new possibilities of errors from
arity mismatches.
Fifth, a much more expressive contract language is implemented with
$\syntax{and/c}$, $\syntax{or/c}$, $\syntax{struct/c}$, $\syntax{\(\mu\)/c}$
for conjunctive, disjunctive, data type, and recursive contracts, respectively.
Sixth, we provide solver back-ends for both
CVC4~\cite{dvanhorn:Barrett2011CVC4} and
Z3~\cite{dvanhorn:DeMoura2008Z3}.
%

\paragraph{Evaluating on existing benchmarks:}
To evaluate the applicability of \SCV to a wide variety of
challenging higher-order contract checking problems, we collect
examples from the following sources: programs that make use of
control-flow-based typing from work on \textbf{occurrence
typing}~\cite{dvanhorn:TobinHochstadt2010Logical}, programs from work
on \textbf{soft typing}, which uses flow analysis to check the preconditions of
operations~\cite{dvanhorn:Cartwright1991Soft}, programs with
sophisticated specifications from work on model checking  \textbf{higher-order 
recursion schemes}~\cite{dvanhorn:Kobayashi2011Predicate}, programs from work on
inference of \textbf{dependent refinement
types}~\cite{dvanhorn:Terauchi2010Dependent}, and programs with rich
contracts from our prior
work on \textbf{higher-order symbolic execution}~\cite{dvanhorn:TobinHochstadt2012Higherorder}.
We also evaluate \SCV on three interactive student video games built for
a first-year programming course: \textbf{Snake}, \textbf{Tetris}, and \textbf{Zombie}. These
programs were all originally written as sample solutions, following
the style expected of students in the course. Of these,
Zombie is the most interesting: it was originally an object-oriented
program, translated using the encoding seen in
section~\ref{sec:putting-it-all-together-example}.

We present our results in summary form in table~\ref{tbl:summary}, grouping each of the above
sets of benchmark programs; expanded forms of the tables are provided
in \techrep.  The table shows total line count
(excluding blank lines and comments)
and the number of static occurrences of contracts and primitives requiring dynamic checks
such as function applications and primitive operations.  These
checks can be eliminated if we can show that they never fail; this has
proven to produce significant speedups in practice, even without
eliminating more expensive contract
checks~\cite{dvanhorn:TobinHochstadt2011Languages}.

The table reports time (in milliseconds) and the number of false
positives for \SCV and our
reduced system omitting the key contributions of this work (labeled ``Simple'');
 ``$\infty$'' indicates a timeout after 5 minutes.

A false positive is a contract violation reported by the analysis,
but by human inspection, cannot happen.
The programs we consider are all known not to have contract errors,
and thus all potential errors are false positives.

In cases where a tool times out, we give an upper bound on the number
of false positive error reports. For example, the Simple system times
out on two of the higher-order recursion scheme programs, meaning
that if it were to complete, it would report \emph{at most} 94 false
positives, counting all contract checks from the two programs on which
it times out, and the measured false positives on the programs where
it completes.

Execution times are measured on a Core i7 2.7GHz laptop with 8GB of RAM.

\paragraph{Discussion:}
  First, \SCV works on
a benchmarks for a range of previous static analyzers, from type systems to model checking to
program analysis.  

  Second, most programs are analyzed in a reasonable amount of time;
  the longest remaining analysis time is under 30 seconds.
This demonstrates that although the termination acceleration method of
section~\ref{sec:termination} is not fully general, it is effective for
many programming patterns.  For example, \SCV terminates with good
precision on \texttt{last} from
\citet{dvanhorn:Wright1997Practical}, which hides recursion behind the
Y combinator.

Third, across all benchmarks, over 99\% (2329/2335) of the contract
checks are statically verified, enabling the elimination both of small
checks for primitive operations and expensive contracts; see below for
timing results.  This result emphasizes the value of static contract
checking: gaining confidence about correctness from expensive contracts
without actually incurring their cost.

Overall, our experiments show that our approach is able to discover and
use invariants implied by conditional flows of control and contract
checks.  Obfuscations such as multiple layers of abstractions or
complex chains of aliases do not impact precision (a common
shortcoming of flow analysis).

Our approach does not yet give a way to prove deep structural
properties expressed as dependent contracts such as ``map over a list
preserves the length'' or ``all elements in the result of filter
satisfy the predicate'', resulting in the false positives seen in
table~\ref{tbl:summary}. However, it can already be used to verify
many interesting programs because often safety questions depend only
on knowledge of top-level constructors.  Examples of these patterns
appear in programs from \citet{dvanhorn:Kobayashi2011Predicate} for programs such as {\tt reverse}
(see also \S\ref{sec:approxctxs}), {\tt nil}, and {\tt mem}.

Finally, soft contract verification is  more broadly applicable than the systems from
which our benchmarks are drawn, which typically are successful only on
their own benchmarks.  For example, type systems such as occurrence
typing~\cite{dvanhorn:TobinHochstadt2010Logical} cannot verify any
non-trivial contracts, and most soft typing systems do not consider
contracts at all.  Systems based on higher-order
model-checking~\cite{dvanhorn:Kobayashi2011Predicate}, and dependent
refinement types~\cite{dvanhorn:Terauchi2010Dependent} assume a typed
language; encoding our programs using large disjoint unions produces
unverifiable results.

This broad applicability is why we are not able to directly
compare \SCV to these other systems across all benchmarks.
Instead, the Simple system serves as a benchmark for a system which
does not contain our primary contributions.

\paragraph{Contract optimization:} 
We also report speedup results for the three most complex programs in
our evaluation, which are interactive games designed for first-year
programming courses (Snake, Tetris, and Zombie).  For each, we
recorded a trace of input and timer events while playing the game, and
then used that trace to re-run the game (omitting all graphical
rendering) both with the contracts that we verified, and with the 
contracts manually removed. Each
game was run 100 times in both modes; the total time is presented below.
\begin{center}
\begin{tabular}{l|r|r}
    Program & Contracts On (ms) & Contracts Off (ms)\\ \hline
    snake & 475,799 & 59 \\
    tetris & 1,127,591 & 186 \\
    zombie & 12,413 & 1,721 \\
\end{tabular}
\end{center}

The timing results are quite striking---speedup ranges from over 5x to
over 5000x. This does not indicate, of course, that speedups of these
magnitudes are achievable for real programs.  Instead, it shows that
programmers avoid the rich contracts we are able to verify, because of
their unacceptable performance overhead. Soft contract verification
therefore enables programmers to write these specifications without
the run-time cost.

The difference in timing between Zombie and the other two games is
intriguing because Zombie uses higher-order dependent contracts
extensively, along the lines of \texttt{vec/c} from
section~\ref{sec:putting-it-all-together-example}, which intuitively should
be more expensive.
An investigation reveals that most of the cost comes from monitoring
flat contracts, especially those that apply to data structures.  For
example, in Snake, disabling {\tt posn/c}, a simple contract that
checks for a {\tt posn} struct with two numeric fields, cuts the
run-time by a factor of 4.  This contract is repeatedly applied to
every such object in the game.
In contrast, higher-order contracts, as in the object encodings used
in Zombie, delay contracts and avoid this repeated checking.

\section{Related work}
\label{sec:related}


In this section, we relate our work to four related
strands of research:
soft-typing, static contract verification, refinement types, and model
checking of recursion schemes.

\paragraph{Soft typing:}
Verifying the preconditions of primitive operations can be seen as a
weak form of contract verification and soft typing is a well studied
approach to this kind of
verification~\cite{dvanhorn:Cartwright1996Program}.  There are two
predominant approaches to soft-typing: one is based on a
generalization of Hindley-Milner type
inference~\cite{dvanhorn:Cartwright1991Soft,
  dvanhorn:Wright1997Practical, dvanhorn:Aiken1994Soft}, which views
an untyped program as being embedded in a typed one and attempts to
safely eliminate coercions~\cite{dvanhorn:Henglein1994Dynamic}.  The
other is founded on set-based abstract interpretation of programs
\cite{dvanhorn:Flanagan1996Catching,dvanhorn:Flanagan1999Componential}.
Both approaches have proved effective for statically checking
preconditions of primitive operations, but the
approach does not scale to checking pre- and post-conditions of
arbitrary contracts.
For example, Soft Scheme~\cite{dvanhorn:Cartwright1991Soft}  is not path-sensitive
and does not reason about arithmetic, thus it is unable to
verify many of the occurrence-typing or higher-order recursion scheme examples considered in the evaluation.


\paragraph{Contract verification:}
Following in the set-based analysis tradition of soft-typing, there
has been work extending set-based analysis to languages with
contracts~\cite{dvanhorn:Meunier2006Modular}.  This work shares the
overarching goal of this paper: to develop a static contract checking
approach for components written in untyped languages with contracts.
However the work fails to capture the control-flow-based type
reasoning essential to analyzing untyped programs and is unsound (as
discussed by~\citet{dvanhorn:TobinHochstadt2012Higherorder}).
Moreover, the set-based formulation is complex and 
difficult to extend to features considered here.

Our prior work~\cite{dvanhorn:TobinHochstadt2012Higherorder}, as
discussed in the introduction, also performs soft contract
verification, but with far less sophistication and success. 
  As our empirical results show, the contributions of this paper are required to
tackle the arithmetic relations, flow-sensitive reasoning, and
complex recursion found in our benchmarks.

An alternative approach has been applied to checking contracts in
Haskell and OCaml~\cite{dvanhorn:Xu2012Hybrid,dvanhorn:Xu2009Static},
which is to inline monitors into a program following a transformation
by \citet{dvanhorn:Findler2002Contracts} and then simplify the
program, either using the compiler, or a specialized symbolic engine
equipped with an SMT solver.  The approach would be applicable to
untyped languages except for the final step dubbed
\emph{logicization}, a type-based transformation of program
expressions into first-order logic (FOL).  A related approach used for
Haskell is to use a denotational semantics that can be mapped into
FOL, which is then model checked~\cite{dvanhorn:Vytiniotis2013HALO},
but this approach is highly dependent on the type structure of a
program.  Further, these approaches assume a different semantics for
contract checking that monitors recursive calls.  This allows the use
of contracts as inductive hypotheses in recursive calls.  In contrast,
our approach can naturally take advantage of this stricter semantics
of contract checking and type systems, but can also accommodate the
more common and flexible checking policy.
Additionally, our approach does not rely on type information, the lack of
which makes these approaches inapplicable to many of our benchmarks.

Contract verification in the setting of typed, first-order contracts
is much more mature.  A prominent example is the work on verifying C\#
contracts as part of the Code Contracts
project~\cite{dvanhorn:Fahndrich2011Static}.

\paragraph{Refinement type checking:}
Refinement types are an
alternative approach to statically verifying pre- and post-conditions
in a higher-order functional language.  There are several approaches
to checking type refinements; one is to restrict the computational
power of refinements so that checking is decidable at type-checking
time~\cite{dvanhorn:Freeman1991Refinement}; another is allow
unrestricted refinements as in contracts, but to use a solver to
attempt to discharge
refinements~\cite{dvanhorn:Knowles2010Hybrid,dvanhorn:Rondon2008Liquid,dvanhorn:Vazou2013Abstract}.
In the latter approach, when a refinement cannot be discharged, some
systems opt to reject the
program~\cite{dvanhorn:Rondon2008Liquid,dvanhorn:Vazou2013Abstract},
while others such as hybrid type-checking residualize a run-time check
to enforce the refinement~\cite{dvanhorn:Knowles2010Hybrid}, similar
to the way soft-typing residualizes primitive pre-condition checks.
The end result of our approach most closely resembles that of hybrid
checking, although the technique applies regardless of the type
discipline and approaches the problem using different tools.

DJS~\cite{Chugh:POPL, Chugh:OOPSLA} supports expressive refinement specification and
verification for stateful JavaScript programs, including sophisticated
dependent specifications which \SCV cannot verify.
However, most dependent properties require heavy annotations.
Moreover, \texttt{null} inhabits every object type.
Thus the approach cannot give the same guarantees about programs
such as \texttt{reverse} (\S\ref{sec:approxctxs})
without significantly more annotation burden.
Additionally, it relies on whole program annotation, type-checking, and analysis.

\paragraph{Model checking higher-order recursion schemes:}
Much of the recent work on model checking of higher-order programs
relies on the decidability of model checking trees generated by
higher-order recursion schemes
(HORS)~\cite{dvanhorn:Ong2006ModelChecking}.  A HORS
is essentially a program in the simply-typed $\lambda$-calculus
with recursion and finitely inhabited base types that generates
(potentially infinite) trees.  Program verification is accomplished by
compiling a program to a HORS in which the generated tree represents
program event
sequences~\cite{dvanhorn:Kobayashi2009Types,dvanhorn:Kobayashi2010Higherorder}.
This method is sound and complete for the simply typed
$\lambda$-calculus with recursion and finite base types, but the gap
between this language and realistic languages is
significant.  Subsequently, an untyped variant of HORS has been
developed~\cite{dvanhorn:Tsukada2010Untyped}, which has applications
to languages with more advanced type systems, but
despite the name it does not lead to a model checking procedure for
the untyped $\lambda$-calculus.  A subclass of untyped HORS is the
class of recursively typed recursion schemes, which has applications
to typed object-oriented
programs~\cite{dvanhorn:Kobayashi2013ModelChecking}.  In this setting,
model checking is undecidable, but relatively complete with a certain
recursive intersection type system (anything typable in this system
can be verified).  To cope with infinite data domains such as
integers, counter-example guided abstraction refinement (CEGAR)
techniques have been developed~\cite{dvanhorn:Kobayashi2011Predicate}.
The complexity of model checking even for the simply typed case is
$n$-EXPTIME hard (where $n$ is the rank of the recursion scheme), but
progress on decision procedures
\cite{dvanhorn:Kobayashi2009Type,dvanhorn:Kobayashi2009Modelchecking}
has lead to  verification engines that can verify a number of
``small but tricky higher-order functional programs in less than a
second.''

In comparison, the HORS approach can verify some specifications which
\SCV cannot, but in a simpler (typed) setting, whereas our
lightweight method applies to richer languages.  Our approach
handles untyped higher-order programs with sophisticated language
features and infinite data domains.  Higher-order program invariants
may be stated as behavioral contracts, while the HORS-based systems
only support assertions on first order data.  Our work is also able to
verify programs with unknown external functions, not just unknown
integer values, which is important for modular program verification,
and we are able to verify many of the small but tricky programs
considered in the HORS work.  
%


\section{Conclusions and perspective}
\label{sec:conclusion}

We have presented a lightweight method and prototype implememtation
for static contract checking using a non-standard reduction semantics
that is capable of verifying higher-order modular programs with
arbitrarily omitted components.
Our tool, \SCV, scales to realistic language features such as recursive
data structures and modular programs, and verifies programs written in
the idiomatic style of dynamic languages.
The analysis  proves the absence of run-time errors
without excessive reliance on programmer help.
With zero annotation, \SCV already helps programmers find 
unjustified usage of partial functions with high precision and could
even be modified to suggest inputs that break the program.
With explicit contracts, programmers can enforce rich specifications to
their programs and have those optimized away
without incurring the significant run-time overhead
entailed by dynamic enforcement.  
%

While in this paper, we have addressed the problem of soft contract
verification, the technical tools we have introduced apply beyond this
application.  For example, a run of \SCV can be seen as a modular
program analysis---it soundly predicts which functions are called at
any call site. Moreover it can be composed with whole-program analysis
techniques to derive modular
analyses~\cite{dvanhorn:VanHorn2010Abstracting}. A small modification
to \textsf{blur} to cause it to pick a small set of concrete values
would turn our system into a concolic execution
engine~\cite{dvanhorn:Larson2003High}.  Adding temporal
contracts~\cite{dvanhorn:Disney2011Temporal} to our system would
produce a model checker for higher-order languages. This breadth of
application follows directly from the semantics-based nature of our
approach.

\acks We thank Carl Friedrich Bolz, Jeffrey S.~Foster, Michael Hicks, J.~Ian
Johnson, Lindsey Kuper, Aseem Rastogi, and Matthew Wilson for comments.
We thank the anonymous reviewers of ICFP 2014 for their detailed
reviews, which helped to improve the presentation and technical
content of the paper.
This material is based on research sponsored by the NSF under award 1218390, the
NSA under the Science of Security program, and DARPA under the
programs Automated Program Analysis for Cybersecurity
(FA8750-12-2-0106).  The U.S.  Government is authorized to reproduce
and distribute reprints for Governmental purposes notwithstanding any
copyright notation thereon.

\balance
\bibliographystyle{abbrvnat}
{
\bibliography{dvh-bibliography,sth-bibliography,local}
}

\iftechreport
  \clearpage

\appendix
\section{Full Formalism}
This section presents full materials omitted from the paper.
Figure \ref{fig:semantics-full} shows the complete operational semantics of Symbolic \lang.
Figure \ref{fig:proof-system-full} shows the full implementation of the basic proof system
(without calling out to the SMT solver).
Any triple of $\striple\msto\mval\mcon$ not applicable to these rules
are implicitly related by $\ambig\msto\mval\mcon$.
Figure \ref{fig:delta-full} shows the implementation of basic operations.
Figure \ref{fig:approximation-full} defines approximations between expression-store pairs
as well as program states.
Figure \ref{fig:oplus-full} displays definitions for operator $\oplus$
which approximates a value with knowledge of a ``prior'' value.
Figure \ref{fig:sum-approx-full} defines approximation between a state $\spair\mexp\msto$
and a system $\striple\mktab\mmtab{S}$.
We abbreviate simple predicates of the form $\slam\mvar{(\sapp\moppred\mvar)}$
as $\moppred$.



\begin{figure*}[t]

\begin{mathpar}

\inferrule{\smod\mvar{\mpreval_c}\mpreval \in \mvmod}{\mvar^\mvar,\msto \stdstep \mpreval,\msto}

\inferrule{\smod\mvar{\mpreval_c}\mpreval \in \mvmod
\\ \mvar\neq\mlab}
{\mvar^\mlab,\msto
\stdstep
 \chk{\mpreval_c}{\mvar}{\mlab}{\mvar}{\mpreval},\msto}

\inferrule{ }{\sapp{(\slam\mvar\mexp)}{\mpreval},\msto
\stdstep
\subst\mpreval\mvar\mexp,\msto}

\inferrule{\mpreval\neq\opaque}
          {\mpreval,\msto
            \stdstep            
            \with\mpreval\emptyset,\msto}

\inferrule{\deltamap\msto\sprocp\mval\sfalse{\msto_t}}
{\sapp{\mval}\mval'^\mlab,\msto
\stdstep
\simpleblm\mlab\Lambda,{\msto_t}}

\inferrule{\mLab\notin\msto}
          {\opaque,\msto
            \stdstep
            \mLab,\sext\msto\mLab{\with\opaque\emptyset}}

\inferrule{\deltamap\msto\mop{\vec\mval}\mans{\msto_a}}
          {\sapp\mop{\vec\mval}, \msto
            \stdstep
            \mans,\msto_a }

\inferrule{(\mstoo,\mvalo) = \refine\msto\mval{\mval_c}}
          {\assume\mval{\mval_c},\msto \stdstep \mvalo,\mstoo}

\inferrule{\deltamap\msto\sfalsep\mval\sfalse{\msto_t}}
          {\sif\mval{\mexp_1}{\mexp_2},\msto \stdstep \mexp_1,\msto_t }

\inferrule{\deltamap\msto\sfalsep\mval\strue{\msto_f}}
          {\sif\mval{\mexp_1}{\mexp_2},\msto \stdstep \mexp_2,\msto_f}

\inferrule{\deltamap\msto\sdepp{\mval_c}\sfalse\mstoo\\
  \proves\mstoo\mval{\mval_c}}
          {\chk{\mval_c}\mlab\mlabo\mlaboo\mval, \msto
            \stdstep
            \mval, \mstoo}

\inferrule{\deltamap\msto\sdepp{\mval_c}\sfalse\mstoo \\
  \refutes\mstoo\mval{\mval_c}}
          {\chk{\mval_c}\mlab\mlabo\mlaboo\mval, \msto
            \stdstep
            \simpleblm\mlab\mlaboo, \mstoo}

\inferrule{\deltamap\msto\sdepp{\mval_c}\sfalse\mstoo\\
  \ambig\mstoo\mval{\mval_c}}
          {\chk{\mval_c}\mlab\mlabo\mlaboo\mval, \msto
            \stdstep
            \sif{{(\sapp{\mval_c}\mval)}^{\mlaboo}}{\!\!\assume\mval{\mval_c}}{\simpleblm\mlab\mlaboo}, \mstoo}

\inferrule{\deltamap\msto\sprocp\mval\strue{\msto_1}}
          {\chk{\slamc\mvar{\mval_c}\mcono}\mlab\mlabo\mlaboo\mval, \msto
            \stdstep
            \slam\mvar{\chk\mcono\mlab\mlabo\mlaboo{\sapp\mval{\chk{\mval_c}\mlabo\mlab\mlaboo\mvar}}}, {\msto_1}}

\inferrule{\deltamap\msto\sdepp\mLab\strue{\msto_1}\\
  \deltamap{\msto_1}\sprocp\mval\strue{\msto_2}\\
           {\msto_2}(\mLab) = \slamc\mvar{\mval_c}\mcono }
  {\chk\mLab\mlab\mlabo\mlaboo\mval, \msto
    \stdstep
    \slam\mvar{\chk\mcono\mlab\mlabo\mlaboo{\sapp\mval{\chk{\mval_c}\mlabo\mlab\mlaboo\mvar}}}, {\msto_2}}

\inferrule{\deltamap\msto\sdepp{\mval_c}\strue{\msto_1}\\
  \deltamap{\msto_1}\sprocp\mval\sfalse{\msto_2}}
          {\chk{\mval_c}\mlab\mlabo\mlaboo\mval, \msto
            \stdstep
            \simpleblm\mlab\mlaboo, {\msto_2}}

\inferrule{\deltamap\msto\sprocp\mLab\strue\mstoo}
          {\sapp\mLab\mval, \msto
            \stdstep
            \mLab_a,\sext\mstoo{\mLab_a}\opaque}
          
\inferrule[\mbox{\scshape Havoc}]{\deltamap\msto\sprocp\mLab\strue\mstoo}
          {\sapp\mLab\mval, \msto \stdstep
            \sapp{\syntax{havoc}}\mval, \mstoo}
\end{mathpar}
\caption{Operational semantics for Symbolic \lang}
\label{fig:semantics-full}
\end{figure*}

\begin{figure*}[]

\begin{mathpar}
\inferrule{ }{\proves\msto\mnum\sintp}

\inferrule{ }{\proves\msto\sfalse\sfalsep}

\inferrule{ }{\proves\msto\sempty\semptyp}

\inferrule{ }{\proves\msto{\sconsc{\mval_1}{\mval_2}}\sconsp}

\inferrule{ }{\proves\msto{\slam\mvar\mexp}\sprocp}

\inferrule{ }{\proves\msto{\sdep\mval\mvar\mcon}\sdepp}

\inferrule{\proves\msto\mval{\moppred'}\\
           \moppred' \neq \moppred}{\refutes\msto\mval\moppred}

\inferrule{\mval_c \in \vec\mval}{\proves\msto{\with\mpreval{\vec\mval}}{\mval_c}}

\inferrule{\proves\msto\mval{\slam\mvar\mexp}}{\refutes\msto\mval{\slam\mvar{(\sapp\neg\mexp)}}}

\inferrule{\refutes\msto\mval{\slam\mvar\mexp}}{\proves\msto\mval{\slam\mvar{(\sapp\neg\mexp)}}}

\inferrule{{\slam\mvar{(\sapp\neg\mexp)} \in \vec{\mval_c}}}
          {\refutes\msto{\with\mpreval{\mval_c}}{\slam\mvar\mexp}}

\inferrule{\proves\msto{\msto(\mLab)}{\mval_c}}{\proves\msto\mLab{\mval_c}}

\inferrule{\refutes\msto{\msto(\mLab)}{\mval_c}}{\refutes\msto\mLab{\mval_c}}

\end{mathpar}

\caption{Basic Proof System}
\label{fig:proof-system-full}
\end{figure*}

\begin{figure*}[]

\begin{mathpar}
\inferrule{\proves\msto\mval\moppred}{\deltamap\msto\moppred\mval\msto\strue}

\inferrule{\refutes\msto\mval\moppred}{\deltamap\msto\moppred\mval\msto\sfalse}

\inferrule{\ambig\msto\mval\moppred\\
           (\mstoo,\_) = \refine\msto\mval\moppred}
          {\deltamap\msto\moppred\mval\mstoo\strue}
          
\inferrule{\ambig\msto\mval\moppred\\
           (\mstoo,\_) = \refine\msto\mval{\neg\moppred}}
          {\deltamap\msto\moppred\mval\mstoo\sfalse}

\inferrule{ }{\deltamap\msto\splus{{\mnum_1},{\mnum_2}}{\mnum_1 \splus \mnum_2}\msto}

\inferrule{\deltamap\msto\sintp{\mval_1}{\msto_1}\sfalse}
          {\deltamap\msto\splus{{\mval_1},{\mval_2}}{\simpleblm{}\Lambda}{\msto_1}}
          
\inferrule{\deltamap\msto\sintp{\mval_1}{\msto_1}\strue\\
           \deltamap{\msto_1}\sintp{\mval_2}{\msto_2}\sfalse}
          {\deltamap\msto\splus{{\mval_1},{\mval_2}}{\simpleblm{}\Lambda}{\msto_2}}

\inferrule{\mval_1 \neq \mnum\ or\ \mval_2 \neq \mnum\\
           \deltamap\msto\sintp{\mval_1}{\msto_1}\strue\\
           \deltamap{\msto_1}\sintp{\mval_2}{\msto_2}\strue\\
           \mLab \notin \msto_2}
          {\deltamap\msto\splus{{\mval_1},{\mval_2}}\mLab{\sext{\msto_2}\mLab{\with\opaque\sintp}}}
          
\inferrule{ }{\deltamap\msto\sequalp{\mnum,\mnum}\strue\msto}

\inferrule{\mnum_1 \neq \mnum_2}{\deltamap\msto\sequalp{{\mnum_1},{\mnum_2}}\sfalse\msto}

\inferrule{\deltamap\msto\sintp{\mval_1}{\msto_1}\sfalse}
          {\deltamap\msto\sequalp{{\mval_1},{\mval_2}}{\simpleblm{}\Lambda}{\msto_1}}
          
\inferrule{\deltamap\msto\sintp{\mval_1}{\msto_1}\strue\\
           \deltamap{\msto_1}\sintp{\mval_2}{\msto_2}\sfalse}
          {\deltamap\msto\sequalp{{\mval_1},{\mval_2}}{\simpleblm{}\Lambda}{\msto_2}}

\inferrule{\deltamap\msto\sintp{\mval_1}{\msto_1}\strue\\
           \deltamap{\msto_1}\sintp{\mval_2}{\msto_2}\strue\\
           \mval_1\neq\mval_2}
          {\deltamap\msto\sequalp{{\mval_1},{\mval_2}}\sfalse{\msto_2}}
          
\inferrule{\deltamap\msto\sintp{\mval_1}{\msto_1}\strue\\
           \deltamap{\msto_1}\sintp{\mval_2}{\msto_2}\strue}
          {\deltamap\msto\sequalp{{\mval_1},{\mval_2}}\strue{\msto_2}}
          
\inferrule{ }{\deltamap\msto\sconsop{{\mval_1},{\mval_2}}{\sconsc{\mval_1}{\mval_2}}\msto}
          
\inferrule{ }{\deltamap\msto\scar{\sconsc{\mval_1}{\mval_2}}{\mval_1}\msto}

\inferrule{\deltamap\msto\sconsp\mLab\strue\mstoo\\
           \mstoo(\mLab) = \sconsc{\mval_1}{\mval_2}}
          {\deltamap\msto\scar\mLab{\mval_1}\msto}
          
\inferrule{\deltamap\msto\sconsp\mLab\sfalse\mstoo}
          {\deltamap\msto\scar\mval{\simpleblm{}\Lambda}\msto}
          
\inferrule{ }{\deltamap\msto\scdr{\sconsc{\mval_1}{\mval_2}}{\mval_2}\msto}

\inferrule{\deltamap\msto\sconsp\mLab\strue\mstoo\\
           \mstoo(\mLab) = \sconsc{\mval_1}{\mval_2}}
          {\deltamap\msto\scdr\mLab{\mval_2}\msto}
          
\inferrule{\deltamap\msto\sconsp\mLab\sfalse\mstoo}
          {\deltamap\msto\scdr\mval{\simpleblm{}\Lambda}\msto}
          
\end{mathpar}

\caption{Basic Operations}
\label{fig:delta-full}
\end{figure*}

\begin{figure*}[]
\begin{mathpar}
\inferrule{ }{\mnum \refines^F \with\opaque\sintp}

\inferrule{(\mval_1,\msto) \refines^F (\mvalo_1,\msto')\\
           (\mval_2,\msto) \refines^F (\mvalo_2,\msto')}
          {(\sconsc{\mval_1}{\mval_2},\msto) \refines^F (\sconsc{\mvalo_1}{\mvalo_2},\msto')}

\inferrule{(\with\mpreval{\vec\mval},\msto_1) \refines^F (\msto_2(\mLab),\msto_2)}
          {(\with\mpreval{\vec\mval},\msto_1) \refines^F (\mLab,\msto_2)}
          { F(\mLab) = \with\mpreval{\vec\mval}}
          
\inferrule{(\msto_1(\mLab_1),\msto_1) \refines^F (\msto_2(\mLab_2),\msto_2)}
          {(\mLab_1,\msto_1) \refines^F (\mLab_2,\msto_2)}
          { F(\mLab_2) = \mLab_1}
          
\inferrule{(\with{\mpreval_1}{\vec{\mval_1}},\msto_1)
            \refines^F
           (\with{\mpreval_2}{\vec{\mval_2}}),\msto_2)\\ 
           (\mval_c,\msto_1) \refines^F (\mval_d,\msto_2)}
          {(\with{\mpreval_1}{\vec{\mval_1} \cup \{\mval_c\}}, \msto_1)
           \refines^F
           (\with{\mpreval_2}{\vec{\mval_2} \cup \{\mval_d\}}, \msto_2)}
           
\inferrule{ }
          {(\mpreval,\msto_1) \refines^F (\opaque,\msto_2)}

\inferrule{(\with{\mpreval_1}{\vec{\mval_1}},\msto_1)
            \refines^F
           (\with{\mpreval_2}{\vec{\mval_2}},\msto_2)}
          {(\with{\mpreval_1}{\vec{\mval_1} \cup \mval}, \msto_1)
           \refines^F
           (\with{\mpreval_2}{\vec{\mval_2}}, \msto_2)}
            
\inferrule{\smod\mmodvar{\mpreval_c}\opaque \in \vec\mmod\\ or \\
           \mmodvar \in \{\dagger,\syntax{havoc}\}}
          {(\sblm\mmodvar\mmodvaro,\msto_1) \refines^F_{\vec\mmod} (\mexp,\msto_2)}
          
\inferrule{ }{(\mvar,\msto) \refines^F (\mvar,\mstoo)}

\inferrule{(\mexp,\msto) \refines^F (\mexpo,\mstoo)\\
           (\mexp_x,\msto) \refines^F (\mexpo_x,\mstoo)}
          {(\sapp{\mexp}{\mexp_x},\msto) \refines^F (\sapp\mexpo\mexpo_x,\mstoo)}
          
\inferrule{(\mexp_i,\msto) \refines^F (\mexpo_i,\mstoo)
           \mbox{ for each $\mexp_i$ in $\vec\mexp$, $\mexpo_i$ in $\vec\mexpo$}}
          {(\sapp\mop{\vec\mexp},\msto) \refines^F (\sapp\mop{\vec\mexpo},\mstoo)}
          
\inferrule{(\mexp_1,\msto) \refines^F (\mexp'_1,\msto')\\
 		  (\mexp_2,\msto) \refines^F (\mexp'_2,\msto')\\
		  (\mexp_3,\msto) \refines^F (\mexp'_3,\msto')}
          {(\sif{\mexp_1}{\mexp_2}{\mexp_3},\msto) \refines^F (\sif{\mexp'_1}{\mexp'_2}{\mexp'_3},\msto')}
          
\inferrule{(\mcon,\msto) \refines^F (\mcon',\msto')\\
           (\mcono,\msto) \refines^F (\mcono',\msto')}
	      {(\sdep\mcon\mvar\mcono,\msto) \refines^F (\sdep\mcon'\mvar\mcono',\msto')}
	      
\inferrule{(\mcon,\msto) \refines^F (\mcon',\msto')\\
           (\mexp,\msto) \refines^F (\mexp',\msto')}
          {(\achk\mcon\mlab\mlabo\mlaboo\mexp,\msto) \refines^F (\achk\mcon'\mlab\mlabo\mlaboo\mexp',\msto')}
          
\inferrule{ }{(\asblm\mlab\mlabo,\msto) \refines^F (\asblm\mlab\mlaboo,\mstoo)}

\end{mathpar}
\caption{Approximation}
\label{fig:approximation-full}
\end{figure*}

\begin{figure}
\[
\begin{array}{@{}r@{\;}c@{\;}l@{\ }l@{}}

\sapprox{\sconsc{\mval_0}{\mval_1}}{\sconsc{\mval_2}{\mval_3}}
&=&
\multicolumn{2}{l}{
\sconsc{\sapprox{\mval_0}{\mval_2}}{\ \sapprox{\mval_1}{\mval_3}}}\\[1mm]

\sapprox{\srec\mvar{\vec{\mval_0}}}{\srec\mvaro{\vec{\mval_1}}}
&=&
\multicolumn{2}{l}{
\srec\mvar{(\sapprox{\vec{\mval_0}}{\vec{\subst{\sref\mvar}{\sref\mvaro}{\mval_1}})}}}\\[1mm]

\sapprox{\srec\mvar{\vec{\mval_0}}}\mval
&=&
\multicolumn{2}{l}{
\srec\mvar{(\sapprox{\vec{\mval_0}}{\subst{\sref\mvar}{\srec\mvar{\vec{\mval_0}}}\mval)}}}\\[1mm]

\sapprox{\mval_0}{\mval_1}
&=&
\multicolumn{1}{l}{
\srec\mvar{\{\mval_0, \subst{\sref\mvar}{\mval_0}{\mval_1}\}}}
&{\mbox{if }\mval_0 \in_s \mval_1}\\[1mm]

\sapprox{\mnum_0}{\mnum_1}
&=&
\multicolumn{1}{l}{\with\opaque\sintp}
&{\mbox{if }\mnum_0 \neq \mnum_1}\\[1mm]

\sapprox{\mval_0}{\mval_1}
&=&
\multicolumn{1}{l}{\mval_1}
&{\mbox{otherwise}}
\end{array}
\]
\caption{Approximation}
\label{fig:oplus-full}
\end{figure}

\begin{figure}
\begin{mathpar}
\inferrule{\mstate \in (\mktab,\mmtab,S)}
          {\mstate \refines (\mktab,\mmtab,S)}

\inferrule{\mstate \refines \mstate'\\
           \mstate' \refines (\mktab,\mmtab,S)}
          {\mstate \refines (\mktab,\mmtab,S)}
          
\inferrule{\srt{\_}\mval{\_} \notin \mctx_0\\
		   (\mval_x,\msto) \refines (\mval_z,\mstoo)\\
		   (\mval_y,\msto) \refines (\mval_z,\mstoo)\\
		   \mktab(\mstoo,\mval,\mval_z) = (F,\mstoo,\mctxo_0,\mctxo_k)\\
		   (\mctx_0,\msto) \refines (\mctxo_0,\mstoo)\\
		   (\mctx_k,\msto) \refines (\mctxo_k,\mstoo)\\
		   (\mctx_0[\srt{\msto_1}\mval{\mval_y}],\msto) \refines (\mktab,\mmtab,S)}
          {(\mctx_0[\srt{\msto_0}\mval{\mval_x}\mctx_k[\srt{\msto_1}\mval{\mval_y}\mexp]],\msto)
           \refines (\mktab,\mmtab,S)}
          
\end{mathpar}
\caption{Approximation of Summarizing Semantics}
\label{fig:sum-approx-full}
\end{figure}

\section{Proofs}
This section presents proofs for theorems in the paper.
Lemmas \ref{lem:main} and \ref{lem:summ} prove theorems \ref{thm:main}
and \ref{thm:summ}, respectively.
Other lemmas support these main ones.


\begin{lemma}[Soundness of $\delta$]
\label{lem:delta}\ \\
If $(\vec{\mval_1},\msto_1) \refines (\vec{\mval_2},\msto_2)$
then $\delta(\msto_1,\mop,\vec{\mval_1}) \refines
\delta(\msto_2,\mop,\vec{\mval_2})$.
\end{lemma}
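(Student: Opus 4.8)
The plan is to establish the witnessed form of the statement that the main soundness proof actually consumes: assuming $(\vec{\mval_1},\msto_1) \refines^F (\vec{\mval_2},\msto_2)$, I will show that for every $(\mans_1,\mstoo_1) \in \delta(\msto_1,\mop,\vec{\mval_1})$ there is some $(\mans_2,\mstoo_2) \in \delta(\msto_2,\mop,\vec{\mval_2})$ and an extension $F' \supseteq F$ with $(\mans_1,\mstoo_1) \refines^{F'} (\mans_2,\mstoo_2)$; the lemma as stated then follows because $\refines$ on sets means exactly ``every element on the left is approximated by some element on the right'' and $e \refines e'$ abbreviates $\exists F.\, e \refines^F e'$. (Note that the ``concrete'' configuration here need only be more refined, not closed, so it too may behave non-deterministically on $\opaque$-derived values.) The proof is a case analysis on which defining rule of $\delta$ produced $(\mans_1,\mstoo_1)$ (Figure~\ref{fig:delta-full}), exhibiting for each a matching rule on the abstract side together with the extension $F'$.

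Two auxiliary lemmas carry the weight. The first is \emph{soundness of the proof system under approximation}: if $(\mval_1,\msto_1) \refines^F (\mval_2,\msto_2)$ and $(\mcon_1,\msto_1) \refines^F (\mcon_2,\msto_2)$, then $\proves{\msto_2}{\mval_2}{\mcon_2}$ implies $\proves{\msto_1}{\mval_1}{\mcon_1}$, and $\refutes{\msto_2}{\mval_2}{\mcon_2}$ implies $\refutes{\msto_1}{\mval_1}{\mcon_1}$ --- any verdict the abstract side reaches definitely, the concrete side reaches the same way. A corollary (using that $\vdash$ assigns exactly one of the three verdicts) is that the concrete side can only be ambiguous about a predicate when the abstract side is, which is what allows concrete non-determinism to be matched. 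This lemma is proved by induction on the derivation of the abstract judgment, and it is also where soundness of the SMT extension $(\vdash_S)$ must be discharged: since $\msto_1$ refines $\msto_2$, the translation $\tran{\msto_1}$ entails $\tran{\msto_2}$ up to the address renaming recorded in $F$, so an unsatisfiability certificate obtained over the abstract heap transfers to the concrete one. The second lemma is \emph{monotonicity of} \textsf{refine}: approximating inputs yield approximating outputs, with $F'$ extended to cover the fresh addresses \textsf{refine} allocates when it expands an $\opaque$ value known to be a pair or a dependent contract.

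Given these, the cases are routine. For a predicate application $\delta(\msto_1,\moppred,\mval_1)$: a $(\strue,\msto_1)$ result produced via $\proves{\msto_1}{\mval_1}{\moppred}$ (symmetrically a $(\sfalse,\msto_1)$ result via $\refutes{\msto_1}{\mval_1}{\moppred}$) is matched because the abstract side either decides $\moppred$ the same way or is ambiguous and emits \emph{both} results, among which the required one occurs; a pair of results produced because $\mval_1$ is ambiguous is matched against the abstract ambiguous pair, aligning the $\strue$- and $\sfalse$-branches and invoking the second lemma for the refined heaps. The partial and arithmetic rules ($\ssucc,\splus,\sconsop,\scar,\scdr,\sequalp$) combine these observations with their $\snump$/$\sconsp$ side conditions; the only delicate point is that a concrete operand may be an honest value (a number, a pair) while its abstract partner is only an $\opaque$ refined value or a heap address, so the abstract rule allocates (or dereferences) a fresh result address bound to $\opaque$, and I extend $F'$ to map it to the concrete result --- legitimate precisely because every pre-value approximates $\opaque$ and every number approximates $\with\opaque\sintp$. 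The error rules emit $\simpleblm{}{\Lambda}$ on both sides, and blame states approximate blame states with the same blamed party, the heaps being discarded.

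The step I expect to be the main obstacle is the bookkeeping of the witness $F'$: $\delta$ both reads addresses out of $\msto_1$ and $\msto_2$ and allocates new ones asymmetrically between the two runs, so producing a single well-defined partial map that simultaneously extends $F$, witnesses the result values, and witnesses the updated heaps takes the same care as in the proof of Theorem~\ref{thm:main} itself. A secondary, largely independent difficulty is the SMT part of the first auxiliary lemma --- showing that $\tran{\cdot}$ is monotone under $\refines^F$ (heap entailment modulo renaming, plus compatible normalization of literal versus address operands in the translated contracts) --- so that solver-backed proofs and refutations genuinely transfer from the abstract configuration to the concrete one.
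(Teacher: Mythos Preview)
Your proposal is correct and takes essentially the same approach as the paper: a case analysis on the defining clauses of $\delta$ together with the shape of the approximation derivation. The paper's own proof is a single sentence (``By inspection of $\delta$ and cases of $\mop$ and $(\vec{\mval_1},\msto_1) \refines (\vec{\mval_2},\msto_2)$''), so your auxiliary lemmas on the proof system and on \textsf{refine}, and your careful tracking of the witness $F'$, are a genuine elaboration of what the paper leaves implicit rather than a different route.
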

\begin{proof}
By inspection of $\delta$ and cases of $\mop$
and $(\vec{\mval_1},\msto_1) \refines (\vec{\mval_2},\msto_2)$.
\end{proof}

\begin{lemma}[Soundness of function application]
\label{lem:app}\ \\
If $(\mval_f,\msto_1) \refines (\mval_g,\msto_2)$,
$(\mval_x,\msto_1) \refines (\mval_y,\msto_2)$, and
$\spapp{\mval_f}{\mval_x},\msto_1 \stdstep \mstate$,
then 
$\spapp{\mval_g}{\mval_y},\msto_2 \stdstep \mstate'$
such that
$\mstate_1 \refines \mstate_2$.
\end{lemma}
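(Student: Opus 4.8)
The plan is to prove the lemma by case analysis on the reduction rule that justifies the concrete step $\spapp{\mval_f}{\mval_x},\msto_1 \stdstep \mstate$. Since the operator is a value, the only applicable rules are \emph{Apply-Function}, \emph{Apply-Non-Function}, \emph{Apply-Unknown}, and \emph{Havoc} (the halt-on-blame rule cannot fire on a bare redex). Fix the witness $F$ with $(\mval_f,\msto_1)\refines^F(\mval_g,\msto_2)$ and $(\mval_x,\msto_1)\refines^F(\mval_y,\msto_2)$. In every case I would exhibit the matching abstract rule applied to $\spapp{\mval_g}{\mval_y},\msto_2$ and then discharge the approximation obligation on the successors, re-using \emph{Soundness of $\delta$} (Lemma~\ref{lem:delta}) for each $\delta$-side-condition: from $(\mval_f,\msto_1)\refines^F(\mval_g,\msto_2)$ we get $\delta(\msto_1,\sprocp,\mval_f) \refines \delta(\msto_2,\sprocp,\mval_g)$, so whichever answer to $\sprocp$ the concrete rule selects is matched by some member of the abstract result set, and the abstract side can take the corresponding rule.

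For \emph{Apply-Function} the concrete operator is a $\lambda$-abstraction and $\mstate=(\subst{\mval_x}\mvar\mexp,\msto_1)$. I would first establish a substitution lemma for approximation: if $(\mexp,\msto_1)\refines^F(\mexpo,\msto_2)$ and $(\mval_x,\msto_1)\refines^F(\mval_y,\msto_2)$ then $(\subst{\mval_x}\mvar\mexp,\msto_1)\refines^F(\subst{\mval_y}\mvar\mexpo,\msto_2)$; this is a routine structural induction on the approximation derivation, with the interesting base cases being a variable hit (use the second hypothesis) and a pre-value approximated by $\opaque$. Given this, I would argue that when $\mval_f$ is a $\lambda$ pre-value the approximation rules force $\mval_g$ to reduce by \emph{Apply-Function} as well: either $\mval_g$ is a $\lambda$ pre-value of the same shape whose body approximates $\mexp$ (this covers both source-level $\lambda$s, which are shared verbatim between the more- and less-abstract programs, and the $\eta$-expansions produced by \emph{Monitor-Function-Contract}, whose abstract counterpart carries $\opaque$ exactly where the concrete one carries the monitored function), or $\mval_g$ is a heap address dereferencing to such a value. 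The abstract step then yields $(\subst{\mval_y}\mvar\mexpo,\msto_2)$ and the substitution lemma closes the case with the same $F$.

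The remaining cases are lighter. For \emph{Apply-Non-Function}, Lemma~\ref{lem:delta} supplies a $\sfalse$ answer to $\sprocp$ on $\mval_g$, the abstract side reduces to blame, and $(\sblm{}{},\_)\refines^F(\sblm{}{},\_)$ (or the opaque-module/\syntax{havoc} clause of the approximation relation) discharges the obligation. For \emph{Apply-Unknown} the abstract side allocates a fresh result address; I would extend the witness to $F'$ by mapping that fresh abstract address to the fresh concrete one (both hold $\opaque$ with empty refinements, hence approximate), checking that freshness keeps $F'$ a well-defined function and that the existing approximation data is untouched because only new addresses are added to either heap. For \emph{Havoc} both sides reduce to $\syntax{havoc}$ applied to the argument, and the conclusion follows structurally from $(\mval_x,\msto_1)\refines^F(\mval_y,\msto_2)$.

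The step I expect to be the main obstacle is the subcase of \emph{Apply-Function} where the concrete operator is a genuine $\lambda$ but the abstract operator is an \emph{opaque} value that was refined only by flat contracts, so \emph{Monitor-Function-Contract} never $\eta$-expanded it: then the abstract side can only take \emph{Apply-Unknown} or \emph{Havoc}, neither of which can directly mirror an arbitrary concrete function body. Handling this cleanly requires either a well-formedness invariant on reachable approximating configurations that rules the situation out (e.g.\ restricting which values $\opaque$ may approximate in the full definition of $\refines^F$, only partially shown in Figure~\ref{fig:approx}), or observing that such a function sits behind an omitted component and is therefore already being probed by the demonic \syntax{havoc} context, so the relevant concrete behaviour is subsumed by a \syntax{havoc}-blaming state via the catch-all blame rule. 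I would settle this by first inspecting the full approximation relation to see which of these mechanisms is in play, and lean on Lemma~\ref{lem:delta} and the substitution lemma for everything else.
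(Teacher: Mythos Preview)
Your proposal follows the same overall approach as the paper: case analysis on how the concrete application reduces, with Lemma~\ref{lem:delta} (soundness of $\delta$) discharging the $\sprocp$ side-conditions. The paper organizes the split slightly differently---on the shape of the concrete \emph{result} (blame vs.\ not) and then on the derivation of $(\mval_f,\msto_1)\refines(\mval_g,\msto_2)$, rather than on the rule name---but the cases line up with yours. The paper is terser: it does not state a separate substitution lemma for the both-$\lambda$ case, simply calling the successor states ``structurally similar.''

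For the case you single out as the main obstacle---a concrete $\lambda$-operator approximated by an abstract opaque address---the paper's resolution (its Case~2b) is simpler than either mechanism you speculate about. It just has the abstract side fire \emph{Apply-Unknown}, producing a fresh address $\mLab'$ with $\msto_2[\mLab'\mapsto\opaque]$, and takes this to approximate the concrete successor. No reachability invariant and no detour through \syntax{havoc} is invoked; the work is placed on $\opaque$ over-approximating whatever the concrete $\beta$-step produced. So your instinct that \emph{Apply-Unknown} is the matching abstract rule is correct, and you need not reach for heavier machinery. Your residual worry---that the concrete successor is an arbitrary substituted body, not a value, while the shown approximation rules only relate pre-values to $\opaque$---is legitimate, but the paper's own proof is no more explicit on this point than you are; it simply asserts the approximation holds.
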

\begin{proof}

By case analysis on $\mexp_1$
, where $(\mexp_1, \msto'_1) = \mstate_1$ and $(\mexp_1, \msto'_2) = \mstate_2$
\begin{itemize}
\item{Case 1} $\mexp_1 = \simpleblm\relax\relax$ because
              $\deltamap{\msto_1}\sprocp{\mval_{f1}}\sfalse{\mstoo_1}$

    By lemma \ref{lem:delta}, $\deltamap{\msto_2}\sprocp{\mval_{f2}}\sfalse{\mstoo_2}$,
    which means $\spapp{\mval_{f2}}{\mval_{x2}},\msto_2 \stdstep (\simpleblm\relax\relax,\mstoo_2)$.

\item{Case 2} $\mexp_1 \neq \simpleblm\relax\relax$ because
  $\deltamap{\msto_1}\sprocp{\mval_{f1}}\strue{\mstoo_1}$.
  
  Case analysis on the derivation of
  $(\mval_{f1},\msto_1) \refines
  (\mval_{f2},\msto_2)$
  \begin{itemize}
  \item{Case 2a}
    Both $\mval_{f1}$ and $\mval_{f2}$ are lambdas and step to
    structurally similar states.
  \item{Case 2b}
    $(\mval_{f1},\msto_1) \refines (\mLab, [\mLab \mapsto \opaque])$.
    Then $(\sapp{\mval_{f2}}{\mval_{x2}})$ steps to $(\mLab', [\mLab' \mapsto \opaque])$.
  \end{itemize}
\end{itemize}
\end{proof}

\begin{lemma}[Soundness of flat contract checking]
\label{lem:fc}\ \\
If  $(\mval_c,\msto_1) \refines (\mval_d,\msto_2)$
and $(\mval_1,\msto_1) \refines (\mval_2,\msto_2)$, then
\begin{itemize}
\item $\proves{\msto_1}{\mval_1}{\mval_c}$ implies
      $(\sapp{\mval_d}{\mval_2}, \msto_2) \multistdstep (\mval_t,\msto'_2)$
      such that $\deltamap{\msto'_2}{\sfalsep}{\mval_t}\sfalse{\msto''_2}$
      and $\msto_1 \refines \msto''_2$, and
\item $\refutes{\msto_1}{\mval_1}{\mval_c}$ implies
      $(\sapp{\mval_d}{\mval_2}, \msto_2) \multistdstep (\mval_t,\msto'_2)$
      such that $\deltamap{\msto'_2}{\sfalsep}{\mval_t}\strue{\msto''_2}$
      and $\msto_1 \refines \msto''_2$.
\end{itemize}
\end{lemma}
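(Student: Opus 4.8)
The plan is to establish both implications by induction on the derivation of $\proves{\msto_1}{\mval_1}{\mval_c}$ (respectively $\refutes{\msto_1}{\mval_1}{\mval_c}$) in the solver-extended proof system (Figure~\ref{fig:proof-system-full} and \S\ref{sec:integrating-smt}), running in parallel a case analysis on how the witnessing approximations $(\mval_c,\msto_1)\refines(\mval_d,\msto_2)$ and $(\mval_1,\msto_1)\refines(\mval_2,\msto_2)$ were built. Two structural facts keep this manageable. First, $\multistdstep$ and the goal-side $\delta$ queries (e.g.\ of $\sfalsep$) quantify existentially over paths, and the abstract semantics forks only in benign ways: wherever $\delta$ or an application rule branches it always offers the branch matching the concrete behaviour alongside the spurious error branches, so it suffices to follow the concrete-matching branch throughout. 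Second, the basic proof system never recursively executes a contract: every $\proves$/$\refutes$ derivation bottoms out in a primitive-predicate fact, a disjointness or negation step, membership in a refinement set, a heap lookup, or a solver call, which bounds the shape of the induction.

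First I would handle the case where $\mval_d$ resolves through $\msto_2$ to an opaque pre-value $\with\opaque{\vec\mval}$ (symmetrically for $\mval_2$). Since $\proves{\msto_1}{\mval_1}{\mval_c}$ is derivable, $\mval_c$ is itself an applicable contract value and therefore satisfies $\sprocp$; by the shape of $\refines$ this rules out $\vec\mval$ refuting $\sprocp$, so $\delta$ is proved or ambiguous on $\sprocp$ for $\mval_d$, \textsc{Apply-Unknown} fires, and $\sapp{\mval_d}{\mval_2}$ steps to a fresh $\mLab_a$ with $\mLab_a\mapsto\opaque$. Now $\sfalsep$ is ambiguous on $\mLab_a$, so $\delta$ offers both verdicts; I pick $\sfalse$ in the $\proves$ case and $\strue$ in the $\refutes$ case and refine $\mLab_a$ accordingly. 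The resulting heap still approximates $\msto_1$: extend the witness $F$ so that $\mLab_a$ maps to any concrete value consistent with its new refinement, which is permitted because $\sdom F$ may properly exceed $\sdom{\msto_2}$.

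Next, the cases where $\mval_c$ is a primitive predicate (written $\moppred$ for $\slam\mvar{(\sapp\moppred\mvar)}$ per the appendix abbreviation) and $\mval_d$ mirrors it, together with the data-shaped approximation rules such as $\mnum\refines\with\opaque\sintp$ where the concrete value is fully determined: $\sapp{\mval_d}{\mval_2}$ steps by \textsc{Apply-Function} to $\sapp\moppred{\mval_2}$, and the hypothesis — $\proves{\msto_1}{\mnum}{\sintp}$-style facts, or in the $\refutes$ branch a disjointness derivation from some $\proves{\msto_1}{\mval_1}{\moppred'}$ — gives $\delta(\msto_1,\moppred,\mval_1)\ni(\strue,\msto_1)$ (resp.\ $\ni(\sfalse,\msto_1)$). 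Soundness of $\delta$ (Lemma~\ref{lem:delta}) then produces a matching $(\mval_t,\msto'_2)\in\delta(\msto_2,\moppred,\mval_2)$ with $\msto_1\refines\msto'_2$, and since $\mval_t\in\{\strue,\sfalse\}$ the ensuing $\sfalsep$ query is settled by the obvious rules of Figure~\ref{fig:delta-full} without further heap change, so $\msto''_2=\msto'_2$. The address rules for $\mval_c$ or $\mval_1$ reduce to these by unfolding the lookup and invoking the corresponding address rules of the approximation and proof system together with the induction hypothesis.

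The principal obstacle is the case where $\proves{\msto_1}{\mval_1}{\mval_c}$ came from refinement-set membership ($\mval_c\in\vec{\mval_1}$) or from the solver, so that $\mval_c$ is an arbitrary first-class contract and the abstract side must genuinely \emph{run} $\sapp{\mval_d}{\mval_2}$; here the proof's own semantics, not just its syntax, is in play. The plan is: by structural approximation $(\sapp{\mval_c}{\mval_1},\msto_1)\refines(\sapp{\mval_d}{\mval_2},\msto_2)$; a state well-formedness invariant — maintained by \textsc{Assume}, which enters a contract into a refinement set only after that contract's predicate has already evaluated truthy — together with a monotonicity lemma (a successful check stays successful under any further heap refinement) shows $\sapp{\mval_c}{\mval_1},\msto_1\multistdstep(\mval_t,\msto'_1)$ with $\mval_t$ truthy; and then the main step-simulation result (Lemma~\ref{lem:main}, underlying Theorem~\ref{thm:main}) transports this to a truthy, heap-approximating reduction of $\sapp{\mval_d}{\mval_2}$. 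For the solver case one first unfolds the translatable contract into a conjunction of primitive comparisons and closes with the primitive-predicate argument of the previous paragraph. The delicate point is that Lemma~\ref{lem:fc} and Lemma~\ref{lem:main} are thereby mutually dependent; I expect the clean resolution is to present both inside a single well-founded induction (the contract-predicate sub-reduction being strictly shorter than the \textsc{Monitor-Proved}/\textsc{Monitor-Refuted} step it licenses), and I anticipate that isolating and propagating the right well-formedness invariant on refinement sets will be the fiddliest part of the whole argument.
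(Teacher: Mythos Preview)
Your proposal is substantially more detailed than the paper's own proof, which is a single sentence: ``By case analysis on the derivation of $(\mval_c,\msto_1) \refines (\mval_d,\msto_2)$ and inspection of provability relation.'' The paper organizes the argument primarily around the structure of the \emph{contract approximation}, whereas you lead with induction on the \emph{provability derivation} and run the approximation case analysis in parallel. Both decompositions visit the same terrain; yours is the more honest accounting of what actually has to be checked.

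In particular, you correctly isolate the hard case---provability via refinement-set membership ($\mval_c\in\vec{\mval_1}$) with $\mval_d$ a concrete $\lambda$---and observe that discharging it appears to require (i) a well-formedness invariant that contracts entering a refinement set via \textsc{Assume} already evaluate truthy on their carrier, and (ii) an appeal to the step-simulation of Lemma~\ref{lem:main}, which in turn invokes Lemma~\ref{lem:fc} at \textsc{Monitor-Proved}/\textsc{Monitor-Refuted}. The paper's one-line sketch does not surface this dependency; your proposal to fuse the two lemmas into a single well-founded induction (measured by reduction length of the monitored sub-expression) is a sound, if heavier, resolution. What the paper's organization buys is brevity under the tacit assumption that the invariant holds and simulation can be invoked without harm; what yours buys is that the mutual dependency is made explicit and a termination measure is named. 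Either way the opaque-$\mval_d$ and primitive-predicate cases go exactly as you describe, via \textsc{Apply-Unknown} and Lemma~\ref{lem:delta} respectively.
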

\begin{proof}
By case analysis on the derivation of $(\mval_c,\msto_1) \refines (\mval_d,\msto_2)$
and inspection of provability relation.
\end{proof}

\begin{lemma}[Heap substitution]
\label{lem:preservation}\ \\
If $(\mexp_1,\msto_1) \refines (\mexp_2,\msto_2)$
and $\mstoo_1 \refines \mstoo_2$
then $(\mexp_1,\mstoo_1) \refines (\mexp_2,\mstoo_2)$.
\end{lemma}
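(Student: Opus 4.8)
The plan is to prove the statement by induction on the derivation of $(\mexp_1,\msto_1) \refines^{F} (\mexp_2,\msto_2)$, for a witness $F$ obtained by unfolding $(\mexp_1,\msto_1) \refines (\mexp_2,\msto_2)$, and using the hypothesis $\mstoo_1 \refines \mstoo_2$ --- which, being the point-wise lifting of $\refines$ to heaps, comes with a witness of its own relating matched entries. The driving observation is that every rule of figure~\ref{fig:approximation-full} that builds an approximation of a compound expression simply threads the two heaps through to its premises unchanged; the heaps are inspected only in the rules whose right-hand side (or left-hand side) is an address. Hence replacing $\msto_1$ by $\mstoo_1$ and $\msto_2$ by $\mstoo_2$ cannot disturb any of the structural cases.

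First I would handle the structural cases. For each compound form --- variable references, applications $\sapp\mexp\mexp$, primitive applications $\sapp\mop{\vec\mexp}$, conditionals $\sif\mexp\mexp\mexp$, dependent contracts, and contract monitors $\achksimple\mcon\mexp$ --- the corresponding rule's premises are approximations of the immediate subexpressions under the very same heap pair, so the induction hypothesis applies componentwise and the same rule reassembles the conclusion under $\mstoo_1,\mstoo_2$. The blame case, $(\sblm\mmodvar\mmodvaro,\msto_1) \refines^{F}_{\vec\mmod} (\mexp,\msto_2)$ when $\mmodvar$ names an opaque module or is $\syntax{havoc}$, is immediate: its side condition never refers to the heaps.

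The substance of the argument is in the value cases, concretely the two rules for addresses. If $\mexp_2$ is an address $\mLab_2$ and $\mexp_1$ an address $\mLab_1$ with $F(\mLab_2)=\mLab_1$, rebuilding the conclusion requires $(\mstoo_1(\mLab_1),\mstoo_1) \refines (\mstoo_2(\mLab_2),\mstoo_2)$; if $\mexp_1 = \with\mpreval{\vec\mval}$ is a refined pre-value and $\mexp_2 = \mLab$ with $F(\mLab) = \with\mpreval{\vec\mval}$, we instead need $(\with\mpreval{\vec\mval},\mstoo_1) \refines (\mstoo_2(\mLab),\mstoo_2)$. In each case the required approximation is exactly what the hypothesis $\mstoo_1 \refines \mstoo_2$ supplies for the matched pair of addresses, so we close the case by appealing to the witness underlying that heap approximation.

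The one point needing care --- and the step I expect to be the real obstacle --- is reconciling the witnesses. To exhibit a single $F'$ with $(\mexp_1,\mstoo_1) \refines^{F'} (\mexp_2,\mstoo_2)$ one must merge the function witnessing the original expression approximation with the one witnessing $\mstoo_1 \refines \mstoo_2$, which is legitimate only if they agree on the addresses that actually occur in $\mexp_1$ and $\mexp_2$; this is where the lemma tacitly relies on the heap approximation fixing the same address correspondence as $F$. One must also check that the recursive appeals on heap contents terminate, which they do because the heaps are finite and following a matched address strictly shrinks the set of not-yet-visited addresses, so no circularity is introduced. Modulo this bookkeeping, the induction goes through routinely, and the lemma then feeds directly into the preservation argument of Lemma~\ref{lem:main}, where a reduction step updates the heap.
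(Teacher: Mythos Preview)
Your proposal is correct and takes essentially the same approach as the paper, which simply states ``By induction on $\mexp_2$.'' Your choice to induct on the derivation rather than on the syntax of $\mexp_2$ is in fact the more careful formulation, since the address rules recurse into $\msto_2(\mLab)$ rather than a syntactic subterm; the paper's one-line proof glosses over exactly the witness-reconciliation and termination issues you flag.
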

\begin{proof}
By induction on $\mexp_2$.
\end{proof}

\begin{lemma}[Soundness of reduction relation]
\label{lem:main}\ \\
If $\mprg \refines \mprgo$ and
$\mprg \stdstep \mprg'$ where $\mprg' \neq \sblm\relax\relax$,
then $\mprgo \multistdstep \mprgo'$
such that $\mprg' \refines \mprgo'$
for some $q'$.
\end{lemma}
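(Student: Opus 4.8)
The plan is to prove Lemma~\ref{lem:main} by a forward-simulation argument: decompose the concrete state into an evaluation context wrapped around a redex, run the corresponding (possibly multi-step) abstract reduction of the approximating redex using the auxiliary lemmas already established, and re-plug the context. Before the case analysis I would settle two structural facts. First, \emph{monotonicity of $\refines^{F}$ in $F$}: if $(\mexp,\msto_1)\refines^{F}(\mexpo,\msto_2)$ and $F\subseteq F'$ then $(\mexp,\msto_1)\refines^{F'}(\mexpo,\msto_2)$; this is an easy induction on the approximation derivation and lets us extend the witness whenever fresh addresses are allocated. Second, \emph{context decomposition}: if the concrete state is $(\mctx[r],\msto_1)$ with $r$ the unique redex (treating a bare pre-value $\mpreval$ awaiting \textsc{Refine-Concrete} as a redex), then any derivation of $(\mctx[r],\msto_1)\refines^{F}(\mexpo,\msto_2)$ forces $\mexpo=\mctx'[r']$ with $\mctx\refines^{F}\mctx'$ and $(r,\msto_1)\refines^{F}(r',\msto_2)$, $r'$ at the redex position of the abstract state; this holds because every non-value expression form in Figure~\ref{fig:approximation-full} is related only structurally, so the abstract state must have matching shape. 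I would also lift Lemma~\ref{lem:preservation} from expressions to evaluation contexts, so that after a step changes the heaps the context's approximation can be re-derived against the updated heaps.

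With decomposition in hand, the proof is case analysis on the rule applied to the concrete redex. \textsc{Apply-Primitive} follows from Lemma~\ref{lem:delta}; \textsc{Apply-Function}, \textsc{Apply-Non-Function}, \textsc{Apply-Unknown} and \textsc{Havoc} follow from Lemma~\ref{lem:app}, which is exactly this case in the empty context, selecting the abstract reduction that matches the concrete one. \textsc{If-True}/\textsc{If-False} select a branch via a $\delta$-test on $\sfalsep$, so Lemma~\ref{lem:delta} yields the same branch abstractly (we pick the agreeing member of the non-deterministic $\delta$-result) and the two branch expressions are related by the structural rules. The \textsc{Module} rules use the module-approximation component of $\mprg\refines_{\vec\mmod}\mprgo$ (each concrete module's exported value and contract approximate those of the matching abstract module): a self-reference exposes these related pre-values directly, and a cross-module reference produces the corresponding guarded reference on each side, with contract and body pairwise related. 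For the \textsc{Monitor}-$\star$ rules, the $\sdepp$ and $\sprocp$ tests are discharged by Lemma~\ref{lem:delta}; \textsc{Monitor-Proved}/\textsc{Monitor-Refuted} by Lemma~\ref{lem:fc}, which lets the abstract side actually evaluate the predicate application and take the matching conditional branch even though its proof system may only be ambiguous where the concrete one proves or refutes, followed by \textsc{Assume}; and the $\eta$-expansion cases relate structurally. \textsc{Refine-Concrete}/\textsc{Refine-Unknown}/\textsc{Assume} are handled by extending $F$ so each freshly allocated concrete address maps to the corresponding refined concrete value, together with a soundness-of-\textsf{refine} lemma proved by inspection in the style of Lemma~\ref{lem:delta}; the rule $(\mpreval,\msto_1)\refines^{F}(\opaque,\msto_2)$ absorbs the mismatch where the concrete side does \textsc{Refine-Concrete} while the abstract side does \textsc{Refine-Unknown}. \textsc{Halt-Blame} is excluded by the side condition $\mprg'\neq\sblm\relax\relax$; reaching a terminal blame state is handled at the level of Theorem~\ref{thm:main} using the blame rules of Figure~\ref{fig:approximation-full}. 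In every remaining case the conclusion $\mprg'\refines^{F'}\mprgo'$ is obtained by re-plugging via the context decomposition, using monotonicity to pass from $F$ to the extended $F'$ and the context-lifted heap-substitution lemma to account for the changed heaps.

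The main obstacle is the heap and address bookkeeping that runs through all of these cases rather than any single case. Fresh addresses are allocated in several places --- \textsc{Refine-Unknown}, \textsc{Apply-Unknown}, the $\ssucc$ and $\splus$ clauses of $\delta$, and the pair and dependent-contract clauses of \textsf{refine} --- and a single abstract step may allocate a different number of them than the corresponding concrete step and may, via \textsc{Assume}, also refine the contents of existing heap cells. One must therefore verify that each such heap refinement only shrinks the set of concretizations in a way still consistent with the witness $F$ (the concrete value a refined abstract address stands for remains a legal concretization) and, simultaneously, that the surrounding evaluation context's approximation survives the heap update. Packaging exactly these invariants into the monotonicity lemma, the context-lifted form of Lemma~\ref{lem:preservation}, and the soundness-of-\textsf{refine} lemma is what reduces the per-rule case analysis to short diagram chases; getting their statements right is the delicate part.
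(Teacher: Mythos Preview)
Your proposal is correct and follows essentially the same strategy as the paper: decompose into context and redex, then case on the reduction rule, discharging each case with Lemmas~\ref{lem:delta}, \ref{lem:app}, \ref{lem:fc}, and \ref{lem:preservation}. The organization differs slightly. You package the structural matching of contexts as an up-front \emph{context decomposition} lemma (``every non-value expression form is related only structurally, so the abstract state must have matching shape''), whereas the paper instead splits into two top-level cases: Case~1, where the approximation is structural all the way down to the redex (your decomposition), and Case~2, where at some inner point $\mctx_{1b}[\mexp_1]\refines\mctx_{2b}[\mexp_2]$ holds by a non-structural rule that is insensitive to $\mexp_1$, so the abstract side simply takes zero steps and the approximation is preserved. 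Given the approximation rules actually presented (Figure~\ref{fig:approximation-full}), the only such non-structural rule applicable to a non-value left-hand side is the blame rule, which is ruled out by the hypothesis $\mprg'\neq\sblm{}{}$; so your decomposition lemma is sound here and Case~2 is essentially defensive. Your explicit auxiliary facts---monotonicity of $\refines^{F}$ in $F$, the context-lifted form of Lemma~\ref{lem:preservation}, and a soundness-of-\textsf{refine} lemma---are exactly the bookkeeping the paper leaves implicit, and isolating them is a cleaner way to handle the fresh-address and heap-refinement plumbing you correctly identify as the main obstacle.
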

\begin{proof}
By case analysis on the derivation of $\mprg \stdstep \mprg'$.
\begin{itemize}

\item{Case 1}

    $\mprg = {\vec{\mmod_1}}\:{\mctx_1}[\mexp_1], \msto_1 \stdstep {\mctx_1}[\mexp'_1], \mstoo_1$
    
    $\mprgo = {\vec{\mmod_2}}\:{\mctx_2}[\mexp_2], \msto_2 \stdstep {\mctx_2}[\mexp'_2], \mstoo_2$
    
    where $(\mctx_1,\msto_1)\:{\refines_{\vec\mmod_2}}\:(\mctx_2,\msto_2)$
    and $(\mexp_1,\msto_1)\:{\refines_{\vec\mmod_2}}\:(\mexp_2,\msto_2)$.
    \begin{itemize}
    \item{Case 1a:} $\mexp_1 = \mmodvar^\mmodvaro$
    
            If $\mmodvar = \mmodvaro$, they both step to the simple value in module $\mmodvar$,
            so $\mexp'_1 \refines \mexp'_2$ because ${\vec{\mmod_1}} \refines {\vec{\mmod_2}}$
            
            If $\mmodvar \neq \mmodvaro$, $\mexp'_1$, $\mexp'_2$ is
            $\achksimple{\mpreval_c}{\mpreval_1}$ and $\mExp'_2$ is $\achksimple{\mpreval_c}{\mpreval_2}$
            where $\mpreval_1 \refines \mpreval_2$.
            
    \item{Case 1b:} $\mexp_1$ is a function application (lemma \ref{lem:app}).
    \item{Case 1c:} $\mexp_1$ is a primitive application (lemma \ref{lem:delta}).
    \item{Case 1d:} $\mexp_1 = \sif{\mval_1}{\mexp_{1a}}{\mexp_{1b}}$
                   and $\mexp_2 = \sif{\mval_2}{\mexp_{2a}}{\mexp_{2b}}$
                   
            If both $\mval_1$ and $\mval_2$ are non-labels, there are no change
            to the heaps, and the case is structural.
            Otherwise, $\mval_2$ is a label, the case follows by lemma \ref{lem:preservation}.
            
    \item{Case 1e:} $\mexp_1 = \achksimple{\mval_c}{\mval_1}$
                   and $\mexp_2 = \achksimple{\mval_d}{\mval_2}$
                   and $\mcon$ is flat.
            
            We do case analysis on (${\msto_1}\vdash{\mval_1}:{\mval_c}$).
            
            If $\ambig{\msto_1}{\mval_1}{\mval_c}$,
            the next states for both are structurally similar.
            
            If $\proves{\msto_1}{\mval_1}{\mval_c}$ or
            $\refutes{\msto_1}{\mval_1}{\mval_c}$,
            we rely on lemma \ref{lem:fc}.
            
            Otherwise,  $\mexp_1$ and $\mexp_2$ step to structurally similar states.
            
    \item{Case 1*:} All other cases involve stepping to structurally similar states
            or looking up environments, which are straightforward.
    \end{itemize}
\item{Case 2}

    $\mPrg = {\vec{\mmod_1}}\:{\mctx_{1a}}[{\mctx_{1b}}[\mexp_1]], \msto_1 \stdstep
                              {\mctx_{1a}}[{\mctx_{1b}}[\mexp'_1]], \mstoo_1$
    and $\mPrgo = {\vec{\mmod_2}}\:{\mctx_{2a}}[{\mctx_{2b}}[\mexp_2]], \msto_2$
    where $\mctx_{1a}$ is the largest context such that $\mctx_{1a}\:{\refines_{\vec{\mmod_2}}}\:\mctx_{2a}$
    but $\mctx_{1b}\:{\not\refines_{\vec{\mmod_2}}}\:\mctx_{2b}$.
    This means $\mctx_{1b}[\mexp_1]\:\refines\:\mctx_{2b}[\mexp_2]$ by one of the non-structural
    rules for $\refines$, which are insensitive to $\mexp_1$'s content.
    Hence $\mctx_{1b}[\mexp'_1]\:\refines\:\mctx_{2b}[\mexp_2]$

\end{itemize}
\end{proof}

\begin{lemma}[Soundness of havoc]
\label{lem:havoc}\ \\
If there exists a context $\mctx$ such that
\[
\mbox{\emph{$\mvmod\;\smod\mmodvar{\mpreval_c}{\mpreval}\vdash\mctx[\mmodvar],\msto \multistdstep
\simpleblm\mmodvar\mmodvaro$}}
\]
then
\[
\mbox{\emph{$\mvmod\;\smod\mmodvar{\mpreval_c}{\mpreval}\vdash(\sapp{\syntax{havoc}}{\mmodvar}),\msto
\multistdstep
\simpleblm\mmodvar\mmodvaro$.}}
\]
\end{lemma}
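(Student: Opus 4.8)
The plan is to prove a generalization of the lemma by induction on the length of the reduction, and then instantiate it. The generalization I would state is: for every value $\mval$ the opaque context can obtain from $\mmodvar$'s export --- equivalently, $\mval$ reachable from $\achksimple{\mpreval_c}{\mpreval}$ through applications, projections, and further contract monitoring --- and for every evaluation context $\mctx$ and heap $\msto$ with $\mvmod\,\smod\mmodvar{\mpreval_c}{\mpreval} \vdash \mctx[\mval],\msto \multistdstep \simpleblm{\mmodvar}{\mmodvaro}$, one has $\sapp{\syntax{havoc}}{\mval},\msto \multistdstep \simpleblm{\mmodvar}{\mmodvaroo}$ for some $\mmodvaroo$. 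The final state carries no heap, so the heap growth under $\syntax{havoc}$ is immaterial, and the freedom in the source label matches the slack already present in the approximation of blame states (Figure~\ref{fig:approximation-full}); a small additional observation, that $\syntax{havoc}$ replays precisely the failing check, recovers the literal $\mmodvaro$. The lemma itself follows because the reference $\mmodvar^\mlab$ --- with $\mlab\neq\mmodvar$, since $\mlab$ is the enclosing module's label in the hypothesis and $\syntax{havoc}$ in the goal --- steps by \textsc{Module-External-Reference} to $\achksimple{\mpreval_c}{\mpreval}$ on both sides, reducing the claim to the instance $\mval = \achksimple{\mpreval_c}{\mpreval}$.

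For the inductive step, since $\mval$ is a value I would examine the first reduction of $\mctx[\mval],\msto \multistdstep \simpleblm{\mmodvar}{\mmodvaro}$ that neither reduces strictly inside $\mctx$ away from the hole nor merely rewrites $\mval$ in place (\textsc{Refine-Concrete}, \textsc{Monitor-Function-Contract}, \textsc{Monitor-Proved}, a flat check expanding to a conditional, and so on). In-place rewrites of $\mval$ are mirrored verbatim on $\sapp{\syntax{havoc}}{\mval}$, since $\mval$ sits in the argument position $\sapp{\syntax{havoc}}{[\;]}$, and I then apply the induction hypothesis to the residual. The first remaining step must consume $\mval$ at the hole; by inspection of the reduction rules and the grammar of $\mctx$ it is one of: (i) $\mval$ is the test of a conditional, or the subject of a further check $\achksimple{\mval_d}{\mval}$ --- a conditional never blames, and $\achksimple{\mval_d}{\mval}$ blames its positive party, which is the module that installed that check and so never $\mmodvar$, hence the eventual blame of $\mmodvar$ arises downstream, from $\mval$ (or its refinement / $\eta$-expansion) re-entering an evaluation-context position, and I recurse there; (ii) \textsc{Apply-Non-Function} or a projection on a non-pair, producing $\simpleblm{\cdot}{\Lambda}$ for whichever module performed the operation --- it is $\mmodvar$ only if $\mmodvar$ performs it, in which case $\sapp{\syntax{havoc}}{\mval}$ exhibits the same blame directly through the $(\scar\ \mvar)$ or $(\scdr\ \mvar)$ branch of $\amb$, or through the failing step itself; (iii) $\mval = \sconsc{\mval_1}{\mval_2}$ and a projection extracts $\mval_i$, whereupon $\mval_i$ carries the obligation that led to the blame and $\sapp{\syntax{havoc}}{\mval} \multistdstep \sapp{\syntax{havoc}}{\mval_i}$ via the matching branch of $\amb$, so the induction hypothesis on $\mval_i$ closes the case.

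The crux is the application case: $\mctx = \mctx_1[\sapp{[\;]}{\mval_x}]$, and $\mval$ (after $\eta$-expanding any contract wrapper) is a $\lambda$ for which $\sapp{\mval}{\mval_x}$ reduces to a value $\mval'$ whose later use in $\mctx_1$ blames $\mmodvar$. The argument $\mval_x$ was manufactured by the opaque context from opaque material and from $\mmodvar$'s export alone, hence $\mval_x$ refines the fresh unknown $\syntax{havoc}$ installs in its place, and any domain contract that $\mval$'s wrapper imposes on $\mval_x$ is exactly what the \textsc{Assume}/\textsf{refine} steps fired inside $\sapp{\mval}{\opaque}$ record on that unknown. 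Thus $\sapp{\mval}{\mval_x} \refines^{F} \sapp{\mval}{\opaque}$ for a suitable $F$, and Lemma~\ref{lem:main} gives that every state reachable from $\sapp{\mval}{\mval_x}$ --- in particular $\mval'$, and any blame of $\mmodvar$ below it --- is approximated by a state reachable from $\sapp{\mval}{\opaque}$. Since $\sapp{\syntax{havoc}}{\mval}$ steps, via the $(\mvar\ \opaque)$ branch of $\amb$, to $\sapp{\syntax{havoc}}{\mval''}$ for the corresponding $\mval'' \abstracts \mval'$, the induction hypothesis on $\mval''$ supplies the blame of $\mmodvar$, which \textsc{Halt-Blame} then carries to the top. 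Throughout, the non-determinism of $\amb$ (and of $\delta$ and of refinement) is resolved so as to follow the context's path --- in particular, we never take the accessor branches of $\amb$ that would instead charge blame to $\syntax{havoc}$ and halt prematurely.

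I expect the application case to be the main obstacle. Two facts must be lined up carefully: (a) every value the opaque context can ever hand to $\mval$ is an approximation target of the unknown $\syntax{havoc}$ feeds in --- true because such a value is built only from opaque parts and from $\mmodvar$'s export, all of which refine $\opaque$; and (b) reduction and the resulting blame are preserved along the approximation, which is Lemma~\ref{lem:main}, but invoked here with $\mmodvar$ itself playing the role of the component under verification, so the accounting of which party is blamed (ignoring misuses charged to $\syntax{havoc}$, tracking those charged to $\mmodvar$) has to be threaded consistently through the induction. The remaining cases --- conditionals, non-functions, projections, and the bookkeeping of the evaluation context around the hole --- reduce to routine case analysis on the reduction rules and on the shape of $\mctx$.
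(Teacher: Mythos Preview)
Your approach is broadly sound but differs from the paper's in structure, and there is one ordering wrinkle in the application case that you should tighten.

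The paper's proof is short and structural rather than operational: it first asserts that any blaming context can be normalized to the canonical grammar
\[
\mctx ::= [\ ]\ \mid\ (\mctx\ \mpreval)\ \mid\ (\scar\ \mctx)\ \mid\ (\scdr\ \mctx),
\]
then replaces each concrete argument $\mpreval$ by $\opaque$, and finally proves by induction on this canonical context that $(\syntax{havoc}\ \mval)$ simulates $\mctx'[\mval]$. Your induction on reduction length, with case analysis on the first hole-consuming step, is essentially a proof of that unstated normalization claim folded together with the simulation argument; it is more explicit, and your use of Lemma~\ref{lem:main} to justify replacing $\mval_x$ by $\opaque$ makes precise what the paper's ``can be approximated by'' step leaves implicit. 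What the paper's decomposition buys is a cleaner induction principle (on context shape, not trace length), avoiding the bookkeeping of in-place rewrites and the conditional/check cases you enumerate.

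The wrinkle: in your application case you write that you invoke the induction hypothesis on $\mval''$, the abstract result of $(\mval\ \opaque)$. But your induction is on the length of the \emph{concrete} reduction $\mctx[\mval]\multistdstep\simpleblm\mmodvar\mmodvaro$, and there is no reason the reduction witnessing blame from $\mctx_1[\mval'']$ is shorter than the original. The clean fix is to apply the induction hypothesis to the concrete $\mval'$ in context $\mctx_1$ (this reduction \emph{is} strictly shorter), obtaining $(\syntax{havoc}\ \mval')\multistdstep\simpleblm\mmodvar{\cdot}$, and only then invoke Lemma~\ref{lem:main} on the approximation $(\syntax{havoc}\ \mval')\refines(\syntax{havoc}\ \mval'')$ to transport the blame. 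With that reordering your argument goes through.
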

\begin{proof}
Any context triggering a blame of $\mmodvar$ can be reduced to a canonical form of:
$$
\mctx ::= [\ ]\ |\ (\mctx\ \mpreval)\ |\ (\scar\ \mctx)\ |\ (\scdr\
\mctx)\text,
$$
which can be approximated by contexts of the form:
$$
\mctx' ::= [\ ]\ |\ (\mctx'\ \opaque)\ |\ (\scar\ \mctx')\ |\ (\scdr\
\mctx')\text.
$$
By induction on $\mctx'$\!, for any $\mval$,
if $\mctx'[\mval] \multistdstep \mans$,
then 
$(\syntax{havoc}\ \mval) \multistdstep (\syntax{havoc}\ \mans)$.
\end{proof}

\begin{lemma}[Return frame]\ \\
If $(\emptyset,\emptyset,\{(\mexp,\emptyset)\}) \multistdstep (\mktab,\mmtab,S)$
and $\mctx[\srt\msto{\mval_f}{\mval_x}\mexpo] \in S$,
then $\mctx[\spapp{\mval_f}{\mval_x}] \in S$.
\begin{proof} By induction on 
$(\emptyset,\emptyset,\{(\mexp,\emptyset)\}) \multistdstep (\mktab,\mmtab,S)$
and case analysis on the last reduction to $\mctx[\srt\msto{\mval_f}{\mval_x}\mexpo]$,
assuming programmers cannot write expressions of the form $\srt\msto\mval\mval\mexp$.
\end{proof}
\end{lemma}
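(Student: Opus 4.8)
The claim is a structural invariant of the summarizing reduction of Figure~\ref{fig:semantics-terminating}, so the plan is an induction on the length of the derivation $(\emptyset,\emptyset,\{(\mexp,\emptyset)\}) \multistdstep (\mktab,\mmtab,S)$, with a case split on the reduction rule that first put the state $\mctx[\srt\msto{\mval_f}{\mval_x}\mexpo]$ into $S$. The base case is vacuous: $S = \{(\mexp,\emptyset)\}$ and, since $\mexp$ is a source program, the standing assumption that programmers cannot write expressions of the form $\srt\msto\mval\mval\mexp$ means $\mexp$ has no \texttt{rt} subterm. For the inductive step, write the last step as $(\mktab',\mmtab',S') \stdstep (\mktab,\mmtab,S)$, that is, one step of the system reduction of Figure~\ref{fig:semantics-terminating} lifted point-wise over the state set. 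Reading those rules under their work-list interpretation, the explored-state component never loses a state: a step only records the successors of the state it processes, so $S' \subseteq S$. Hence if $\mctx[\srt\msto{\mval_f}{\mval_x}\mexpo]$ was already in $S'$, the inductive hypothesis gives $\mctx[\spapp{\mval_f}{\mval_x}] \in S' \subseteq S$; the only work is the case where this state is freshly produced by the last step, from some $\mstate_0 \in S'$.

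For that case I would sort the rules of Figure~\ref{fig:semantics-terminating} by their effect on \texttt{rt}-frames. The two \texttt{rt}-return rules only delete an \texttt{rt}-frame; the \texttt{blur}-elimination rule and every rule inherited from Figure~\ref{fig:semantics-full} rewrite strictly inside an evaluation context, so each \texttt{rt} subterm of the result descends from one in $\mstate_0$ and the required ``undone'' state for the result either equals the ``undone'' state for $\mstate_0$ or is one more such step away from it --- a routine sub-induction on $\mctx$, using that reduction lifts under evaluation contexts. That leaves exactly the two application rules that \emph{create} a fresh \texttt{rt}-frame. For the first of these, $\mstate_0$ is literally $\mctx[\spapp{\mval_f}{\mval_x}]$ with $\mval_f$ a $\lambda$-abstraction and the frame it produces is $\srt\msto{\mval_f}{\mval_x}{\subst{\mval_x}\mvar\mexp}$, so $\mstate_0 \in S' \subseteq S$ is already what we need.

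The only genuinely delicate case --- and the one I expect to be the main obstacle --- is the third application rule, where $\spapp{\mval_f}{\mval}$ for $\mval_f = \slamp\mvar\mexp$ steps to a frame carrying the \emph{approximated} argument $\mval_x = \sapprox{\mval_0}{\mval}$ while the triggering state $\mstate_0$ still carries the unapproximated $\mval$, so $\mctx[\spapp{\mval_f}{\mval_x}]$ need not be literally among the explored states. I would discharge this with an auxiliary claim tailored to exactly this situation: whenever the third rule fires, with $\mctx = \mctx_1[\srt{\msto_0}{\mval_f}{\mval_0}{\mctx_k}]$, the state $\mctx[\spapp{\mval_f}{\sapprox{\mval_0}{\mval}}]$ has already been explored --- or, if the lemma is most useful read up to $\refines$, some $\mctx[\spapp{\mval_f}{\mvalo}]$ with $\mvalo \refines \sapprox{\mval_0}{\mval}$ has been. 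Intuitively this holds because $\sapprox{\mval_0}{\mval}$ both approximates $\mval$ and subsumes the argument $\mval_0$ already recorded in the enclosing \texttt{rt}-frame, so the approximation invariant threaded through the soundness argument forces the approximated re-entry to be visited; proving it is the one point that actually consumes the definition of $\oplus$ from Figure~\ref{fig:oplus-full} together with the $\refines$-monotonicity of reduction (Lemma~\ref{lem:main}), while everything else is bookkeeping over the reduction rules. With this auxiliary claim the induction closes, and the resulting invariant is exactly what the proof of Lemma~\ref{thm:summ} needs to match each waiting context recorded via an \texttt{rt}-frame with a genuinely explored application.
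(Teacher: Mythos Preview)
Your overall plan---induction on the length of the derivation with a case split on the rule that first places the \texttt{rt}-state in $S$---is exactly the paper's (one-line) proof, and you go further than the paper by actually enumerating the cases and isolating the widening-application rule as the crux.  That diagnosis is right, but your treatment of it does not close.  Your primary auxiliary claim, that $\mctx[\spapp{\mval_f}{\sapprox{\mval_0}{\mval}}]$ has already been explored, is false in general: the value $\sapprox{\mval_0}{\mval}$ is freshly manufactured by $\oplus$ at this step, and nothing in the semantics forces an application to that exact widened argument ever to have been visited.  Your fallback---some $\mctx[\spapp{\mval_f}{\mvalo}]$ with $\mvalo \refines \sapprox{\mval_0}{\mval}$ has been explored---does hold (take $\mvalo = \mval$; the predecessor $\mctx[\spapp{\mval_f}{\mval}]$ is in $S$), but that is a different lemma, not the one stated.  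What you have uncovered is a genuine gap between the literal lemma statement and what the widening rule actually guarantees; the paper's sketch does not address it either.

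There is a second, smaller gap in your case accounting.  You write that the two \texttt{rt}-return rules ``only delete an \texttt{rt}-frame,'' but the sixth rule---the one that resumes a waiting context pulled from $\mktab$---does no such thing: it \emph{retains} the frame $\srt{\msto_0}{\mval_f}{\mval_0}{[\;]}$ and transplants it into a \emph{different} surrounding context $\mctx_1$ taken from the table, wrapped in a fresh \texttt{blur}.  And the fourth rule (the ``wait'' rule that records a context in $\mktab$ and plugs in a memoized answer) falls into none of your categories: it is neither an \texttt{rt}-return, nor a \texttt{blur}-elimination, nor inherited from the base semantics, nor one of the two \texttt{rt}-creating application rules; it too inserts a \texttt{blur} frame and reshapes the context around any \texttt{rt}-frames living inside $\mctx_k$.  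Handling rules four and six correctly requires an auxiliary invariant tying entries of $\mktab$ back to already-explored states, which you have not stated.
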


\begin{lemma}
\label{lem:rt_base}\ \\
If $\spair{\mctx[\srt{\msto_0}{\mval_f}{\mval_x}\mval]}{\msto}
   \refines \striple\mktab\mmtab{S}$
where $\srt{\_}{\mval_f}{\_}{[\;]} \notin \mctx$,
then there is $\mstate \in S$
such that $\spair{\mctx[\srt{\msto_0}{\mval_f}{\mval_x}\mval]}{\msto}
           \refines \mstate$
\begin{proof}
By case analysis on the last step deriving
\[
\spair{\mctx[\srt{\msto_0}{\mval_f}{\mval_x}\mval]}{\msto}
   \refines \striple\mktab\mmtab{S}
\]
\end{proof}
\end{lemma}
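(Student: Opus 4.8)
The plan is to induct on the derivation of $\spair{\mctx[\srt{\msto_0}{\mval_f}{\mval_x}\mval]}{\msto} \refines \striple\mktab\mmtab{S}$ and case-split on the last rule used, of which Figure~\ref{fig:sum-approx-full} gives three. The base rule, where the state is literally a member of $S$, is immediate: take $\mstate$ to be that state itself and invoke reflexivity of the state-to-state relation $\refines$. For the transitivity rule we have $\spair{\mctx[\srt{\msto_0}{\mval_f}{\mval_x}\mval]}{\msto} \refines \mstate'$ and $\mstate' \refines \striple\mktab\mmtab{S}$ by a strictly smaller derivation. Since the left state is not blame, the structural clauses of the approximation relation force $\mstate'$ to have the form $\spair{\mctx'[\srt{\msto_0'}{\mval_f'}{\mval_x'}{\mval''}]}{\msto'}$ with $\mctx \refines \mctx'$ and $\mval_f \refines \mval_f'$, and --- because the witnessing map on addresses is a genuine function, so it cannot merge two distinct concrete $\srt$-frames into one abstract frame --- the side condition $\srt{\_}{\mval_f'}{\_}{[\;]} \notin \mctx'$ is inherited from $\srt{\_}{\mval_f}{\_}{[\;]} \notin \mctx$. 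The induction hypothesis then gives $\mstate_0 \in S$ with $\mstate' \refines \mstate_0$, and transitivity of $\refines$ on states yields $\spair{\mctx[\srt{\msto_0}{\mval_f}{\mval_x}\mval]}{\msto} \refines \mstate_0$.

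The substantial case, and the one where the side condition earns its keep, is the waiting-context rule. Its conclusion is a state of the form $\mctx_0[\srt{\msto_0}{\mval}{\mval_x}{\mctx_k[\srt{\msto_1}{\mval}{\mval_y}{\mexp}]}]$, i.e.\ a state with two nested $\srt$-frames carrying the \emph{same} function value $\mval$, with $\srt{\_}{\mval}{\_}{[\;]} \notin \mctx_0$ picking out the outermost such frame. Unifying this pattern with $\mctx[\srt{\msto_0}{\mval_f}{\mval_x}\mval]$ and using that the value $\mval$ contains no $\srt$-frames, one deduces that the duplicated function cannot be $\mval_f$ (else $\mctx$ would contain an $\srt$-$\mval_f$ frame, against hypothesis) and that the exposed $\srt$-$\mval_f$ frame lies strictly inside the body, say $\mexp = \mctx_\ast[\srt{\msto_0}{\mval_f}{\mval_x}\mval]$. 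The rule's last premise is then precisely $\spair{\hat\mctx[\srt{\msto_0}{\mval_f}{\mval_x}\mval]}{\msto} \refines \striple\mktab\mmtab{S}$ where $\hat\mctx = \mctx_0[\srt{\msto_1}{\mval}{\mval_y}{\mctx_\ast}]$, which still contains no $\srt$-$\mval_f$ frame; the induction hypothesis applies, producing $\mstate_0 \in S$ with $\spair{\hat\mctx[\srt{\msto_0}{\mval_f}{\mval_x}\mval]}{\msto} \refines \mstate_0$.

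The hard part is crossing back from this $\mstate_0$ to a state in $S$ that approximates the \emph{larger} state $\mctx[\srt{\msto_0}{\mval_f}{\mval_x}\mval]$, which differs from $\hat\mctx[\cdots]$ only by the reinstated waiting-context wrapper $\srt{\msto_0}{\mval}{\mval_x}{\mctx_k[\,\cdot\,]}$. I plan to handle this by inverting $\spair{\hat\mctx[\cdots]}{\msto} \refines \mstate_0$ to expose the matching $\srt$-frame skeleton inside $\mstate_0$, and then using that the reachable set $S$ records the application from which any $\srt$-frame arose (the Return-frame lemma) together with closure of $S$ under the summarizing reduction to locate the enlarged state in $S$; the remaining obligation is then a structural approximation check, reusing the memo-table premises of the rule, which do not mention $\mexp$, to align the reinstated wrapper. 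Throughout, the proof leans on three auxiliary facts about the state-to-state relation $\refines$ --- reflexivity, transitivity, and structural inversion on non-blame states --- established by separate, routine inductions.
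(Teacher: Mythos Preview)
Your overall strategy---induction on the derivation with a case split on the last rule---is exactly what the paper's one-line proof sketch names. You have gone well beyond the paper in spelling out the three cases, but the third (waiting-context) case has a genuine gap.

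The ``crossing back'' step does not go through as described. After the induction hypothesis hands you $\mstate_0 \in S$ approximating the \emph{smaller} state $\hat\mctx[\srt{\msto_0}{\mval_f}{\mval_x}\mval]$, you still need a state in $S$ approximating the \emph{larger} state, which carries the extra wrapper $\srt{\_}{\mval'}{\_}{\mctx_k[\cdot]}$. But the whole purpose of summarization is that $S$ need \emph{not} contain states with that repeated wrapper---Rule~3 of the approximation relation and the table $\mktab$ exist precisely so that such states are represented \emph{without} being materialised in $S$. Neither of the tools you invoke produces larger-context states: forward closure of $S$ under the summarizing reduction shrinks or preserves context, and the resume rule for $\mktab$ (the last rule of Figure~\ref{fig:semantics-terminating}) only fires when the body of the outer $\srt{}{}{}{}$-frame is already a value, which in $\mstate_0$ it is not (it still contains the inner $\srt{}{}{}{}$-$\mval_f$ frame). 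So there is no evident way to locate the enlarged state in $S$ along this route.

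There is also a smaller issue in the transitivity case. Your inheritance of the side condition leans on ``the witnessing map on addresses is a genuine function,'' but $F$ goes from \emph{abstract} addresses to \emph{concrete} values. Nothing prevents two distinct concrete function values---your $\mval_f$ and some $g\neq\mval_f$ appearing in $\mctx$---from both refining to the same abstract $\mval_f'$ (for instance $\opaque$); in that case $\mctx'$ acquires an $\srt{}{}{}{}$-$\mval_f'$ frame even though $\mctx$ has none for $\mval_f$, and the induction hypothesis does not apply to $\mstate'$.

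The paper's proof is too terse to reveal how (or whether) it discharges these points, so you are not diverging from the paper's approach; you are simply hitting the places where the sketch needs more than a case split.
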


\begin{lemma}[Soundness of summarization]
\label{lem:summ}\ \\
If $\mstate_1 \refines (\mktab_1,\mmtab_1,S_1)$ and
$\mstate_1 \stdstep \mstate_2$,
then $(\mktab_1,\mmtab_1,S_1) \multistdstep (\mktab_2,\mmtab_2,S_2)$
such that $\mstate_2 \refines (\mktab_2,\mmtab_2,S_2)$.
\end{lemma}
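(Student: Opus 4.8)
The plan is to prove this by a simulation argument: every basic reduction step of $\mstate_1$ is matched by zero or more steps of the summarizing semantics, with the approximation relation $\refines$ of Figure~\ref{fig:sum-approx-full} re-established afterwards. Since a state-to-system approximation $\mstate \refines (\mktab,\mmtab,S)$ is built in layers --- membership $\mstate \in S$, transitivity through the state-level $\refines$, and the return-frame rule --- I would induct on the derivation of $\mstate_1 \refines (\mktab_1,\mmtab_1,S_1)$, with an inner case analysis on the step $\mstate_1 \stdstep \mstate_2$, leaning on Lemma~\ref{lem:main} for the state-level content that the summarizing extension leaves untouched.

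For the base case $\mstate_1 \in S_1$, the step splits along the reduction rules. Every rule other than application and $\srt$-return is mirrored one-for-one by the corresponding summarizing rule; because $\srt{\msto}{\mval}{\mvalo}{\mexp}$ and $\sblur{\mF}{\msto}{\mval}{\mexp}$ are semantically transparent wrappers and $\sapprox{\mval_0}{\mval}$ always approximates $\mval$ (Figure~\ref{fig:oplus-full}), the resulting explored state approximates $\mstate_2$, reusing the value/heap reasoning of Lemmas~\ref{lem:delta}, \ref{lem:app}, \ref{lem:fc}, \ref{lem:preservation}, \ref{lem:main}. When the basic step is a function return popping an $\srt$ frame, the summarizing semantics performs the return rule of Figure~\ref{fig:semantics-terminating}, which additionally records the answer in $\mmtab$ and feeds it to contexts waiting in $\mktab$; this only enlarges $S$, so $\mstate_2$ is still approximated. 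The one genuinely new subcase is a $\beta$-reduction of $\spapp{\slamp{\mvar}{\mexp}}{\mval}$ whose context already carries an $\srt$ frame for the same function applied to a refining argument: here the summarizing semantics does \emph{not} step into $\mexp$ but instead pauses, extends $\mktab$ with the waiting context, and resumes from a summarized result in $\mmtab$. To re-establish approximation I would invoke the return-frame rule of $\refines$ --- after the summarizing step the waiting context appears in $\mktab$, the context fragments refine themselves, and the ``stripped'' state in which the outer recursive-call frame is removed is a descendant of the originally explored non-repeated call, hence already approximated; this last point is discharged by Lemma~\ref{lem:rt_base} and the standalone return-frame lemma, possibly after running the summarizing semantics forward a few bookkeeping steps.

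For the inductive cases, transitivity through a state-level $\mstate_1 \refines \mstate_1'$ is handled by first applying Lemma~\ref{lem:main} to obtain $\mstate_1' \multistdstep \mstate_2'$ with $\mstate_2 \refines \mstate_2'$, then chaining the inductive hypothesis along this sequence to move the system, and finally composing $\mstate_2 \refines \mstate_2'$ with the resulting system approximation. The return-frame rule case mirrors Case~2 of the proof of Lemma~\ref{lem:main}: the basic step necessarily occurs strictly inside the inner $\srt$-framed subterm $\mexp$, while the outer recursive-call frame and the surrounding contexts are pinned by the rule's premises and oblivious to $\mexp$'s content, so the inductive hypothesis applies to the premise $(\mctx_0[\srt{\msto_1}{\mval}{\mval_y}{\mexp}],\msto) \refines (\mktab_1,\mmtab_1,S_1)$ and the conclusion is rebuilt by reapplying the rule.

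I expect the repeated-application subcase to be the main obstacle, precisely because it is what the summarization mechanism exists to finesse: when the basic execution keeps unfolding a recursive call, the summarizing semantics stops unfolding and substitutes a finite summary, so the two executions diverge syntactically and the only thing keeping them related is that the return-frame rule of $\refines$ is arranged so that an \emph{arbitrary} depth of nested pending recursive calls collapses onto one $\mktab$ entry together with the soundly accumulated answers in $\mmtab$. Making this precise requires (i) knowing that by the time the basic semantics performs the repeated $\beta$-reduction the summarizing run has already fully explored the corresponding non-repeated call --- where I would rely on the return-frame lemma and a reachability invariant on systems --- and (ii) checking that the summarized answer together with the $\oplus$-widening applied to the heap genuinely over-approximates whatever the basic body-execution would produce, which is the soundness content hiding in the widening and lookup rules of Figure~\ref{fig:semantics-terminating}.
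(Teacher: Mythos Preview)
Your overall architecture matches the paper's proof exactly: induction on the derivation of $\mstate_1 \refines (\mktab_1,\mmtab_1,S_1)$, with an inner case analysis on the step; the transitivity case is handled by Lemma~\ref{lem:main} followed by the induction hypothesis; and the repeated-application subcase in the membership case is closed by extending $\mktab$, using the Return-frame lemma to recover the original non-repeated call in $S$, stepping the system forward to put the substituted body back into $S$, and then invoking the return-frame approximation rule. That part is right.

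There is one genuine gap. In the return-frame case your claim that ``the basic step necessarily occurs strictly inside the inner $\srt$-framed subterm $\mexp$'' is false: when $\mexp$ is already a value $\mval_a$, the redex is the inner $\srt$ frame itself, and the step pops it, yielding $\mctx_0[\srt{\msto_0}{\mval}{\mval_x}{\mctx_k[\mval_a]}]$. This successor no longer has the nested double-$\srt$ shape, so you cannot simply reapply the return-frame rule to the same premises, and the induction hypothesis on the inner premise does not by itself give you what you need. The paper treats this as a separate sub-case: from the premise $(\mctx_0[\srt{\msto_1}{\mval}{\mval_y}{\mval_a}],\msto) \refines (\mktab_1,\mmtab_1,S_1)$ and the side condition that $\mctx_0$ contains no $\srt$ frame for $\mval$, Lemma~\ref{lem:rt_base} produces an actual witness state in $S_1$ of the same shape; then the summarizing return rule fires on that witness, records $\mval_a$ in $\mmtab$, and \emph{feeds it to the waiting context} recorded in $\mktab$ (which is exactly $\mctx_k$ by the rule's premises), putting an approximant of $\mctx_0[\srt{\cdot}{\mval}{\cdot}{\mctx_k[\mval_a]}]$ into the explored set. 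So Lemma~\ref{lem:rt_base} belongs here, in Case~3's value sub-case, not in Case~1 where you cite it; in Case~1 the lemma you actually need is the unnamed Return-frame lemma (if $\mctx[\srt{\msto}{\mval_f}{\mval_x}{\mexp'}] \in S$ then $\mctx[\spapp{\mval_f}{\mval_x}] \in S$), which you correctly allude to as ``the standalone return-frame lemma.''
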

\begin{proof}
By induction on the derivation of $\mstate_1 \refines (\mktab_1,\mmtab_1,S_1)$
and case analysis on the reduction $\mstate_1 \stdstep \mstate_2$.

Let $(\mexp_1,\msto_1) = \mstate_1$, $(\mexp_2,\msto_2) = \mstate_2$,
$\mexp_1$ = $\mctx[\mexpo_1]$ and $\mexp_2$ = $\mctx[\mexpo_2]$
where $\mexpo_1$ is a redex.
\begin{itemize}
\item{Case 1:} 
  $\mstate_1 \refines (\mktab_1,\mmtab_1,S_1)$ because
  $\mstate_1 \in S_1$
  
  Case analysis on $\mstate_1 \stdstep \mstate_2$:
  \begin{itemize}
             
    \item {Sub-case: $\mexpo_1 = \spapp{\slam\mvar\mexp}{\mval}$}

    We have $\mctx[\sapp{(\slam\mvar\mexp)}\mval] \stdstep
              \mctx[\srt{\msto_1}{\slam\mvar\mexp}\mval{\subst\mval\mvar\mexp}]$
              
     \begin{itemize}
     \item{Sub-sub-case 1:}
     $\striple{\mktab_1}{\mmtab_1}{\mstate_1}$ proceeds with the same reduction
     , which straightforwardly approximates $\mstate_2$ 
     
     \item{Sub-sub-case 2:}
      $\striple{\mktab_1}{\mmtab_1}{\mstate_1}$ proceeds with widened argument
      , which also straightforwardly approximates $\mstate_2$
      
      \item{Sub-sub-case 3:}
      $(\mktab_1,\mmtab_1,S_1) \stdstep (\mktabo_1,\mmtab_1,S'_1)$
      
      where
      $\mctx = \mctx_0[\srt{\msto_0}{\slam\mvar\mexp}{\mval_0}{\mctx_k}]$
      
      and $(\mval_0,\msto_0) \refines^F (\mvalo,\msto_1)$
      
      and $\mktabo_1 = \mktab_1 \sqcup [(\mstoo_1,\slam\mvar\mexp,\mvalo)
           \mapsto
          (F,\msto_0,\mctx_0,\mctx_k)]$
      
      and $S'_1 \supseteq S_1$
      
      But because $\mctx_0[\srt{\msto_0}{\slam\mvar\mexp}{\mval_0}\mctx_k] \in S'_1$,
      it follows that
      ${\mctx_0[\sapp{(\slam\mvar\mexp)}{\mval_0}]}$ is also in $S'_1$.
      
      Hence $(\mktabo_1,\mmtab_1,S'_1) \multistdstep (\mktabo_1,\mmtab_1,S''_1)$
      such that
      $\mctx_0[\srt{\msto_0}{\slam\mvar\mexp}{\mval_0}{\subst{\mval_0}\mvar\mexp}] \in S''_1$.
      
      Thus
      $\mctx_0[\srt{\msto_0}{\slam\mvar\mexp}{\mval_0}{\mctx_k[\subst{\mval_0}\mvar\mexp]}]
      \refines (\mktabo_1,\mmtab_1,S''_1)$,
      or $\mctxo[\subst{\mval_0}\mvar\mexp] \refines (\mktabo_1,\mmtab_1,S''_1)$
      
     \end{itemize}
     
   \item{Other sub-cases are straightforward}
    
  \end{itemize}
  
\item{Case 2:}
  $\mstate_1 \refines (\mktab_1,\mmtab_1,S_1)$ because:
  $\mstate_1 \refines \mstate'_1$ and
  $\mstate'_1 \refines (\mktab_1,\mmtab_1,S_1)$
  
  By lemma \ref{lem:main} (straightforwardly extended with $\syntax{rt}$ expressions),
  $\mstate_1 \stdstep \mstate_2$ implies
  $\mstate'_1 \multistdstep \mstate'_2)$ such that
  $\mstate_2 \refines \mstate'_2$
  
  By induction hypothesis,
  $(\mktab_1,\mmtab_1,S_1) \multistdstep (\mktab_2,\mmtab_2,S_2)$ such that
  $\mstate'_2 \refines (\mktab_2,\mmtab_2,S_2)$
  
  Therefore $\mstate_2 \refines (\mktab_2,\mmtab_2,S_2)$
  
\item{Case 3:} 
  $\mstate_1 \refines (\mktab_1,\mmtab_1,S_1)$ because:
  \begin{itemize}
    \item $\mctx[\mexpo_1]$ = $\mctx_0[\srt{\msto_x}\mval{\mval_x}{\mctx_k[\srt{\msto_y}\mval{\mval_y}\mexpo]}]$
    \item $\mctx_0[\srt{\msto_y}\mval{\mval_y}\mexpo] \refines (\mktab_1,\mmtab_1,S_1)$
    \item $(F,\mstoo_k,\mctxo_0,\mctxo_k) \in \mktab(\mstoo,\mval,\mval_z)$
    \item $(\mval_x,\msto_x) \refines (\mstoo,\mval_z) \abstracts (\mval_y,\msto_y)$
    \item $(\mctx_k,\msto_1) \refines (\mctxo_k,\mstoo_k)$
    \item $(\mctx_0,\msto_1) \refines (\mctxo_0,\mstoo_k)$
  \end{itemize}
  There are 2 subcases: either $\mexpo$ is a value or not.
  
  \begin{itemize}
  \item{Sub-case 1:} $\mexpo = \mval_a$
  
  This means $\mexp_1 = \mctx_0[\srt{\msto_x}\mval{\mval_x}{{\mctx_k}[\srt{\msto_y}\mval{\mval_y}{\mval_a}]}]$
  and $\mexp_2 = \mctx_0[\srt{\msto_x}\mval{\mval_x}{{\mctx_k}[\mval_a]}]$
  
  By lemma \ref{lem:rt_base},
  there exists $\mctxo'_0[\srt{\mstoo_y}\mval{\mvalo_y}{\mvalo_a}] \in S_1$
  such that $\mctxo_0[\srt{\msto_y}\mval{\mval_y}{\mval_a}]
             \refines \mctxo'_0[\srt{\mstoo_y}\mval{\mvalo_y}{\mvalo_a}]$.
  Then $(\mktab_1,\mmtab_1,S_1) \multistdstep (\mktabo_1,\mmtabo_1,S'_1)$
  such that $S'_1 \ni (\mctxo_0[\srt{\mstoo}\mval{\mval_z}{{\mctxo_k}[{\mval_a}]}])$.
    
  \item{Sub-case 2:} $\mexpo = \mctx_l[\mexpo_1]$ (where $\mexpo_1$ is the redex)
  
  We have
  \[
  \mctx_0[\srt{\msto_0}{\mval_f}{\mval_x}{\mctx_l[\mexpo_1]}]
  \stdstep \mctx_0[\srt{\msto_0}{\mval_f}{\mval_x}{\mctx_l[\mexpo_2]}]\text.
  \]
  
  By induction hypothesis,
  $(\mktab_1,\mmtab_1,S_1) \multistdstep (\mktab_2,\mmtab_2,S_2)$,
  such that
  $\mctx_0[\srt{\msto_0}{\mval_f}{\mval_x}{\mctx_l[\mexpo_2]}] \refines (\mktab_2,\mmtab_2,S_2)$.
  
  Because $\mktab_2 \supseteq \mktab_1$, 
  \[
  \mctx_0[\srt{\msto_0}{\mval_f}{\mval_x}{\mctxo_k\mctx_l[\mexpo_2]}]
  \refines (\mktab_2,\mmtab_2,S_2)\]
   follows.
  \end{itemize}

\end{itemize}
\end{proof}

\section{Detailed evaluation results}
This section shows detailed evaluation results for benchmarks
collected from different verification papers.
All are done on a Core i7 @ 2.70GHz laptop running Ubuntu 13.10 64bit.
Analysis times are averaged over 10 runs.
For benchmarks that time out, we display a range estimating
number of false positives.

{\footnotesize
\begin{center}
\begin{tabular}
{l@{\;\;}| @{\;}l @{\;}|@{\;} r@{\;} |@{\;} r@{\;} | @{\;}c@{\;} |@{\;} c @{\;}|@{\;} c @{\;}|@{\;} c@{\;} }
&   & & & \multicolumn{2}{@{\;}c@{\;}|@{\;}}{Simple} & \multicolumn{2}{@{\;}c@{\;}}{\SCV} \\ 
&    Program & Lines & Checks & Time (ms) & False/+ & Time (ms) & False/+\\ \hline
\parbox[t]{2mm}{\multirow{12}{*}{\rotatebox[origin=c]{90}{Occurrence Typing Examples}}}
&    ex-01 & 6 & 4 & 2.9 & 1 & 0.3 & 0\\
&    ex-02 & 6 & 8 & 6.8 & 0 & 0.5 & 0\\
&    ex-03 & 10 & 12 & 27.3 & 0 & 1.8 & 0\\
&    ex-04 & 11 & 12 & 20.6 & 1 & 0.7 & 0\\
&    ex-05 & 6 & 6 & 5.4 & 2 & 0.4 & 0\\
&    ex-06 & 8 & 11 & 10.0 & 0 & 0.6 & 0\\
&    ex-07 & 8 & 7 & 5.3 & 2 & 0.5 & 0\\
&    ex-08 & 6 & 11 & 12.9 & 1 & 0.7 & 0\\
&    ex-09 & 14 & 12 & 20.5 & 1 & 0.8 & 0\\
&    ex-10 & 6 & 8 & 3.7 & 1 & 0.3 & 0\\
&    ex-11 & 8 & 8 & 11.6 & 1 & 0.8 & 0\\
&    ex-12 & 5 & 11 & 6.1 & 2 & 0.4 & 0\\
&    ex-13 & 9 & 11 & 10.1 & 1 & 0 & 0\\
&    ex-14 & 12 & 20 & 12.6 & 2 & 1.1 & 0\\
    \hline
&    \textbf{Total} & 115 & 142 & 155.8 & 15 & 8.9 & 0
\end{tabular}
\end{center}


\begin{center}
\begin{tabular}
{l@{\;\;}| @{\;}l @{\;}|@{\;} r@{\;} |@{\;} r@{\;} | @{\;}c@{\;} |@{\;} c @{\;}|@{\;} c @{\;}|@{\;} c@{\;} }
&   & & & \multicolumn{2}{@{\;}c@{\;}|@{\;}}{Simple} & \multicolumn{2}{@{\;}c@{\;}}{\SCV} \\ 
&    Program & Lines & Checks & Time (ms) & False/+ & Time (ms) & False/+\\ \hline
\parbox[t]{2mm}{\multirow{12}{*}{\rotatebox[origin=c]{90}{Soft Typing Examples }}}
&    append & 8 & 15 & 12.8 & 0 & 2.0 & 0\\
&    cpstak & 23 & 15 & 195.9 & 2 & 157.9 & 0\\
&    flatten & 12 & 24 & 16.3 & 1 & 177.7 & 0\\
&    last-pair & 7 & 9 & 4.5 & 2 & 0.6 & 0\\
&    last & 17 & 21 & 12.9 & 1 & 2.5 & 0\\
&    length-acc & 10 & 14 & 74.2 & 1 & 3.3 & 0\\
&    length & 8 & 13 & 46.4 & 1 & 1.6 & 0\\
&    member & 8 & 15 & 5.6 & 0 & 2.3 & 0\\
&    rec-div2 & 9 & 17 & 6.4 & 0 & 1.9 & 0\\
&    subst* & 11 & 12 & 4.8 & 1 & 13.3 & 0\\
&    tak & 12 & 14 & 9.8 & 0 & 0.7 & 0\\
&    taut & 9 & 8 & 34.9 & 0 & 16.5 & 0\\
    \hline 
&    \textbf{Total} & 134 & 177 & 424.5 & 9 & 380.3 & 0
\end{tabular}
\end{center}

\footnotesize
\begin{center}
\begin{tabular} 
{l@{\;\;}| @{\;}l @{\;}|@{\;} r@{\;} |@{\;} r@{\;} | @{\;}c@{\;} |@{\;} c @{\;}|@{\;} c @{\;}|@{\;} c@{\;} }
&   & & & \multicolumn{2}{@{\;}c@{\;}|@{\;}}{Simple} & \multicolumn{2}{@{\;}c@{\;}}{\SCV} \\ 
&    Program & Lines & Checks & Time (ms) & False/+ & Time (ms) & False/+\\ \hline
\parbox[t]{2mm}{\multirow{25}{*}{\rotatebox[origin=c]{90}{Higher-order Recursion Scheme Examples }}}
&   intro1 & 10 & 11 & 12.7 & 1 & 0.6 & 0\\
&    intro2 & 10 & 11 & 12.8 & 1 & 12.7 & 0\\
&    intro3 & 10 & 12 & 13.9 & 1 & 13.0 & 0\\
&    sum & 7 & 12 & 20.7 & 2 & 362.2 & 0\\
&    mult & 9 & 20 & 12.2 & 2 & 55.9 & 0\\
&    max & 13 & 11 & 47.7 & 2 & 40.2 & 0\\
&    mc91 & 8 & 15 & 15.1 & 2 & 96.6 & 1\\
&    ack & 9 & 16 & 22.1 & 2 & 6.7 & 0\\
&    repeat & 7 & 11 & 34.4 & 2 & 2.7 & 0\\
&    fhnhn & 15 & 15 & 31.5 & 1 & 3.2 & 0\\
&    fold-div & 18 & 34 & $\infty$ & - & 1047.2 & 0\\
&    hrec & 9 & 13 & 148.6 & 3 & 26.8 & 0\\
&    neg & 18 & 15 & 30.2 & 1 & 37.7 & 0\\
&    l-zipmap & 17 & 31 & $\infty$ & - & 1403.7 & 2\\
&    hors & 19 & 17 & 47.6 & 2 & 73.5 & 0\\
&    r-lock & 17 & 19 & 10.1 & 2 & 0.8 & 0\\
&    r-file & 44 & 62 & 76.6 & 0 & 5.1 & 0\\
&    reverse & 11 & 28 & 24.9 & 1 & 12.7 & 0\\
&    isnil & 8 & 17 & 13.4 & 2 & 4.1 & 0\\
&    mem & 11 & 28 & 28.0 & 2 & 3.3 & 0\\
&    nth0 & 15 & 27 & 10.2 & 0 & 2.3 & 0\\
&    zip & 16 & 42 & 29.6 & 1 & 42.8 & 1\\
\hline 
&    \textbf{Total} & 301 & 467 & (642.3,$\infty$) & (29,94) & 3253.8 & 4
\end{tabular}
\end{center}

\begin{center}
\begin{tabular}
{l@{\;\;}| @{\;}l @{\;}|@{\;} r@{\;} |@{\;} r@{\;} | @{\;}c@{\;} |@{\;} c @{\;}|@{\;} c @{\;}|@{\;} c@{\;} }
&   & & & \multicolumn{2}{@{\;}c@{\;}|@{\;}}{Simple} & \multicolumn{2}{@{\;}c@{\;}}{\SCV} \\ 
&    Program & Lines & Checks & Time (ms) & False/+ & Time (ms) & False/+\\ \hline
\parbox[t]{2mm}{\multirow{7}{*}{\rotatebox[origin=c]{90}{Dep. Type
      Inf. }}}
&    boolflip & 10 & 17 & 1.5 & 0 & 1.7      & 0 \\
&    mult-all & 10 & 18 & 1.2  & 0 & 0.4      & 0 \\
&    mult-cps & 12 & 20 & $\infty$  & - & 146.9 & 0 \\
&    mult     & 10 & 17 & $\infty$ & - & 20.4     & 0 \\
&    sum-acm  & 10 & 15 & $\infty$ & - & 11.27    & 1 \\
&    sum-all  & 9  & 15 & 11.3 & 0 & 0.4      & 0 \\
&    sum      & 8  & 14 & $\infty$  & - & 11.9     & 0 \\
\hline
&    \textbf{Total} & 69 & 116 & (14,$\infty$) & (0,66) &
193.0 & 1
\end{tabular}
\end{center}

\footnotesize
\begin{center}
\begin{tabular}
{l@{\;\;}| @{\;}l @{\;}|@{\;} r@{\;} |@{\;} r@{\;} | @{\;}c@{\;} |@{\;} c @{\;}|@{\;} c @{\;}|@{\;} c@{\;} }
&   & & & \multicolumn{2}{@{\;}c@{\;}|@{\;}}{Simple} & \multicolumn{2}{@{\;}c@{\;}}{\SCV} \\ 
&    Program & Lines & Checks & Time (ms) & False/+ & Time (ms) & False/+\\ \hline
\parbox[t]{2mm}{\multirow{25}{*}{\rotatebox[origin=c]{90}{Symbolic Execution Examples }}}
&    all & 9 & 16 & 16.7 & 0 & 3.2 & 0\\
&    even-odd & 10 & 11 & 9.7 & 0 & 27.1 & 0\\
&    factorial-acc & 10 & 9 & 13.8 & 0 & 2.8 & 0\\
&    factorial & 7 & 8 & 5.9 & 0 & 74.4 & 0\\
&    fibonacci & 7 & 11 & 7.3 & 0 & 1632.1 & 0\\
&    filter-sat-all & 11 & 18 & 38.0 & 1 & 11.4 & 1\\
&    filter & 11 & 17 & 25.2 & 0 & 6.2 & 0\\
&    foldl1 & 9 & 17 & 11.8 & 0 & 2.4 & 0\\
&    foldl & 8 & 10 & 10.4 & 0 & 2.1 & 0\\
&    foldr1 & 9 & 11 & 10.6 & 0 & 4.1 & 0\\
&    foldr & 8 & 10 & 15.5 & 0 & 3.3 & 0\\
&    ho-opaque & 10 & 14 & 11.7 & 0 & 1.2 & 0\\
&    id-dependent & 8 & 3 & 2.6 & 1 & 0.2 & 0\\
&    insertion-sort & 14 & 30 & 65.8 & 1 & 24.6 & 0\\
&    map-foldr & 11 & 20 & 19.5 & 1 & 3.8 & 0\\
&    mappend & 11 & 31 & 29.6 & 0 & 2.2 & 0\\
&    map & 10 & 13 & 25.0 & 1 & 3.3 & 0\\
&    max-var & 13 & 11 & 76.0 & 0 & 12.7 & 0\\
&    recip-contract & 7 & 9 & 20.2 & 0 & 0.5 & 0\\
&    recip & 8 & 15 & 20.0 & 2 & 0.6 & 0\\
&    rsa & 14 & 5 & 35.0 & 0 & 2.3 & 0\\
&    sat-7 & 20 & 12 & $\infty$ & - & 2548.1 & 0\\
&    sum-filter & 11 & 18 & 49.9 & 0 & 4.1 & 0\\
\hline 
&    \textbf{Total} & 236 & 319 & (520,$\infty$) & (7,19) & 4372.7 & 1
\end{tabular}
\end{center}

\else
  \relax
\fi

\end{document}